\theoremstyle{definition}
\newtheorem{problem}{Problem}
\newtheorem{example}{Example}
\newtheorem{assumption}{Assumption}
\newtheorem{definition}{Definition}
\newtheorem{remark}{Remark}
\newtheorem{theorem}{Theorem}
\newtheorem{corollary}{Corollary}
\newtheorem{case study}{Case Study}
\title{\LARGE \bf
Feasibility-aware Learning of Robust Temporal Logic Controllers using BarrierNet}
\author{Wenliang Liu$^{1*}$, Shuo Liu$^{2*}$, Wei Xiao$^{3}$, and Calin A. Belta$^{4}$
\thanks{This work was partially supported by the National Science Foundation under grant IIS-2024606 at Boston University.}
\thanks{$^{1}$W. Liu is with Amazon Robotics, North Reading, MA USA.
        {\tt\small liuwll@amazon.com}}%
\thanks{$^{2}$S. Liu is with Department of Mechanical Engineering,
        Boston University, MA, USA.
        {\tt\small liushuo@bu.edu}}%
       
\thanks{$^{3}$W. Xiao is with Department of  Robotics Engineering, Worcester Polytechnic Institute, MA, USA.
        {\tt\small wxiao3@wpi.edu}}%
\thanks{$^{4}$C. Belta is with the Department of Electrical and Computer Engineering and the Department of Computer Science, University of Maryland, College Park, MD, USA. 
        {\tt\small cbelta@umd.edu}}%
\thanks{$^{*}$These author contributed equally. }%

}
\begin{document}

\maketitle
\thispagestyle{empty}
\pagestyle{empty}

\begin{abstract}
Control Barrier Functions (CBFs) have been used to enforce safety and task specifications expressed in Signal Temporal Logic (STL). However, existing CBF-STL approaches typically rely on fixed hyperparameters and per-step optimization, which can lead to overly conservative behavior, infeasibility near tight input limits, and difficulty satisfying long-horizon STL tasks. To address these limitations, we propose a feasibility-aware learning framework that constructs trainable, time-varying High Order Control Barrier Function (HOCBF) constraints and hyperparameters that guarantee satisfaction of a given STL specification. We introduce a unified robustness measure that jointly captures STL satisfaction, constraint feasibility, and control-bound compliance, and propose a neural network architecture to generate control inputs that maximize this robustness. The resulting controller guarantees STL satisfaction with strictly feasible HOCBF constraints and requires no manual tuning. Simulation results demonstrate that the proposed framework maintains high STL robustness under tight input bounds and significantly outperforms fixed-parameter and non-adaptive baselines in complex environments.

\end{abstract}

\section{INTRODUCTION}
Autonomous and robotic systems are often required to meet mission objectives that extend well beyond basic stability or invariance requirements. In practical scenarios such as surveillance, for instance, an unmanned aircraft may need to periodically collect information from a designated area, revisit a charging station for a minimum duration at regular intervals, and consistently steer clear of restricted airspace.
To formally capture such time- and event-dependent requirements, temporal logics—most notably Linear Temporal Logic (LTL) \cite{baier2008principles} and Signal Temporal Logic (STL) \cite{maler2004monitoring}—have become standard specification languages due to their rich expressivity and easily interpretable semantics.

In this work, we address the control of dynamical systems subject to requirements expressed in STL, which provides a rich language for specifying time-dependent requirements over real-valued signals. STL supports both Boolean semantics, which indicate whether a trajectory satisfies a formula, and quantitative (robustness) semantics \cite{donze2010robust}, which assign a real value reflecting the degree of satisfaction or violation. Prior research has shown that enforcing STL constraints can be formulated as an optimization problem in which the robustness metric appears either in the objective or in the constraints. These formulations have been tackled using Mixed Integer Linear Programming (MILP) \cite{raman2014model,sadraddini2015robust} as well as gradient-based optimization methods \cite{pant2017smooth,mehdipour2019arithmetic,gilpin2020smooth}. Although effective, these approaches require considerable computation, limiting their practicality for real-time or online control.

Recent works have also employed Control Barrier Functions (CBFs) to enforce STL specifications on system trajectories \cite{lindemann2018control,garg2019control,xiao2021high2,liu2024auxiliary}. CBFs have been extensively used in the controls community to guarantee safety by ensuring forward invariance of a prescribed safe set \cite{ames2016control,ames2014control}, and they can be combined with Control Lyapunov Functions (CLFs) to additionally encode stability objectives. In both cases, the CBF and CLF appear as inequality constraints on the control input, and the resulting controllers are typically synthesized through quadratic programs (QPs), which are computationally efficient and suitable for real-time implementation. Building on this foundation, the CBF framework has been extended to address higher relative-degree constraints \cite{xiao2021high,nguyen2016exponential}, mixed relative-degree constraints \cite{liu2025auxiliary} and adaptive control \cite{xiao2021adaptive,liu2023auxiliary}. However, since the CBFs or hyperparameters involved in CBFs are typically selected manually, poor selection can lead to excessive conservativeness, and the resulting QPs may become infeasible due to conflicts between the CBF constraints and the input bounds. Although online adjustment of these hyperparameters can improve feasibility, it adds computational overhead and compromises practicality for real‐time applications \cite{liu2025auxiliary,rabiee2025soft,parwana2025rate,breeden2023compositions}.

Learning-based methods can perform most of the computation offline during training, which enables real-time control during execution. In contrast to CBF–QP controllers, many STL-guided learning approaches do not formulate control synthesis as a constrained optimization problem either during training or at run time. For example, model-based Reinforcement Learning (RL) learns a policy by maximizing STL robustness \cite{yaghoubi2019worst,leung2022semi}, and Q-learning has also been applied to STL control synthesis in \cite{aksaray2016q,venkataraman2020tractable}. In addition, the authors of \cite{kalagarla2021model} formulate the STL control synthesis as a constrained Markov Decision Process (cMDP), which yields a computable lower bound on the STL satisfaction probability. More recently, the authors of \cite{meng2025tgpo} propose Temporal Grounded Policy Optimization (TGPO), which learns STL-satisfying policies by decomposing the temporal specification into stage-wise rewards and training a hierarchical RL policy accordingly. Since these policies are trained solely through robustness-based objectives and are directly executed after training, they do not exhibit numerical infeasibility caused by conflicting constraints. Nevertheless, these learning-based approaches cannot guarantee that the resulting policy will satisfy the specification. Violations may occur for two reasons. First, neural network training may converge to a suboptimal local solution under complex dynamics or specifications, leading to undesired behavior. Second, even if a policy satisfies the specification on training data, it may still fail for unseen initial conditions or environments.

To better ensure STL satisfaction, recent works integrate CBFs into learning-based controllers \cite{liu2021recurrent, liu2023safe}. In both these works, however, CBF constraints can only guarantee the ``globally" subformulas of an STL specification, and they are not enforced during policy optimization: \cite{liu2021recurrent} trains an RNN to imitate trajectories that satisfy STL specifications generated offline under CBF constraints, while \cite{liu2023safe} uses CBFs only to guarantee constraint-compliant data collection when learning a probabilistic model. In both cases, the learned policy itself does not encode STL correctness and must be corrected at execution via a runtime CBF–QP filter. This separation creates a mismatch between training and execution, making STL satisfaction dependent on the feasibility of the online QP, which may fail when CBF-based constraints conflict with system dynamics or input bounds. Conversely, directly enforcing CBF constraints during training often makes the optimization infeasible and prematurely halts learning. Thus, a key challenge is to enforce CBF constraints consistently during both training and execution while accounting for conflicts between constraint enforcement and system or input limits to preserve feasibility and maintain STL satisfaction.

In this paper, we guarantee STL satisfaction by enforcing CBF-based constraints during both training and execution. We provide a general, algorithmic procedure to generate these CBFs given an STL formula. We adopt a model-based RL framework in which trajectory rollouts are generated by repeatedly solving a differentiable QP (dQP) whose last layer augments a neural network controller with CBF constraints, incorporating specification correctness directly into policy learning rather than handling it externally. We train three neural networks to generate time-varying hyperparameters in the dQP cost and constraints: (i) InitNet, which initializes hyperparameters based on the initial condition; (ii) RefNet, which updates cost-related hyperparameters along the rollout; and (iii) an extended BarrierNet architecture, inspired by \cite{xiao2021barriernet}, which adapts the HOCBF constraint hyperparameters. These networks are optimized via a unified robustness metric that jointly captures STL satisfaction and QP feasibility, improving performance while accounting for potential infeasibility. After training, the networks provide hyperparameters online for execution, enabling specification-compliant control under new initial conditions. By jointly optimizing specification robustness and feasibility, the resulting controller remains feasible even under tight input bounds or complex STL tasks. Overall, the method avoids manual CBF tuning, reduces over-conservativeness via hyperparameter learning, and enables feasible STL enforcement during both training and execution without requiring pre-generated reference inputs as in \cite{xiao2021barriernet}.

This work substantially extends our prior conference paper \cite{liu2023learning}, which enforced CBF constraints in a dQP using BarrierNet to improve model-based RL for STL satisfaction. The previous formulation optimized only STL robustness and ignored dQP feasibility during training, often causing infeasible QP constraints under tight input limits or complex dynamics, and prematurely terminating learning. It also used CBF hyperparameters that remained constant within each rollout, limiting their ability to adapt over the time horizon and restricting robustness improvement. In this paper, we introduce a unified robustness measure that incorporates both STL satisfaction and QP feasibility, guiding learning toward solutions that remain feasible even in challenging scenarios. In addition, the CBF hyperparameters become time-varying along each rollout, enabling adaptive hyperparameter tuning and further reducing conservativeness while improving robustness. Simulation results further demonstrate that the proposed method achieves higher robustness under strict input limits and challenging STL specifications.

\section{Preliminaries}
\label{sec:prelim}
Consider a nonlinear control-affine system:
\begin{equation}
    \label{eq:system1}
    \mathbf{\dot x} = f(\mathbf x) + g(\mathbf x)\mathbf u,
\end{equation}
where $\mathbf x(t)\in\mathbb R^n$ denotes the state and $\mathbf u(t)\in\mathcal U\subseteq\mathbb R^q$ denotes the control input. The functions $f:\mathbb R^n\rightarrow \mathbb R^n$ and $g:\mathbb R^n\rightarrow \mathbb R^{n\times q}$ are assumed to be locally Lipschitz. The input set $\mathcal U$ is a compact hyper‐rectangle, i.e., $\mathbf u_{min}\leq\mathbf u\leq\mathbf u_{max}$,
where the inequality is interpreted element‐wise. Without loss of generality, we set the initial time to $0$. The initial condition $\mathbf x(0)=\mathbf x_0$ is sampled from a set $\mathcal X_0\subseteq \mathbb R^n$ according to a probability density function $P:\mathcal X_0\rightarrow \mathbb R$. We consider system trajectories over a compact time horizon $[0,T]$. A signal $\mathbf x:[0,T]\rightarrow \mathbb R^n$ is a solution of system \eqref{eq:system1} if it is absolutely continuous and satisfies the dynamics for all $t\in[0,T]$. A partial trajectory over $[0,t]$ is denoted as $\mathbf x_{0:t}$.

We employ a state‐feedback neural network controller,
\begin{equation}
\label{eq:nn-input}
    \mathbf{u}(t) = \pi(\mathbf{x}(t),\bm\theta),
\end{equation}
where $\bm\theta$ represents the network parameters. Memory can be incorporated as
\begin{equation}
\label{eq:nn-input2}
    \mathbf{u}(t)=\pi(\mathbf{x}_{0:t},\boldsymbol{\theta}),
\end{equation}
where the controller depends on state histories and can be implemented using a recurrent architecture such as a Recurrent Neural Network (RNN) \cite{goodfellow2016deep}.
\subsection{Signal Temporal Logic (STL)}
Signal Temporal Logic (STL) \cite{maler2004monitoring} is defined over real-valued signals $\mathbf{x}:\mathbb{R}_{\ge 0}\rightarrow \mathbb{R}^n$, such as trajectories generated by \eqref{eq:system1}. In this work, we focus on a fragment of STL characterized by the following syntax: 
\begin{subequations}
\label{eq:stl}
\begin{align}
\label{eq:stl1}
    \phi &\coloneqq \top\ |\ \mu\ |\ \neg \mu\ |\ \phi_1\land\phi_2,\\
\label{eq:stl2}
    \varphi &\coloneqq F_{[t_a,t_b]}\phi\ |\ G_{[t_a,t_b]}\phi\ |\  \varphi_1\land\varphi_2,
\end{align}
\end{subequations}
where $\phi$ and $\varphi$ denote STL formulae; $\phi_1$ and $\phi_2$ are formulae of class $\phi$, and $\varphi_1$ and $\varphi_2$ are formulae of class $\varphi$; $\top$ denotes the logical \emph{true}; $\mu$ represents an atomic predicate of the form $h(\mathbf{x})\ge 0$, where $h:\mathbb{R}^n\rightarrow\mathbb{R}$; $\neg$ and $\land$ denote Boolean negation and conjunction, respectively; $F$ and $G$ are the temporal operators \emph{eventually} and \emph{always}, respectively; and $[t_a,t_b]$ is a time interval such that $t_a < t_b$. 

We write $(\mathbf x, t)\models \varphi$ to indicate that the signal $\mathbf x$ satisfies the formula $\varphi$ at time $t$. A formal definition of the qualitative semantics of STL is provided in \cite{maler2004monitoring}. Informally, $F_{[t_a,t_b]}\phi$ holds if “$\phi$ becomes \textit{true} at some time in $[t_a,t_b]$”, whereas $G_{[t_a,t_b]}\phi$ holds if “$\phi$ is \textit{true} at all times in $[t_a,t_b]$”. Boolean operators are interpreted in the standard way.
Compared with the full STL syntax in \cite{maler2004monitoring}, the fragment in \eqref{eq:stl} excludes the temporal \textit{until} operator, nested temporal operators such as “\textit{eventually always}”, and Boolean disjunction $\vee $ in order to be able to convert the STL formula into CBFs. Despite this restriction, the fragment remains expressive enough to capture a broad class of practically relevant temporal requirements, including safety and reachability specifications with explicit timing constraints.

STL is also equipped with quantitative semantics, known as \emph{robustness}, which assigns a real value that measures the degree to which a signal satisfies a formula $\varphi$. Several robustness metrics have been proposed in the literature \cite{donze2010robust,mehdipour2019arithmetic,gilpin2020smooth,varnai2020robustness}. In this work, we adopt the smooth robustness introduced in \cite{liu2023robust}, which is differentiable almost everywhere and can be readily embedded in learning-based algorithms. This robustness measure is sound, in the sense that its value is positive if and only if the STL formula is satisfied. We denote the robustness of $\varphi$ at time $t$ with respect to a signal $\mathbf{x}$ by $\rho(\varphi,\mathbf{x},t)$.
We also define the time horizon of an STL formula $\varphi$, denoted by $hrz(\varphi)$, as the earliest future time required to determine its satisfaction and robustness. Throughout this paper, we evaluate trajectories of \eqref{eq:system1} only over the time horizon of the specification, i.e., we set $T = hrz(\varphi)$.

\subsection{Time-Varying High Order Control Barrier Function}
In this subsection, we introduce time-varying High Order Control Barrier Functions (HOCBFs) \cite{xiao2021high2}. We begin by recalling the definition of a class $\kappa$ function:
\begin{definition}[Class $\kappa$ function~\cite{Khalil:1173048}]
 A continuous function $\alpha:[0,a)\to[0,+\infty],a>0$ is called a class $\kappa$ function if it is strictly increasing and $\alpha(0)=0$.
\end{definition}

\begin{definition} The relative degree of a differentiable function $b:\mathbb{R}^{n}\to\mathbb{R}$ is the number of times we need to differentiate it along system dynamics \eqref{eq:system1} until any component of $\mathbf{u}$ explicitly shows in the corresponding derivative. 
\end{definition}

Consider a time-varying constraint $b(\mathbf{x},t)\ge 0$, where $b:\mathbb{R}^n\times[0,T]\rightarrow\mathbb{R}$ is a differentiable function with relative degree $m$. Let $\psi_0(\mathbf{x},t)\coloneqq b(\mathbf{x},t)$. We then construct a sequence of functions $\psi_i:\mathbb{R}^n\times[0,T]\rightarrow\mathbb{R}$, for $i=1,\ldots,m$, defined recursively as follows:
\begin{equation}
\label{eq: psi}
    \psi_i(\mathbf x,t) \coloneqq \dot \psi_{i-1}(\mathbf x,t) + \alpha_i\big(\psi_{i-1}(\mathbf x,t)\big),
\end{equation}
where $\alpha_{i}(\cdot ),\ i\in \{1,...,m\}$ denotes a $(m-i)^{th}$ order differentiable class $\kappa$ function.  A sequence of sets $\mathcal C_{i}(t)$ are then defined based on \eqref{eq: psi} as 
\begin{equation}
\label{eq: set}
    \mathcal C_i(t) = \{\mathbf x\in \mathbb R^n|\psi_i(\mathbf x,t)\geq 0\}.
\end{equation}
In practice, the class $\kappa$ functions $\alpha_i(\cdot)$ are parameterized by a set of scalar hyperparameters $p_i$ that regulate how aggressively the associated functions are enforced, e.g., through a simple scaling of the form $p_i \alpha_i(\cdot)$.
\begin{definition} A set $\mathcal C(t)\subset \mathbb{R}^{n}$ is forward invariant for system \eqref{eq:system1} if its solutions for some $\mathbf{u} \in \mathcal U$ starting from any $\mathbf{x}(0) \in \mathcal C(0)$ satisfy $\mathbf{x}(t) \in \mathcal C(t), \forall t \ge 0.$
\end{definition}

\begin{definition}[Time-varying High Order Control Barrier Function~\cite{xiao2021high}]
\label{def: HOCBF}
Let $\psi_1(\mathbf x,t),\ldots,\psi_m(\mathbf x,t)$ be defined by \eqref{eq: psi} and $\mathcal C_1(t),\ldots,C_m(t)$ be defined by \eqref{eq: set}. A differentiable function $b(\mathbf x,t)$ is a High Order Control Barrier Function (HOCBF) with relative degree $m$ with respect to  \eqref{eq:system1} if there exists differentiable class $\kappa$ functions $\alpha_i$, $i=1,\ldots,m$, such that
\begin{equation}
\label{eq: hocbf}
\begin{aligned}
    \sup_{\mathbf u\in\mathcal U}\big[&L_f^m b(\mathbf x,t) + L_gL_f^{m-1}b(\mathbf x,t)\mathbf u + \frac{\partial^mb(\mathbf x,t)}{\partial t^m}\\
    &+O(b(\mathbf x,t)) + \alpha_m(\psi_{m-1}(\mathbf x,t))\big] \geq 0,
\end{aligned}
\end{equation}
for all $(\mathbf x,t)\in \mathcal C_1(t)\cap\mathcal C_2(t)\cap\ldots\cap\mathcal C_m(t)\times[0,T]$. In \eqref{eq: hocbf}, $L_{f}^{m}$ denote $m$-th order Lie derivatives along $f$; $L_{g}$ is the Lie derivative along $g$. Here $O(\cdot)$ collects the remaining Lie derivatives along $f$ and partial derivatives with respect to $t$ of degree less than $m$, as well as mixed derivatives of total order less than or equal to $m$. Note that the mixed derivative terms vanish when $b(\mathbf x,t)$ is additively separable, i.e., when $\frac{\partial^2b(\mathbf x,t)}{\partial x\partial t}=0$, a condition satisfied by the CBFs considered in this paper. Here, We assume that $L_{g}L_{f}^{m-1}b(\mathbf x,t)\mathbf u\ne0$ on the boundary of set $\mathcal C_{1}(t)\cap...\cap \mathcal C_{m}(t)$. 
\end{definition}
\begin{theorem}[Safety Guarantee~\cite{xiao2021high}]
\label{thm:hocbf}
Given an HOCBF $b(\mathbf x,t)$ with a sequence of sets $\mathcal C_1(t),\ldots,C_m(t)$ as defined in \eqref{eq: set}, if $\mathbf x(0)\in \mathcal C_1(0)\cap\mathcal C_2(0)\cap\ldots\cap\mathcal C_m(0)$, then any Lipschitz continuous controller $\mathbf u(t)$ that satisfies \eqref{eq: hocbf} $\forall t\in[0,T]$ renders $\mathcal C_1(t)\cap\mathcal C_2(t)\cap\ldots\cap\mathcal C_m(t)$ forward invariant for system \eqref{eq:system1}.
\end{theorem}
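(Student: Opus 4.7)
The plan is to reduce forward invariance of $\mathcal C_1(t)\cap\cdots\cap \mathcal C_m(t)$ to showing that $\psi_i(\mathbf{x}(t),t)\ge 0$ for every $i\in\{1,\ldots,m\}$ and every $t\in[0,T]$; by \eqref{eq: set} this is exactly equivalent to the claimed set membership. I would then establish these $m$ nonnegativity statements by backward induction on $i$, starting from $i=m$, where the HOCBF inequality provides the statement directly, and descending to $i=1$ via a scalar comparison argument driven by the class-$\kappa$ property of each $\alpha_i$.

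For the base case, expanding $\dot\psi_{m-1}$ along \eqref{eq:system1} produces exactly the Lie-derivative terms $L_f^m b + L_g L_f^{m-1} b\,\mathbf{u} + \partial^m b/\partial t^m + O(b)$ that appear inside the supremum in \eqref{eq: hocbf}, so the HOCBF inequality is equivalent to $\sup_{\mathbf{u}\in\mathcal U}\psi_m(\mathbf{x},t)\ge 0$. Any Lipschitz $\mathbf{u}(t)\in\mathcal U$ satisfying \eqref{eq: hocbf} therefore yields $\psi_m(\mathbf{x}(t),t)\ge 0$ for all $t\in[0,T]$, which starts the induction.

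For the inductive step, fix $i\in\{2,\ldots,m\}$, suppose $\psi_i(\mathbf{x}(t),t)\ge 0$ on $[0,T]$, and note that $\psi_{i-1}(\mathbf{x}(0),0)\ge 0$ by the initial-set hypothesis. Rearranging \eqref{eq: psi} gives
\[
\dot\psi_{i-1}(\mathbf{x}(t),t) = \psi_i(\mathbf{x}(t),t) - \alpha_i(\psi_{i-1}(\mathbf{x}(t),t)) \ge -\alpha_i(\psi_{i-1}(\mathbf{x}(t),t)).
\]
I would then compare $\psi_{i-1}(\mathbf{x}(t),t)$ with the solution $z(t)$ of the scalar ODE $\dot z = -\alpha_i(z)$ with $z(0)=\psi_{i-1}(\mathbf{x}(0),0)\ge 0$. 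Since $\alpha_i$ is class-$\kappa$, $\alpha_i(0)=0$, so $\{z\ge 0\}$ is positively invariant for the comparison ODE, and the standard comparison lemma yields $\psi_{i-1}(\mathbf{x}(t),t)\ge z(t)\ge 0$ on $[0,T]$. Iterating from $i=m$ down to $i=2$ closes the induction and delivers $\mathbf{x}(t)\in\mathcal C_1(t)\cap\cdots\cap\mathcal C_m(t)$.

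The main technical delicacy is the comparison argument when $\psi_{i-1}(\mathbf{x}(0),0)=0$, where one must rule out any downward dip through zero. Provided each $\alpha_i$ is Lipschitz (or at least Osgood) near the origin, the unique solution of the comparison ODE starting at $z(0)=0$ is $z\equiv 0$, and the comparison lemma pins $\psi_{i-1}\ge 0$; equivalently, at any candidate first zero crossing $t^*$ one has $\dot\psi_{i-1}(t^*)\ge -\alpha_i(0)=0$, which rules out a strict downward crossing. The remaining steps — identifying \eqref{eq: hocbf} with $\psi_m\ge 0$ and reassembling set membership from the $\psi_i\ge 0$ inequalities — are routine.
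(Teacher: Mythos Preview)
Your argument is correct and is essentially the standard proof from the cited reference \cite{xiao2021high}: start from $\psi_m\ge 0$ (which is exactly the HOCBF inequality \eqref{eq: hocbf} evaluated along the chosen controller) and descend to each $\psi_{i-1}\ge 0$ via the comparison lemma applied to $\dot\psi_{i-1}\ge -\alpha_i(\psi_{i-1})$, using $\alpha_i(0)=0$ to trap the comparison solution at or above zero. Note, however, that the present paper does not supply its own proof of this theorem --- it is quoted as an established result from \cite{xiao2021high} --- so there is no in-paper argument to compare against; your write-up simply reconstructs the original proof.
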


\section{Problem Formulation and Approach}
\label{sec:prob-form}
Let $J(\mathbf{u})$ denote a cost function defined over control signals 
$\mathbf{u}:[0,T]\rightarrow \mathcal{U}$. We consider the following problem:
\begin{problem}
\label{pb:1}
Given a system with dynamics \eqref{eq:system1}, an STL specification $\varphi$ in \eqref{eq:stl}, and an initial state $\mathbf{x}_0$ sampled from the distribution $P:\mathcal X_0\rightarrow\mathbb{R}$, the goal is to find an optimal control input $\mathbf{u}^*(t)$ that maximizes STL robustness $\rho(\varphi,\mathbf x, 0)$ while minimizing the cost $J(\mathbf{u})$, subject to satisfying $\varphi$:
\begin{subequations}
    \label{eq:goal}
    \begin{align}
       \mathbf u^*(t) = &\arg\max_{\mathbf u(t)} \rho(\varphi,\mathbf x, 0) - J(\mathbf u) \\
       \text{s.t.}\quad & \dot{\mathbf x} = f(\mathbf x) + g(\mathbf x)\mathbf u(t), \\
       & \mathbf u_{min} \leq \mathbf u(t) \leq \mathbf u_{max},\\
       & (\mathbf x,0)\models \varphi\label{subeq:stl fml}.
    \end{align}
\end{subequations}
\end{problem}

STL formulas provide a unified and expressive framework
to specify both safety requirements (e.g., obstacle avoidance) and goal-reaching
tasks (e.g., reach within a specified time interval)
over system trajectories. Existing works~\cite{nguyen2016exponential, xiao2021high} employ HOCBFs~\eqref{eq: hocbf} to address safety specifications in \eqref{subeq:stl fml} with high relative degree by embedding them as constraints in quadratic optimization-based controllers. In these approaches, HOCBF constraints are enforced as strict hard constraints to guarantee safety, which implicitly requires the QP to remain feasible at every sampling instant. However, feasibility is not guaranteed and is often violated when the control bounds $\mathcal U$ are tight or when the HOCBF hyperparameters (e.g., the coefficients in the class $\kappa$ functions) are manually designed and fixed. In particular, based on \eqref{eq: psi}, fixed hyperparameters force the constraint $\psi_{i}(\mathbf x,t)\ge 0$ to impose a constant decay rate on function $\psi_{i-1}(\mathbf x,t)$, regardless of how close the system actually is to the boundary. This leads to overly aggressive constraint satisfaction in non-critical situations, resulting in unnecessarily conservative control actions and reduced performance. In \cite{xiao2021barriernet}, BarrierNet was introduced as a differentiable QP layer with HOCBF constraints appended to the output of a neural network controller to guarantee safety. In this framework, class $\kappa$ hyperparameters are learned during training, which substantially reduces conservativeness compared with fixed designs. However, the approach relies on supervised learning, where a reference control input that already satisfies safety and input bounds must be provided during training. Such a reference input is often difficult to obtain, and becomes especially challenging when the task is governed by complex STL specifications. Therefore, there is a strong need for methods that integrate BarrierNet with STL satisfaction while automatically generating reference inputs, rather than requiring them to be specified a priori.

\textbf{Approach:} In this paper, we train a neural network controller, with or without memory as in \eqref{eq:nn-input2} and \eqref{eq:nn-input}, to solve the optimization Problem~\ref{pb:1}. We address STL satisfaction by enforcing trainable HOCBF‐based constraints during both training and execution. From the STL formula $\varphi$, we construct time‐varying HOCBFs and embed them into a neural network controller via a BarrierNet layer \cite{xiao2021barriernet}, enabling specification correctness and robustness to be incorporated directly into policy learning. The HOCBF hyperparameters are generated by three networks: InitNet, which adapts them to different initial conditions; RefNet, which automatically produces reference controls; and extended BarrierNet, inspired by \cite{xiao2021barriernet}, which adjusts CBF hyperparameters to enforce $\varphi$. These components are jointly optimized using a unified robustness metric that accounts for both STL satisfaction and QP feasibility, allowing the resulting controller to remain feasible even under tight input bounds or complex specifications. After training, the networks provide hyperparameters online, ensuring feasible STL enforcement without manual tuning or pre‐generated reference inputs, thereby reducing conservativeness and improving robustness in both training and execution. An overview of the proposed approach is illustrated in Fig.~\ref{fig:framework}, while additional
implementation details are provided in Sec.~\ref{sec:FeasiBarrierNet}.

\section{Learning Controllers from Feasibility-Aware STL via BarrierNet} 
\label{sec:FeasiBarrierNet}

In this section, we present our solution to Problem 1. Section \ref{subsec:barriernet} introduces the trainable HOCBFs and the state-dependent BarrierNet architecture extended from [19], along with a sufficient condition for guaranteeing QP feasibility. Section \ref{subsec:hocbf-stl} provides a general procedure for constructing a set of time-varying HOCBFs that guarantee the satisfaction of a given STL specification. Section \ref{subsec:Unified Rob} then defines a unified robustness metric that captures both STL satisfaction and QP feasibility. Finally, Section \ref{subsec:learning} describes how these time-varying HOCBFs are jointly trained with the neural network controller through BarrierNet to further enhance robustness.
\subsection{Extended BarrierNet with Trainable HOCBFs and Feasibility Guarantees}
\label{subsec:barriernet}
Suppose we have a set of time-varying HOCBFs 
$b_j(\mathbf{x},t,\boldsymbol{\theta}_b,\mathbf{x}_0)$, $j=1,\ldots,M$, 
each depending on the initial condition $\mathbf{x}_0$ and containing 
trainable parameters $\boldsymbol{\theta}_b$ 
(the role of $\mathbf{x}_0$ will be clarified in Section~\ref{subsec:learning}).
To reduce conservativeness, the associated class $\kappa$ functions 
are also made trainable. Rewriting~\eqref{eq: psi} for a HOCBF $b_j$ yields:
\begin{equation}
\begin{aligned}
    \label{eq: psi2}
    \psi_{i,j}(\mathbf x,t,&\bm\theta_b,\bm\theta_p,\mathbf{x}_0,\mathbf{P}_{\text{inip}}) \coloneqq \dot \psi_{i-1,j}(\mathbf x,t,\bm\theta_b,\bm\theta_p,\mathbf{x}_0,\mathbf{P}_{\text{inip}}) + \\&p_{i,j}(\mathbf x,\bm\theta_p,\mathbf{P}_{\text{inip}})\alpha_{i,j}\big(\psi_{i-1,j}(\mathbf x,t,\bm\theta_b,\bm\theta_p,\mathbf{x}_0,\mathbf{P}_{\text{inip}})\big),
\end{aligned}
\end{equation}
where $\alpha_{i,j}$ are class $\kappa$ functions, 
$+\infty >p_{i,j}(\mathbf{x},\boldsymbol{\theta}_p,\mathbf{P}_{\text{inip}})>0$ for 
$i=1,\ldots,m$ and $j=1,\ldots,M$, and $m$ denotes the 
relative degree of the HOCBF $b_j$. For notational simplicity, we write 
$\psi_{\cdot,j}(\mathbf{x},t,\boldsymbol{\theta}_b,\boldsymbol{\theta}_p)$ 
to denote 
$\psi_{\cdot,j}(\mathbf{x},t,\boldsymbol{\theta}_b,\boldsymbol{\theta}_p,\mathbf{x}_0,\mathbf{P}_{\text{inip}})$ 
throughout the remainder of this paper.
Each $p_{i,j}(\mathbf{x},\boldsymbol{\theta}_p,\mathbf{P}_{\mathrm{inip}})$ is time-varying, as it depends on the state $\mathbf{x}(t)$ and the trainable parameter $\boldsymbol{\theta}_p$. Its initial value $\mathbf{P}_{\mathrm{inip}}(\mathbf{x}_0,\boldsymbol{\theta}_{\mathrm{inip}})$ depends on the initial condition and the trainable parameter $\boldsymbol{\theta}_{\mathrm{inip}}$
(This will also be clarified in 
Section~\ref{subsec:learning}).

BarrierNet~\cite{xiao2021barriernet} is a neural network layer implemented 
as a differentiable Quadratic Program (dQP) with HOCBF constraints. 
We append an extended BarrierNet layer, extending the formulation in 
\cite{xiao2021barriernet}, as the terminal layer of the neural network 
controller (with or without memory), as shown in~\eqref{eq:nn-input2} and 
\eqref{eq:nn-input}, where $\mathbf{u}^*(t)$ is obtained from:
\begin{subequations}
    \label{eq:barriernet}
    \begin{align}
    &\mathbf u^*(t) \hspace{-2pt} = & & \hspace{-10pt} \arg\min_{\mathbf u(t)}\frac{1}{2}\mathbf  u(t)^\top \mathbf Q(\cdot,\bm\theta_q) \mathbf u(t) + \mathbf F^\top(\cdot,\bm\theta_f) \mathbf u(t)    \\
    & \text{s.t.}& & \hspace{-10pt} L_f^m b_j(\mathbf x,t,\bm\theta_b,\mathbf x_0) + L_gL_f^{m-1}b_j(\mathbf x,t,\bm\theta_b,\mathbf x_0)\mathbf u(t) \nonumber \label{subeq:dqp-hocbf}\\ 
    & & & \hspace{-10pt} + \frac{\partial^mb_j(\mathbf x,t,\bm\theta_b,\mathbf x_0)}{\partial t^m} +O(b_j(\mathbf x,t,\bm\theta_b,\mathbf x_0),\bm\theta_p,\mathbf{P}_{\text{inip}}) \nonumber \\ 
    & & & \hspace{-10pt} + p_{m,j}(\mathbf x,\bm\theta_p,\mathbf{P}_{\text{inip}})\alpha_m(\psi_{m-1,j}(\mathbf x,t,\bm\theta_b,\bm\theta_p))\geq 0,\\
    & & &\hspace{-10pt}  \mathbf u_{min} \leq \mathbf u(t) \leq \mathbf u_{max},\label{subeq:input-bound}\\
    & & &\hspace{-10pt} t=k\Delta t,\ k=0,1,2,\ldots,\ j=1,\ldots,M,\label{subeq:discrete-form}
    \end{align}
\end{subequations}
where $\mathbf{Q}(\cdot,\boldsymbol{\theta}_q)\in\mathbb{R}^{q\times q}$, 
$\mathbf{F}(\cdot,\boldsymbol{\theta}_f)\in\mathbb{R}^{q}$, 
$+\infty >p_{i,j}(\mathbf{x},\boldsymbol{\theta}_p,\mathbf{P}_{\text{inip}})>0$ for $i=1,\ldots,m$, 
and $b_j(\mathbf{x},t,\boldsymbol{\theta}_b,\mathbf{x}_0)$ are all generated by 
preceding neural network layers with trainable parameters 
$(\boldsymbol{\theta}_q,\boldsymbol{\theta}_f,\boldsymbol{\theta}_b,\boldsymbol{\theta}_{\text{inip}},\boldsymbol{\theta}_p)\coloneqq\boldsymbol{\theta}$. Here, the symbol ``$\cdot$'' indicates the input to the network:
$\mathbf{x}(t)$ in the memoryless case, and the state history
$\mathbf{x}_{0:t}$ when memory is used.
The matrix $\mathbf{Q}$ is positive definite, and 
$\mathbf{Q}^{-1}\mathbf{F}$ can be interpreted as a reference control. 
Although $\mathbf{Q}$, $\mathbf{F}$, and $p_{i,j}$ are generated by previous layers 
in~\eqref{eq:barriernet}, they may also be treated as directly trainable hyperparameters. The dQP \eqref{eq:barriernet} is solved at each time point $k\Delta t$, $k=0,1,\ldots$ until reaching the time horizon $T$, and the solution $\mathbf u^*(t)$ is applied to the system as a constant for the time period $[k\Delta t, k\Delta t + \Delta t)$. 
\begin{theorem}
\label{thm:dqp-hocbf}
Given a time-varying HOCBF $b(\mathbf{x}, t)$ with a sequence of sets 
$\mathcal{C}_1(t), \ldots, \mathcal{C}_m(t)$ defined in \eqref{eq: set}, 
assume that the initial condition satisfies $
\mathbf{x}(0) \in \mathcal{C}_1(0) \cap \mathcal{C}_2(0) \cap \cdots \cap \mathcal{C}_m(0).$
Let $p_{i,j}(\mathbf{x}, \boldsymbol{\theta}_p, \mathbf{P}_{\text{inip}})$,
$i \in \{1, \ldots, m-2\}$ be differentiable functions chosen such that
all time derivatives appearing in the recursion up to $\psi_{m-1,j}$
are independent of the control input $\mathbf{u}$. Specifically, we require
\begin{equation} \label{eq:order-rule} \begin{aligned} L_g L_f^k p_{i,j}(\mathbf{x},\boldsymbol{\theta}_p,\mathbf{P}_{\text{inip}}) = 0,~ k = m-2-i, \\m\ge 3,~ \forall i=1,\ldots,m-2. \\ \end{aligned} \end{equation}
Under this condition, the Lipschitz continuous control input $\mathbf{u}$
enters the HOCBF constraint only through the highest-order term
$\psi_{m,j}(\mathbf{x}, t, \boldsymbol{\theta}_b, \boldsymbol{\theta}_p)$
in \eqref{subeq:dqp-hocbf}, subject to the input bound \eqref{subeq:input-bound}.
As a result, forward invariance of the set intersection defined in \eqref{eq: set}
is guaranteed regardless of whether the functions
$p_{i,j}$
depend on the state.
In particular, when $p_{i,j}$ are treated as trainable hyperparameters that do not
depend on the state, the resulting BarrierNet composed of neurons defined in
\eqref{eq:barriernet} also guarantees forward invariance.
\end{theorem}
\begin{proof}
We show that each constraint \eqref{subeq:dqp-hocbf} has the HOCBF form, thus forward invariance follows from Theorem \ref{thm:dqp-hocbf}.
\begin{figure}
    \centering
\includegraphics[width=8cm]{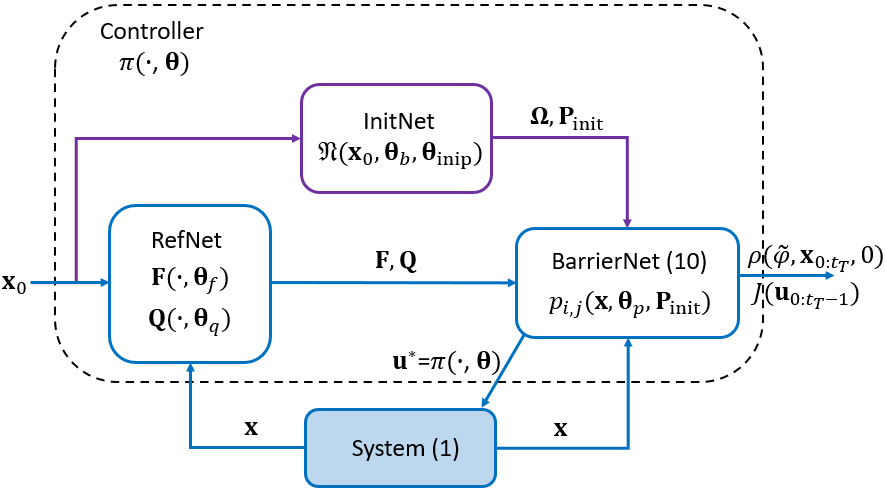}
    \caption{Overall structure of the controller. The purple module runs only at $t=0$, while
the blue modules operate at every later time step. The dashed box outlines the
controller $\pi(\cdot,\boldsymbol{\theta})$ (\eqref{eq:nn-input} or \eqref{eq:nn-input2}) over one rollout (epoch).}
    \label{fig:framework}
\end{figure}
Assume that $p_{i,j}(\mathbf{x},\boldsymbol{\theta}_p,\mathbf{P}_{\text{inip}})$, $i\in\{1,\dots,m-2\}$ are differentiable and satisfy
\eqref{eq:order-rule} (the multiplier $p_{m-1,j}(\mathbf{x},\boldsymbol{\theta}_p,\mathbf{P}_{\text{inip}})$ can introduce a dependence on $\mathbf{u}$ only in $\psi_{m,j}(\mathbf{x},t,\boldsymbol{\theta}_b,\boldsymbol{\theta}_p)$), or are trainable hyperparameters. By repeatedly applying the modified
recursion in \eqref{eq: psi2}, the highest--order constraint $\psi_{m,j}(\mathbf{x},t,\boldsymbol{\theta}_b,\boldsymbol{\theta}_p)$ associated with
$b_j(\mathbf x,t,\bm\theta_b,\mathbf x_0)$ takes the form
\begin{equation}
\label{eq:psi_m_expanded-g}
\begin{aligned}
&\psi_{m,j}
= L_f^{m} b_j
+ \big[L_g L_f^{m-1} b_j\big]\mathbf u+ \frac{\partial^mb_j}{\partial t^m} +\\
& \Phi_{j}\big(b_j, \dot b_j, \dots, b_j^{(m-1)},
               p_{1,j}, \dot p_{1,j},\ddot{p}_{1,j},\dots, \dot p_{m-1,j}, p_{m,j}\big),
\end{aligned}
\end{equation}
where $\dot b_j$ and $b_j^{(k)}$ denote the total time derivatives
$\frac{d^k}{dt^k} b_j(\mathbf{x},t)$, which include both
$\frac{\partial b_j}{\partial \mathbf{x}}\dot{\mathbf{x}}$ and
$\frac{\partial b_j}{\partial t}$.
The term $\frac{\partial^m b_j}{\partial t^m}$ is written explicitly,
while all remaining mixed terms, including the time derivatives of the
multipliers $p_{i,j}$, are collected in $\Phi_{j}$. Here, for notational simplicity, the arguments $(\mathbf{x},t,\boldsymbol{\theta}_b,\boldsymbol{\theta}_p,\mathbf{x}_0,\mathbf{P}_{\text{inip}})$ of 
$p_{i,j}$, $b_j$, $\Phi_{j}$, and $\psi_{m,j}$ have been omitted. Specifically, if  the class $\kappa$ functions $\alpha_{i,j}(\cdot)$ are linear, we obtain
\begin{equation}
\label{eq:psi_m_expanded}
\begin{aligned}
&\psi_{m,j}
= L_f^{m} b_j
+ \big[L_g L_f^{m-1} b_j\big]\mathbf u+ \frac{\partial^mb_j}{\partial t^m} +\\
& \sum_{r=1}^{m}
\Bigg[
\sum_{1\le k_1 < \cdots < k_r \le m}
\Big( \prod_{\ell=1}^{r} p_{k_\ell,j} \Big)
+ P_{r,j}(p_{i,j})
\Bigg] b_j^{(m-r)},
\end{aligned}
\end{equation}
where $P_{r,j}(p_{i,j})$ collects polynomial expressions in the time derivatives of
the multipliers $p_{i,j}$. 

Because the system dynamics are control--affine and the functions $p_{i,j}$
are differentiable, every time derivative of $p_{i,j}$ appearing in $\Phi_{j}$ (and $P_{r,j}(p_{i,j})$)
is at most affine in $\mathbf{u}$. Moreover, condition \eqref{eq:order-rule} ensures that at most one
derivative of $p_{i,j}$ yields a nonzero input gain inside $\Phi_{j}$.
Therefore, each $\Phi_{j}$ can be decomposed as
\begin{equation}
\label{eq:p_expanded}
\Phi_{j} = \Phi_{j}^{0}(\mathbf{x},t,\boldsymbol{\theta}_b,\boldsymbol{\theta}_p,\mathbf{x}_0,\mathbf{P}_{\text{inip}}) + \Phi_{j}^{1}(\mathbf{x},t,\boldsymbol{\theta}_b,\boldsymbol{\theta}_p,\mathbf{x}_0,\mathbf{P}_{\text{inip}})\mathbf{u},
\end{equation}
for smooth functions $\Phi_{j}^{0}$ and $\Phi_{j}^{1}$ independent of $\mathbf{u}$.
Substituting this into~\eqref{eq:psi_m_expanded} and regrouping terms gives
\begin{equation}
\label{eq:psi_m_expanded2}
\psi_{m,j}
= a_j\mathbf{u} + c_j,
\end{equation}
where
\begin{equation}
\label{eq:a_expanded}
a_j
= L_g L_f^{m-1} b_j
+\Phi_{j}^{1}(\mathbf{x},t,\boldsymbol{\theta}_b,\boldsymbol{\theta}_p,\mathbf{x}_0,\mathbf{P}_{\text{inip}}),
\end{equation}
and
\begin{equation}
\label{eq:c_expanded}
\begin{aligned}
c_j
= L_f^{m} b_j
+ \frac{\partial^mb_j}{\partial t^m} +\Phi_{j}^{0}(\mathbf{x},t,\boldsymbol{\theta}_b,\boldsymbol{\theta}_p,\mathbf{x}_0,\mathbf{P}_{\text{inip}}).
\end{aligned}
\end{equation}
If $p_{i,j}$ are trainable constants, then $\Phi_{j}^{1}(\mathbf{x},t,\boldsymbol{\theta}_b,\boldsymbol{\theta}_p,\mathbf{x}_0,\mathbf{P}_{\text{inip}})=\mathbf{0}$ (and $P_{r,j}(p_{i,j})=\mathbf{0}$). Thus, in both cases, $\psi_{m,j}$ is affine in $\mathbf{u}$, matching the standard HOCBF
form. Hence each inequality in \eqref{subeq:dqp-hocbf} defines a valid HOCBF constraint.
Since the dQP \eqref{eq:barriernet} yields a Lipschitz continuous control law, Theorem~\ref{thm:hocbf}
implies the forward invariance of all encoded sets \eqref{eq: set}. Stacking these
constraints in the BarrierNet layer guarantees forward invariance of
their intersection. 
\end{proof}
Since \eqref{eq:barriernet} is differentiable, the gradients of the optimal control $\mathbf{u}^*(t)$ and the HOCBF hyperparameters $p_{i,j}$ with respect to $\boldsymbol{\theta}$ can be obtained via the dQP framework of \cite{amos2017optnet} (the gradients of $p_{i,j}$ with respect to $\mathbf{x}$ can also be computed by PyTorch). This enables $\boldsymbol{\theta}$ to be trained with any common neural–network optimizer (e.g., those used for LSTMs or CNNs \cite{goodfellow2016deep}).
As a result, BarrierNet not only guarantees satisfaction of all HOCBF constraints encoding the STL requirements, but also enables the controller to optimize a given objective through training.

In \eqref{eq: psi2}, when the class $\kappa$ function $\alpha_{i,j}(\cdot)$ degenerates to 
zero, the multiplier $p_{i,j}$ loses its ability to regulate the constraint. In this 
case, the dQP may also become infeasible, since each $\psi_{m,j}$ constraint 
can directly conflict with each other and the control bounds, preventing the existence of a 
control input $\mathbf{u}$ that satisfies all constraints simultaneously. 
The following theorem provides a sufficient condition to ensure the feasibility 
of the dQP.

\begin{theorem}[Sufficient Condition for Feasibility]
\label{thm:hierarchical-feasibility}
Consider a time-varying HOCBF $b(\mathbf{x}, t)$ and a sequence of sets
$\mathcal{C}_1(t), \ldots, \mathcal{C}_m(t)$ defined in \eqref{eq: set}.
Assume that the initial condition satisfies
$\mathbf{x}(0) \in \mathcal{C}_1(0) \cap \mathcal{C}_2(0) \cap \cdots \cap \mathcal{C}_m(0).$ For $i \in \{1, \ldots, m-2\}$, let
$p_{i,j}(\mathbf{x}, \boldsymbol{\theta}_p, \mathbf{P}_{\text{inip}})$
be differentiable functions that satisfy \eqref{eq:order-rule}.
Alternatively, these functions may be treated as trainable hyperparameters
that do not depend on the state. For each $j \in \{1, \ldots, M\}$, define the highest-order HOCBF constraint
by \eqref{subeq:dqp-hocbf}, and assume that the control input $\mathbf{u}$
satisfies the bound \eqref{subeq:input-bound}.
Suppose that the condition $\psi_{m-1,j} > 0$ is enforced for all $t \ge 0$. Then, for any $(\mathbf{x}, t)$ such that
$\psi_{m-1,j}(\mathbf{x}, t, \boldsymbol{\theta}_b, \boldsymbol{\theta}_p) > 0$,
the optimization problem in \eqref{eq:barriernet} is pointwise feasible.
As a result, the corresponding dQP admits a feasible control input
$\mathbf{u}^*(t)$.
\end{theorem}
\begin{proof}
The highest-order HOCBF constraint \eqref{subeq:dqp-hocbf} can be rewrite as 
\begin{equation}
\label{eq:high-order-feasibility}
\begin{aligned}
 & L_f^{m} b_j+
\big[L_g L_f^{m-1} b_j\big]\mathbf u+ \frac{\partial^mb_j}{\partial t^m} +\\
&\sum_{i=1}^{m-1}
\alpha_{i,j}(\psi_{i-1,j}) [L_{g} L_f^{m-i-1} p_{i,j}] \mathbf{u}
    + R(b_j,p_{i,j})
 + \\
&\sum_{i=1}^{m-1}
\Big[
    L_f^{m-i} p_{i,j}\alpha_{i,j}(\psi_{i-1,j})
    + p_{i,j}\alpha_{m,j}(\psi_{m-1,j})
\Big]
\;\ge 0.
\end{aligned}
\end{equation}
Here, $R(b_j,p_{i,j})$ collects all remaining terms arising from the recursive
application of the class $\kappa$ functions, the drift dynamics $f(\mathbf{x})$,
and the chain rule. Consequently, $R(b_j,p_{i,j})$ depends on Lie derivatives of
$b_j$ and $p_{i,j}$ along $f$ (up to order $m-1$), as well as additional terms
resulting from the explicit time dependence of $b_j(\mathbf{x},t)$, and does not
explicitly depend on the control input $\mathbf{u}$.
The term $L_{g} L_f^{m-i-1} p_{i,j}\mathbf{u}$ is affine in $\mathbf{u}$
whenever $L_{g} L_f^{m-i-1} p_{i,j}\ne \mathbf{0}$; if
$L_{g} L_f^{m-i-1} p_{i,j}= \mathbf{0}$ (e.g., when $p_{i,j}$ is constant in each $\Delta t$ or does not depend on any state component
directly affected by the control input), this contribution vanishes and
\eqref{eq:high-order-feasibility} still remains affine in $\mathbf{u}$. Moreover, if $\psi_{m-1,j}>0$ for all $t\ge0$, then $\alpha_{m,j}(\psi_{m-1,j})>0$, and the multiplier $p_{m,j}$ can be chosen
so that the term $p_{m,j}\alpha_{m,j}(\psi_{m-1,j})$ dominates the sum of all
other terms in \eqref{eq:high-order-feasibility}, which are jointly bounded
even when $\mathbf{u}$ is restricted by \eqref{subeq:input-bound}. In particular,
even if the total contribution of all terms except $p_{m,j}\alpha_{m,j}(\psi_{m-1,j})$
is negative, one can always increase $p_{m,j}$ to render the left-hand side
of \eqref{eq:high-order-feasibility} nonnegative. Hence, the constraint
\eqref{eq:high-order-feasibility} is pointwise feasible for all
$(\mathbf{x},t)$ with $\psi_{m-1,j}(\mathbf{x},t)>0$,
so the dQP admits a feasible solution $\mathbf{u}^*(t)$ at each time step.
\end{proof}
The condition $\psi_{m-1,j}>0$ in Theorem~\ref{thm:hierarchical-feasibility} is a
sufficient condition for the feasibility of problem~\eqref{eq:barriernet}, and,
together with Theorem~\ref{thm:dqp-hocbf}, guarantees the forward invariance of the
HOCBF-defined set and feasibility of the QP at every time step. From \eqref{eq:high-order-feasibility}, we observe that if
$\psi_{m-1,j} > 0$ cannot be maintained, one can instead impose
$\psi_{m-2,j} > 0$ as the feasibility condition.
In this case, by increasing $L_f p_{m-1,j}$, the left-hand side of
\eqref{eq:high-order-feasibility} can still be rendered positive, ensuring the
feasibility of the constraint.
If $\psi_{m-2,j} > 0$ also cannot be ensured, we may proceed hierarchically and
use $\psi_{m-3,j} > 0$ as the feasibility condition, and so on, down to
$\psi_{0,j} > 0$, i.e., $b_j>0$.

However, the constraint \eqref{subeq:dqp-hocbf} is specifically designed to
guarantee $\psi_{m-1,j} \ge 0$, which is the requirement needed for forward
invariance. Once $\psi_{m-1,j} \ge 0$ cannot be enforced, the constraint
\eqref{subeq:dqp-hocbf} no longer serves its intended purpose.
For this reason, Theorem~\ref{thm:hierarchical-feasibility} adopts only
$\psi_{m-1,j} > 0$ as a sufficient condition for feasibility. From \eqref{eq:order-rule}, we note that the dependence of the
multipliers $p_{i,j}$ on the state $\mathbf{x}$ can be designed so that the
control input appears in lower-order derivatives of $b_j$, effectively reducing
the relative degree. This mechanism complements the hierarchical
feasibility conditions discussed above and will be explored in future work.
\begin{remark}
\label{rem:rem1}
Different from the original BarrierNet \cite{xiao2021barriernet}, which only
considers time-invariant HOCBFs $b_{j}(\mathbf{x})$, the formulation in \eqref{eq:barriernet}
incorporates time-varying HOCBFs $b_{j}(\mathbf{x},t)$. Moreover, we make the HOCBF $b_{j}$
itself trainable, in addition to the hyperparameters $\mathbf{Q}$, $\mathbf{F}$,
and $p_{i,j}$. In Theorem~2, we explicitly allow the time derivatives of the
multipliers $p_{i,j}$ to introduce dependence on the input $\mathbf{u}$ in the
highest--order constraint $\psi_{m,j}$, a feature that was not considered in
\cite{xiao2021barriernet}. Therefore, the BarrierNet structure in (10),
together with Theorem~2, extends the original BarrierNet framework proposed
in \cite{xiao2021barriernet}. In \cite{liu2023learning}, the trainable multiplier 
$p_{i,j}(\mathbf{x}_0,\boldsymbol{\theta}_p)$ remains constant over time 
because it depends only on the initial state, although it varies across 
rollouts. This limits its ability to adapt along the time horizon and 
restricts potential robustness improvement. In contrast, in this paper 
$p_{i,j}(\mathbf{x},\boldsymbol{\theta}_p,\mathbf{P}_{\text{init}})$ is 
time-varying, allowing it to continuously adjust to the evolving system state and thereby better satisfy the STL specification.
\end{remark}
\subsection{HOCBFs for STL specifications}
\label{subsec:hocbf-stl}
The authors of \cite{lindemann2018control} introduced the use of time-varying CBFs to enforce the satisfaction of STL specifications. However, \cite{lindemann2018control} considers only CBFs of relative degree one, and the construction of these CBFs is illustrated through examples rather than provided in a general or systematic form.
In this work, we extend their approach to high-order CBFs and develop a general, algorithmic procedure for constructing such HOCBFs. Moreover, we make these HOCBFs trainable, eliminating the need for manual design and enabling the resulting controller to further improve its performance through training.

Consider an STL formula $\varphi$ of the form \eqref{eq:stl}.
Since any predicate of the form $\neg\mu$ can be rewritten by replacing the
predicate function with $-h(\mathbf{x})$ and eliminating the negation, we
assume, without loss of generality, that $\varphi$ is negation-free.
We impose the following assumption on the STL formula and the system:
\begin{assumption}
\label{as:feasible}
$\forall \mathbf x(0)\in\mathcal X_0$, $\exists \mathbf u(t)\in\mathcal U$ such that $(\mathbf x, 0)\models\varphi$ where $\mathbf x$ is the solution of system \eqref{eq:system1}.
\end{assumption}
Assumption~\ref{as:feasible} is not restrictive in practice, because if it does
not hold, then for certain initial conditions $\mathbf{x}_0$ the specification
$\varphi$ is inherently infeasible and Problem~\ref{pb:1} admits no solution.

\noindent\textbf{Categories of Predicates.} Suppose that there are $M$ predicates in $\varphi$ and they are given by $\mu_j:\ h_j(\mathbf x)\geq0$, $j=1,\ldots,M$. Now we divide all predicates into three categories:
\begin{itemize}
    \item Category I: predicates that are already satisfied at $t=0$ and whose
    enclosing temporal operator also begins at $t=0$.  
    For example, $\mu_1$ in $G_{[0,5]}\mu_1$ and in $G_{[0,5]}(\mu_1 \land \mu_2)$,
    where $h_1(\mathbf{x}_0)\ge 0$.  
    Such predicates typically encode safety requirements, such as obstacle
    avoidance in robotic applications.
    \item Category II: predicates enclosed by $F_{[t_a,t_b]}$ that do not belong
    to Category~I, e.g., $\mu_1$ in $F_{[2,5]}\mu_1$ and $\mu_2$ in
    $F_{[0,5]}(\mu_2 \land \mu_3)$, where $h_2(\mathbf{x}_0)<0$.
    \item Category III: predicates enclosed by $G_{[t_a,t_b]}$ that do not belong
    to Category~I.  
    For example, $\mu_1$ in $G_{[2,5]}(\mu_1 \land \mu_2)$.  
    Note that Assumption~\ref{as:feasible} rules out formulae such as
    $G_{[0,5]}\mu_1$ with $h_1(\mathbf{x}_0)<0$, which are infeasible from the
    initial condition.
\end{itemize}
\noindent\textbf{STL Guarantees.}  
To each predicate $\mu_j$, we associate a time-varying HOCBF $b_j$.  
For predicates in Category~I, which are already satisfied at $t=0$, we assign
a fixed, time-invariant HOCBF to preserve their satisfaction over the required
time interval:
\begin{equation}
\label{eq:cat1}
b_j(\mathbf x) = h_j(\mathbf x).
\end{equation}
For predicates $\mu_j$ in Category II and III, we assign a trainable time-varying HOCBF:
\begin{equation}
    \label{eq:cat23}
    b_j(\mathbf x,t,\bm\theta_b,\mathbf x_0) = h_j(\mathbf x) + \gamma_j (t,\bm\omega_j(\bm\theta_b,\mathbf x_0)),
\end{equation}
where $\gamma_j(\cdot,\boldsymbol{\omega}_j):[0,T]\rightarrow\mathbb{R}$ is a
function parameterized by $\boldsymbol{\omega}_j$, which is generated by a
neural network taking $\mathbf{x}_0$ as input and using parameters
$\boldsymbol{\theta}_b$. Details of this network are provided in
Section~\ref{subsec:learning}.  
For notational simplicity, in the remainder of this subsection we omit
$\boldsymbol{\theta}_b$ and $\mathbf{x}_0$ and treat $\boldsymbol{\omega}_j$ as
a standalone vector.  
With an appropriate choice of $\gamma_j(t,\boldsymbol{\omega}_j)$, enforcing
$b_j(\mathbf{x},t)\ge 0$ for all $t\in[0,T]$ ensures that the predicate
$\mu_j$ is satisfied over its required time interval.  
We now discuss how to select $\gamma_j(t,\boldsymbol{\omega}_j)$.

For notational simplicity, we omit the subscript $j$ when the meaning is clear
from context.  
For a Category~II predicate $\mu$ wrapped by $F_{[t_a,t_b]}$, we choose
$\gamma$ to be a linear function:
\begin{equation}
    \label{eq:gamma_f}
    \gamma(t,\bm\omega) = \omega_1 + \omega_2t,
\end{equation}
where $\boldsymbol{\omega} = [\omega_1,\omega_2]^{\top}$ with $\omega_1 > 0$ and
$\omega_2 < 0$.  Other function forms are also possible.  
To ensure that the HOCBF $b(\mathbf{x},t) = h(\mathbf{x}) + \gamma(t,\boldsymbol{\omega})$
guarantees the satisfaction of $\mu$, we impose the following three
constraints on $\gamma$:
\begin{subequations}
\label{eq:f}
    \begin{align}
    \label{eq:f0}
        &\gamma(0,\bm\omega)\ > -h(\mathbf x_0),\\
    \label{eq:fb}
        &\gamma(t_b, \bm\omega)  < 0,\\
    \label{eq:fa}
        &\gamma(t_a, \bm\omega) > -\sup_{\mathbf x\in\mathbb R^n} h(\mathbf x).
    \end{align}
\end{subequations}
Constraint~\eqref{eq:f0} ensures that the HOCBF is positive at the initial
time, i.e., $b(\mathbf{x}_0,0)>0$.  
Constraint~\eqref{eq:fb} ensures that $h(\mathbf{x}) \ge b(\mathbf{x},t)$ before $t_b$. Under \eqref{eq:f0} and \eqref{eq:fb}, the forward invariance
of the superlevel set of $b(\mathbf{x},t)$ enforces the satisfaction of
$F_{[t_a,t_b]}\mu$.  
The third constraint~\eqref{eq:fa} ensures that the superlevel set of
$b(\mathbf{x},t)$ is nonempty for $t<t_a$.  
As it will be discussed later, we delete this HOCBF once $h(\mathbf{x})\ge0$ for some $t\ge t_a$, so we do not require the superlevel set of $b(\mathbf{x},t)$ to
remain nonempty thereafter.

For a Category~III predicate $\mu$ wrapped by $G_{[t_a,t_b]}$, we define
$\gamma$ as follows:
\begin{equation}
    \label{eq:gamma_g}
    \gamma(t,\bm\omega) = \omega_1 e^{-\omega_2t}-c,
\end{equation}
where $\boldsymbol{\omega} = (\omega_1,\omega_2)$ with $\omega_1>0$ and
$\omega_2>0$, and $c>0$ is a small constant. Other functional forms are also
possible.  
Analogous to \eqref{eq:f}, we impose the following two constraints on
$\gamma$:
\begin{subequations}
\label{eq:g}
    \begin{align}
    \label{eq:g0}
        &\gamma(0,\bm\omega)\ > -h(\mathbf x_0),\\
    \label{eq:ga}
        &\gamma(t_a,\bm\omega)  < 0.
    \end{align}
\end{subequations}
The difference is that \eqref{eq:ga} ensures $h(\mathbf{x}) \ge b(\mathbf{x},t)$
before $t_a$, so that, combined with the forward
invariance of the superlevel set of $b(\mathbf{x},t)$, the requirement
$G_{[t_a,t_b]}\mu$ is satisfied. When $c>0$ is small enough, the superlevel set of $b(\mathbf x,t)$ is always nonempty under Assumption \ref{as:feasible}. We choose the exponential function \eqref{eq:gamma_g} for \emph{always} instead of a linear function \eqref{eq:gamma_f} because it satisfies:
\begin{equation}
    0<-\gamma(t,\bm\omega)<c,\ \forall t\in[t_a,t_b].
\end{equation}
As a result, the condition $b(\mathbf{x},t)\ge 0$, equivalently 
$h(\mathbf{x})\ge -\gamma(t,\boldsymbol{\omega})$, is not overly conservative
on $[t_a,t_b]$ when $c>0$ is chosen sufficiently small.  
As detailed below, the HOCBF is deleted once $t>t_b$, which further reduces
conservativeness.

\noindent\textbf{Addressing Conflicts Between HOCBFs.}
We construct an HOCBF $b_j$ for each predicate $\mu_j$ in $\varphi$ using \eqref{eq:cat1} or \eqref{eq:cat23}.
However, the resulting constraints $b_j(\mathbf{x},t) \ge 0$ may conflict with each other during certain time intervals.
To resolve this issue, we introduce the following additional assumption:

\begin{assumption}
\label{as:circle}
For every predicate in Category II or III, the predicate function is of the
form
\begin{equation}
\label{eq:predicate}
    h(\mathbf{x})
    = s\big(R - \sigma (l(\mathbf{x}) - \mathbf{o})\big),
\end{equation}
where $s\in\{-1,1\}$, $R>0$, $\mathbf{o}\in\mathbb{R}^o$, $l:\mathbb{R}^n\!\to\!\mathbb{R}^o$
is a differentiable map (e.g., extracting a robot's position), and 
$\sigma :\mathbb{R}^o\!\to\!\mathbb{R}_{\ge0}$ is a smooth, strictly convex, radially unbounded gauge function, such as the Euclidean norm or a superelliptic norm.  
The corresponding sets $\mathcal{B}^{\mathrm{re}}(\mathbf{o},R)=\{\mathbf{z}:\sigma (\mathbf{z}-\mathbf{o})\le R\}, \mathcal{B}^{\mathrm{av}}(\mathbf{o},R)=\{\mathbf{z}:\sigma (\mathbf{z}-\mathbf{o})\ge R\},
$ represent generalized ``reach'' ($s=+1$) and ``avoid'' ($s=-1$) regions.
\end{assumption}

Predicates of the form \eqref{eq:predicate} describe reaching or avoiding a
smooth, strictly convex region (including circles, ellipses, superellipses,
and general convex gauge balls). Combined with temporal operators, they can
express rich high-level requirements. Circular or superelliptic regions are a
special case covered by Assumption~\ref{as:circle}. More general predicate
shapes will be investigated in future work.  Next, we give an example to illustrate the idea. 

\begin{example}
Consider $\varphi=F_{[0,2]}\mu_1\land F_{[2,4]}\mu_2$, where $
h_1(\mathbf{x}) = R_1 - \sigma (l(\mathbf{x}) - \mathbf{o}_1),
h_2(\mathbf{x}) = R_2 - \sigma (l(\mathbf{x}) - \mathbf{o}_2).
$
At time $t$, the invariant sets induced by their HOCBFs are
$\mathcal{B}^{\mathrm{re}}\big(\mathbf{o}_1,R_1+\gamma_1(t)\big),
\mathcal{B}^{\mathrm{re}}\big(\mathbf{o}_2,R_2+\gamma_2(t)\big).
$
To avoid conflicts between these two HOCBFs on the interval $[0,2]$, we:  
(1) enforce that their invariant sets have a nonempty intersection on $[0,2]$; and  
(2) delete $b_1$ once $h_1(\mathbf{x})>0$.  
A sufficient condition for (1) is $
\mathcal{B}^{\mathrm{re}}\left(\mathbf{o}_1,R_1+\gamma_1(2)\right)
\cap
\mathcal{B}^{\mathrm{re}}\left(\mathbf{o}_2,R_2+\gamma_2(2)\right)
\neq\emptyset.$
Using the triangle inequality of $\sigma$, this holds if
$\gamma_2(2)
\;\ge\;
\sigma (\mathbf{o}_1-\mathbf{o}_2)
- \gamma_1(2)
- R_1 - R_2.$
\end{example}

For each predicate wrapped with $G_{[t_a,t_b]}\mu$, we delete the associated
HOCBF at $t=t_b$.  
For $F_{[t_a,t_b]}\mu$, we delete the HOCBF once $h(\mathbf{x})\ge0$ for some
$t\ge t_a$.  
For $F_{[t_a,t_b]}(\bigwedge_{j=1}^N\mu_j)$, all corresponding HOCBFs are deleted
together once $h_j(\mathbf{x})\ge0$ for all $j$.  
Note that $G_{[t_a,t_b]}(\mu_1\land\mu_2)$ is equivalent to
$G_{[t_a,t_b]}\mu_1\land G_{[t_a,t_b]}\mu_2$, but
$F_{[t_a,t_b]}(\mu_1\land\mu_2)$ is not equivalent to
$F_{[t_a,t_b]}\mu_1\land F_{[t_a,t_b]}\mu_2$ since the latter allows
asynchronous satisfaction.

Next, reorder the predicates according to their ending times $t_b^1 \le t_b^2 \le \cdots \le t_b^M
$.
For predicate $\mu_j$ with ending time $t_b^j$ ($M\ge j\ge 2$), we impose $j-1$
nonconflict conditions in addition to \eqref{eq:f} or \eqref{eq:g}:
\begin{equation}
\label{eq:add_cons}
\begin{aligned}
\gamma_j(t_b^k)
\ge
s_j s_k \sigma (\mathbf{o}_j - \mathbf{o}_k)
- \gamma_k(t_b^k)
- s_k R_k - s_j R_j
+ D_{j,k}, \\
\qquad k=1,\ldots,j-1,
\end{aligned}
\end{equation}
where $\bm\omega$ is omitted, $h_k(\mathbf{x})
    = s_k\big(R_k - \sigma (l(\mathbf{x}) - \mathbf{o}_k)\big)$, $
D_{j,k}=\min(s_j+s_k,0)\times\inf,$
so that when $s_j=s_k=-1$ (both avoidance tasks), the condition is released.
For predicates in Category~I, we simply set $\gamma(t,\bm\omega)=0$.
Geometrically, the conditions in \eqref{eq:add_cons} ensure that, at each
predicate’s ending time, the feasible sets of all previously active HOCBFs
remain mutually compatible. To illustrate the idea, consider two predicates
$\mu_1$ and $\mu_2$ with ending times $t_b^1 \le t_b^2$. Since $b_1$ must be
maintained until $t_b^1$, we require that at $t_b^1$ the feasible region
induced by $b_2$ does not conflict with that of $b_1$, then the
corresponding compatibility condition becomes:
\begin{itemize}
\item \emph{reach--reach:} enforce 
      $\mathcal{B}^{\mathrm{re}}(\mathbf{o}_1,R_1+\gamma_1(t_b^1))\cap \mathcal{B}^{\mathrm{re}}(\mathbf{o}_2,R_2+\gamma_2(t_b^1))\neq\emptyset$;
\item \emph{reach--avoid:} enforce 
      $\mathcal{B}^{\mathrm{re}}(\mathbf{o}_1,R_1+\gamma_1(t_b^1))\cap \mathcal{B}^{\mathrm{av}}(\mathbf{o}_2,R_2-\gamma_2(t_b^1))\neq\emptyset$;
\item \emph{avoid--reach:} enforce 
      $\mathcal{B}^{\mathrm{av}}(\mathbf{o}_1,R_1-\gamma_1(t_b^1))\cap \mathcal{B}^{\mathrm{re}}(\mathbf{o}_2,R_2+\gamma_2(t_b^1))\neq\emptyset$;
\item \emph{avoid--avoid:} no additional condition is needed since 
      $\mathcal{B}^{\mathrm{av}}(\mathbf{o}_1,R_1-\gamma_1(t_b^1))\cap \mathcal{B}^{\mathrm{av}}(\mathbf{o}_2,R_2-\gamma_2(t_b^1))\neq\emptyset$ is always nonempty.
\end{itemize}
Because $\gamma(t,\bm\omega)$ in \eqref{eq:gamma_f} and \eqref{eq:gamma_g} is non-increasing,
the temporal operators are enforced as follows:
\begin{itemize}
\item For $F_{[t_a,t_b]}\mu$, constraint \eqref{eq:fb} guarantees the existence
      of some $t'\in[t_a,t_b]$ such that $h(\mathbf{x}(t'))\ge 0$.
\item For $G_{[t_a,t_b]}\mu$, constraint \eqref{eq:ga} ensures 
      $h(\mathbf{x}(t'))\ge 0$ for all $t'\in[t_a,t_b]$.
\item For $F_{[t_a,t_b]}(\bigwedge_{j=1}^N\mu_j)$, constraint \eqref{eq:fb}
      ensures the existence of a common $t'\in[t_a,t_b]$ where all
      $h_j(\mathbf{x}(t'))\ge 0$.
\end{itemize}

\begin{theorem}
\label{thm:construct}
Consider system~\eqref{eq:system1} satisfying 
Assumptions~\ref{as:feasible} and \ref{as:circle}, and HOCBFs constructed by
\eqref{eq:cat1} and \eqref{eq:cat23} that satisfy 
\eqref{eq:f}, \eqref{eq:g}, and \eqref{eq:add_cons}.
Suppose each HOCBF satisfies $\psi_{i,j}(\mathbf{x}_0,0)\ge 0$ for 
$i=0,\ldots,m-1$, $j=1,\ldots,M$. Then any control input $\mathbf{u}(t)$ that satisfies
\eqref{eq: hocbf} for all relevant HOCBFs guarantees satisfaction of the STL
specification $\varphi$.
\end{theorem}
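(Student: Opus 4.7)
The plan is to argue predicate-by-predicate, leveraging Theorem~\ref{thm:hocbf} to turn each HOCBF constraint into a forward-invariance statement, and then to translate those invariances into satisfaction of the respective temporal operators of $\varphi$. Since $\varphi$ is a conjunction of single-level temporal operators over negation-free predicates, if every conjunct is satisfied by the trajectory $\mathbf{x}$ generated by any $\mathbf{u}$ complying with \eqref{eq: hocbf}, then $(\mathbf{x},0)\models\varphi$ follows by the standard Boolean semantics. The heart of the argument is thus to verify each conjunct.

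First, I would fix $j$ and apply Theorem~\ref{thm:hocbf} to the HOCBF $b_j$. The hypothesis $\psi_{i,j}(\mathbf{x}_0,0,\boldsymbol{\theta}_b,\boldsymbol{\theta}_p)\ge 0$ for $i=0,\ldots,m-1$ places the initial state in the intersection $\mathcal{C}_{1,j}(0)\cap\cdots\cap\mathcal{C}_{m,j}(0)$, and satisfaction of \eqref{eq: hocbf} for $b_j$ yields $b_j(\mathbf{x}(t),t)\ge 0$ throughout the lifetime of the HOCBF. I would then translate this into predicate satisfaction by category. For Category~I, \eqref{eq:cat1} gives $b_j=h_j$, so $h_j(\mathbf{x}(t))\ge 0$ over $[t_a,t_b]=[0,t_b]$, satisfying the enclosing $G$-operator. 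For Category~II ($F_{[t_a,t_b]}\mu_j$), constraint \eqref{eq:fb} gives $\gamma_j(t_b,\boldsymbol{\omega}_j)<0$, hence at $t=t_b$ the invariance $h_j(\mathbf{x}(t_b))\ge -\gamma_j(t_b,\boldsymbol{\omega}_j)>0$ supplies the witness time required by the \emph{eventually} operator; the deletion rule is then legal because the predicate has already been certified. For Category~III ($G_{[t_a,t_b]}\mu_j$), constraints \eqref{eq:ga} together with the monotonicity of the chosen $\gamma_j$ in \eqref{eq:gamma_g} imply $\gamma_j(t,\boldsymbol{\omega}_j)\le 0$ for all $t\in[t_a,t_b]$, so $h_j(\mathbf{x}(t))\ge -\gamma_j(t,\boldsymbol{\omega}_j)\ge 0$ throughout $[t_a,t_b]$.

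Second, I would handle the conjunctive cases and verify that the HOCBFs can be enforced simultaneously. For $G_{[t_a,t_b]}\bigwedge_{j}\mu_j\equiv\bigwedge_{j}G_{[t_a,t_b]}\mu_j$, the argument applies component-wise. The subtler case is $F_{[t_a,t_b]}\bigwedge_{j}\mu_j$: here the synchronized deletion rule (delete all $b_j$ jointly only when every $h_j>0$) is essential, since otherwise an asynchronous witness time would not prove the \emph{eventually-of-conjunction}. Combined with \eqref{eq:fb} applied to each component, the common invariance horizon of the HOCBFs provides the single required instant. The nonconflict conditions \eqref{eq:add_cons}, derived using the triangle inequality of the gauge $\sigma$ under Assumption~\ref{as:circle}, guarantee that the intersections of the reach/avoid sublevel sets remain nonempty at every ending time $t_b^k$ that an HOCBF is still active; combined with Assumption~\ref{as:feasible}, this implies that the union of all active HOCBF constraints defines a set-valued trajectory tube that is nonempty and thus admits a trajectory satisfying all invariances simultaneously.

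The main obstacle I anticipate is the bookkeeping surrounding the deletion rules and the interaction between HOCBFs whose lifetimes overlap. Strictly speaking, one must argue that forward invariance of $b_j$ on $[0,t_b^j]$ combined with deletion at $t_b^j$ does not compromise invariance of a sibling $b_k$ still active on $[t_b^j,t_b^k]$; here \eqref{eq:add_cons} evaluated at the earlier termination time plays the crucial role by ensuring that the state can enter $b_k$'s feasible region before $b_j$ expires. A careful treatment requires ordering predicates by $t_b^j$ and performing an induction on $j$: at each step, one shows that (i) all previously active HOCBFs have certified their respective predicates, and (ii) the admissibility tube for the remaining HOCBFs is nonempty and forward invariant. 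Once this inductive argument is completed, Theorem~\ref{thm:hocbf} applied at each stage yields the desired conclusion, and the conjunction of all predicate-level satisfactions gives $(\mathbf{x},0)\models\varphi$.
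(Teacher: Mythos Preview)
Your core argument---invoke Theorem~\ref{thm:hocbf} on each $b_j$ to obtain $b_j(\mathbf{x}(t),t)\ge0$ while active, then use \eqref{eq:fb}, \eqref{eq:ga} and the monotonicity of $\gamma_j$ to extract predicate satisfaction category by category, including the $F_{[t_a,t_b]}\bigwedge_j\mu_j$ case---is correct and is exactly what the paper does. The feasibility bookkeeping you anticipate in your second and third paragraphs (nonemptiness of intersections via \eqref{eq:add_cons} and Assumption~\ref{as:feasible}, induction on the ordering $t_b^1\le\cdots\le t_b^M$) is not actually required: the theorem is \emph{conditional} on a control already satisfying \eqref{eq: hocbf} for all undeleted HOCBFs, so the existence of such a control is outside its scope, and the paper's proof accordingly omits all of that.
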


\begin{proof}
Constraints \eqref{eq:f0} and \eqref{eq:g0} ensure $b_j(\mathbf{x}_0,0)\ge 0$
for all predicates. Since $\psi_{i,j}(\mathbf{x}_0,0,\boldsymbol{\theta}_b)\ge 0$ for all $i$, 
Theorem~\ref{thm:hocbf} guarantees that any control input satisfying
\eqref{eq: hocbf} yields $b_j(\mathbf{x},t)\ge 0$ for all undeleted HOCBFs.  
Because $\gamma_{j}(t,\bm\omega_j)$ is non-increasing, \eqref{eq:fb} ensures $\exists t'\in[t_a,t_b]$, $h_{j}(\mathbf x(t'))\geq0$ for $F{[t_a,t_b]}\mu$, while \eqref{eq:ga} ensures $\forall t'\in[t_a,t_b]$, $h_{j}(\mathbf x(t'))\geq0$ for $G{[t_a,t_b]}\mu$. For a formula in the form of $F_{[t_a,t_b]}(\bigwedge_{j=1}^N\mu_j)$, \eqref{eq:fb} ensures $\exists t'\in[t_a,t_b]$, $h_j(\mathbf x(t'))\geq0$ for all $j$. Hence every predicate in $\varphi$ is
satisfied over its required temporal interval, ensuring $(\mathbf{x},0)\models
\varphi$.
\end{proof}

\subsection{Unified Robustness Metric}
\label{subsec:Unified Rob}
In Sec.~\ref{subsec:hocbf-stl}  we constructed either  
(i) a time-invariant HOCBF $b_j(\mathbf{x})$ (for Category~I predicates),  
or  
(ii) a time-varying HOCBF $b_j(\mathbf{x},t,\bm\theta_b)$ (for Category~II and III predicates),  
together with a deletion rule that removes the corresponding constraint once
the predicate is satisfied (for $F_{[t_a,t_b]}\mu$ operators) or at the terminal time
(for $G_{[t_a,t_b]}\mu$ operators).
Before deletion, each HOCBF imposes a highest-order constraint
$\psi_{m,j}(\mathbf{x},t,\boldsymbol{\theta}_b,\boldsymbol{\theta}_p)\ge0$, which must remain feasible for the dQP \eqref{eq:barriernet}.
This subsection introduces a unified robustness combining:
(i) feasibility robustness, and
(ii) STL temporal robustness based on the exponential robustness in
\cite{liu2023robust}, ensuring that the controller remains feasible while
satisfying the STL specification.

\noindent\textbf{Exponential robustness for STL predicates.}
For a predicate $\mu: h(\mathbf{x})\ge0$, the instantaneous robustness is
\begin{equation}
\label{eq:pho1}
\rho(\mu,\mathbf{x},t)=h(\mathbf{x}(t)),\qquad
\rho(\neg\mu,\mathbf{x},t)=-h(\mathbf{x}(t)).
\end{equation}
Because ``always'' and ``eventually'' can be expressed using conjunction
(and disjunction eliminated via De Morgan's law), the exponential robustness
is defined for conjunction over $M$ subformulas with robustness values
$\rho_1,\ldots,\rho_M$.
Let $\rho_{\min}=\min(\rho_1,\ldots,\rho_M)$ and define
\begin{equation}
\label{eq:pho2}
\rho_j^{\mathrm{conj}}
\coloneqq
\begin{cases}
\rho_{\min}e^{\frac{\rho_j-\rho_{\min}}{\rho_{\min}}}, & \rho_{\min}<0,\\[1mm]
\rho_{\min}(2-e^{\frac{\rho_{\min}-\rho_j}{\rho_{\min}}}), & \rho_{\min}>0,\\[1mm]
0, & \rho_{\min}=0,
\end{cases}
\end{equation}
and the exponential robustness for conjunction:
\begin{equation}
\label{eq:pho3}
\mathcal{A}^{\exp}(\rho_1,\ldots,\rho_M)
=
\beta\rho_{\min}
+
(1-\beta)
\frac{1}{M}\sum_{j=1}^M \rho_j^{\mathrm{conj}},
\end{equation}
where $\beta\in[0,1]$ balances the minimum and the average.  This robustness
is \emph{sound}: $\rho(\varphi,\mathbf{x},0)\ge0$ iff
$(\mathbf{x},0)\models\varphi$.

\noindent\textbf{Feasibility robustness.}
As shown in Theorem~\ref{thm:hierarchical-feasibility}, the dQP is feasible at
time $t$ whenever $\psi_{m-1,j}(\mathbf{x},t,\boldsymbol{\theta}_b,\boldsymbol{\theta}_p)>0$.  Thus the instantaneous
feasibility margin is
\begin{equation}
\label{eq:pho4}
\rho^{\mathrm{fea}}_j(\mu_j,\mathbf{x},t)
:= \psi_{m-1,j}(\mathbf{x},t,\boldsymbol{\theta}_b,\boldsymbol{\theta}_p).
\end{equation}
This margin must stay positive \emph{until} the HOCBF is deleted. An HOCBF corresponding to a predicate $\mu_j$ enclosed by $F_{[t_a^j,t_b^j]}$ is deleted once $h_j(\mathbf{x})\ge0$. We use an additional STL subformula to express the feasibility requirement:
\begin{equation}
\label{eq:feasibility-f}
\varphi_j^{\mathrm{fea}}:=
\big(\psi_{m-1,j}(\mathbf{x},t,\boldsymbol{\theta}_b,\boldsymbol{\theta}_p) > 0\big) U_{[t_a^j,t_b^j]} \big(h_j(\mathbf{x})\ge0\big),
\end{equation}
where $U_{[t_a^j,t_b^j]}$ is the temporal \emph{Until} operator as defined in standard STL. An HOCBF corresponding to a predicate $\mu_j$ enclosed by $G_{[t_a^j,t_b^j]}$ is always deleted at $t_b^j$. Similiarly, we define the feasibility STL subformula as:
\begin{equation}
\label{eq:feasibility-g}
\varphi_j^{\mathrm{fea}}:=
G_{[t_a^j,t_b^j]}\big(\psi_{m-1,j}(\mathbf{x},t,\boldsymbol{\theta}_b,\boldsymbol{\theta}_p) > 0 \big).
\end{equation}
With the construction above, we now have $M$ predicates
$\mu_j$, $j=1,\ldots,M$, spanning Categories~I--III.
Each predicate $\mu_j$ is associated with an STL subformula
$\varphi_j^{\mathrm{fea}}$ that enforces QP feasibility until deletion,
given by expression \eqref{eq:feasibility-f} and \eqref{eq:feasibility-g}.

To capture the control bounds \eqref{subeq:input-bound}, we introduce another STL formula defined over the control signal $\mathbf u$ which must hold for all $t\in[0,T]$:
\begin{equation}
\label{eq:pho-u}
\varphi^{\mathcal U}
:= 
G_{[0,T]}\big((\mathbf{u}(t)-\mathbf{u}_{\min}\ge \mathbf{0})
\wedge
(\mathbf{u}_{\max}-\mathbf{u}(t)\ge \mathbf{0})\big).
\end{equation}
Note that $\varphi_j^{fea}$ and $\varphi^{\mathcal U}$ in \eqref{eq:feasibility-f}, \eqref{eq:feasibility-g} and \eqref{eq:pho-u} are additional STL formulas other than the original STL specification defined in \eqref{eq:stl}, so they have different syntax. 

\noindent\textbf{Feasibility-aware STL robustness.}
We define an augmented STL specification that combines the original
task $\varphi$ with all feasibility conditions and control bounds:
\begin{equation}
\label{eq:pho7}
\widetilde{\varphi}:=
\varphi \wedge \varphi^{\mathcal U} \wedge
\bigwedge_{j=1}^M \varphi_j^{\mathrm{fea}}.
\end{equation}
The robustness of $\widetilde{\varphi}$ can be evaluated using the exponential
robustness in~\eqref{eq:pho1}--\eqref{eq:pho3}.  
We denote the resulting unified scalar robustness by
\begin{equation}
\label{eq:pho8}
\rho^{\mathrm{uni}}(\mathbf{x},0)
\;:=\;
\rho(\widetilde{\varphi},\mathbf{x},0),
\end{equation}
where the conjunction in~\eqref{eq:pho7} is handled via
$\mathcal{A}^{\exp}$ in~\eqref{eq:pho3} and the effective robustness
$\rho_j^{\mathrm{conj}}$ in~\eqref{eq:pho2}.
\begin{theorem}[Feasibility-Aware Correctness of BarrierNet Controllers]
\label{thm:fea-stl}
Consider the system~\eqref{eq:system1} and the STL specification $\varphi$ satisfying Assumptions~\ref{as:feasible} and \ref{as:circle}. Construct the HOCBFs and their feasibility subformulas $\varphi_j^{\mathrm{fea}}$ as in \eqref{eq:feasibility-f}, \eqref{eq:feasibility-g}, and define the augmented STL specification $\widetilde{\varphi}$ by combining $\varphi$ with all feasibility and control-bound subformulas as in~\eqref{eq:pho7}. Let a neural network controller be given whose last layer is the differentiable QP~\eqref{eq:barriernet}, which enforces all highest-order HOCBF constraints, deletion rules, and control bounds at all time.  
If the dQP~\eqref{eq:barriernet} remains feasible for all $t>0$, then the closed-loop trajectory generated by this controller guarantees the satisfaction of the original STL specification $\varphi$. 
Moreover, if each STL subformula
$\varphi_j^{\mathrm{fea}}$ is satisfied, i.e., $(\mathbf x,0)\models\bigwedge_{j=1}^M \varphi_j^{\mathrm{fea}}$, then the dQP~\eqref{eq:barriernet} remains feasible for all $t\ge0$.
\end{theorem}

\begin{proof}
By Theorems \ref{thm:dqp-hocbf} and \ref{thm:construct}, the controller satisfying \eqref{eq:barriernet} guarantees the resulting closed-loop trajectory satisfies the STL specification $\varphi$.
By satisfaction of each feasibility subformula $\varphi_j^{\mathrm{fea}}$, we have that $\psi_{m-1,j}(\mathbf{x},t,\boldsymbol{\theta}_b,\boldsymbol{\theta}_p)>0$ for all $t$ prior to deletion. By Theorem~\ref{thm:hierarchical-feasibility}, the dQP remains pointwise feasible at all times. 
\end{proof}
\begin{corollary}\label{cor:rho_uni_short}
If the conditions of Theorem~\ref{thm:fea-stl} hold, then
$\rho^{\mathrm{uni}}(\mathbf{x},0):=\rho(\widetilde{\varphi},\mathbf{x},0)\ge0$.
\end{corollary}
\begin{proof}
By Theorem~\ref{thm:fea-stl}, the stated controller conditions imply
$(\mathbf{x},0)\models \widetilde{\varphi}$. By soundness of exponential robustness,
$(\mathbf{x},0)\models \widetilde{\varphi}$ implies $\rho(\widetilde{\varphi},\mathbf{x},0)\ge0$,
i.e., $\rho^{\mathrm{uni}}(\mathbf{x},0)\ge0$.
\end{proof}

\begin{remark}
\label{rem:rem2}
 The original STL specification $\varphi$ is theoretically guaranteed in continuous time through HOCBF constraints embedded in the dQP formulation. However, in practice, the resulting QP is implemented in discrete time, which introduces inter-sampling effects \cite{liu2025sampling} and may affect both the continuous-time satisfaction of the specification and its discrete-time evaluation at the sampling instants. Therefore, the robustness of the original STL specification $\varphi$ in \eqref{eq:pho7} is monitored to reflect its satisfaction under discrete-time enforcement. The feasibility-related STL subformulas $\bigwedge_{j=1}^M \varphi_j^{\mathrm{fea}}$ are not enforced by hard constraints and thus are not strictly guaranteed. Instead, their robustness values, together with the satisfaction of the input-bound specification $\varphi^{\mathcal U}$, are incorporated as optimization objectives during training to guide and improve feasibility. Addressing inter-sampling effects is left for future work.
\end{remark}

\subsection{Learning Robust Controllers}
\label{subsec:learning}

Corollary \ref{cor:rho_uni_short} shows that maintaining a feasibility-aware STL robustness above zero indicates both (i) satisfaction of the STL specification and (ii) feasibility of the underlying QP at all times.
With this monitoring, the BarrierNet architecture in~\eqref{eq:barriernet}, integrated with a
model-based RL framework, can be used to synthesize a controller that
accounts for the feasibility-aware robustness during execution while 
gradually improving it through training. In this subsection, we first explain why in \eqref{eq:barriernet}, $b(\mathbf x,t,\bm\theta_b,\mathbf x_0)$ and $\mathbf{P}_{\text{inip}}(\mathbf{x}_0,\boldsymbol{\theta}_{\text{inip}})$ all depend on the initial condition $\mathbf x_0$. Then we describe the structure of the entire neural network controller $\pi(\mathbf x(t),\bm\theta)$ in \eqref{eq:nn-input} or $\pi(\mathbf x_{0:t},\bm\theta)$ in \eqref{eq:nn-input2}. Finally, we introduce the training process of the controller. 

\noindent\textbf{Hyperparameters Depending on the Initial Condition.}
Consider a predicate in Category~II or III with the associated HOCBF
$b(\mathbf{x},t,\bm\theta_b)=h(\mathbf{x})+\gamma(t,\boldsymbol{\omega})$.
The constraints \eqref{eq:f0} and \eqref{eq:g0} that shape the function
$\gamma(t,\boldsymbol{\omega})$ explicitly depend on the initial state
$\mathbf{x}_0$.  
Therefore, the hyperparameter vector $\boldsymbol{\omega}$ must be adapted to
each initial condition rather than fixed globally.

To accomplish this, we generate $\boldsymbol{\omega}$ using a neural
network that takes $\mathbf{x}_0$ as input:
$\boldsymbol{\omega}=\boldsymbol{\omega}(\mathbf{x}_0,\boldsymbol{\theta}_b)$.
Consequently, the resulting HOCBF also depends on the initial condition and
carries trainable parameters $\boldsymbol{\theta}_b$, which we denote by
$b(\mathbf{x},t,\boldsymbol{\theta}_b,\mathbf{x}_0)$.

To ensure that the HOCBF constraint hierarchy enforces set invariance, we must also
satisfy $\psi_i(\mathbf{x}_0,0,\boldsymbol{\theta}_b,\boldsymbol{\theta}_p)\ge 0$ for all $i=1,\ldots,m-1$.  
Since $b(\mathbf{x}_0,0,\bm\theta_b)>0$, \eqref{eq: psi2} implies that each
$\psi_i(\mathbf{x}_0,0,\boldsymbol{\theta}_b,\boldsymbol{\theta}_p)$ can be made nonnegative by selecting sufficiently
large multipliers $p_{i}^{\mathrm{inip}}$ at $t=0$. To ensure the satisfaction of the initial feasibility condition, the parameters $p_{i}^{\mathrm{inip}}$ are selected such that each $\psi_{m-1,j}(\mathbf{x}, t, \boldsymbol{\theta}_b, \boldsymbol{\theta}_p) = \epsilon_{j}>0$, where $\epsilon_{j}$ is a sufficiently small positive constant. 
Because these lower bounds on $p_{i}^{\mathrm{inip}}$ depend on the initial condition
$\mathbf{x}_0$, we generate them through a neural network that takes
$\mathbf{x}_0$ as input and is parameterized by
$\boldsymbol{\theta}_{\mathrm{inip}}$, then we have
$p_{i}^{\mathrm{inip}} = p_{i}^{\mathrm{inip}}(\mathbf{x}_0,\boldsymbol{\theta}_{\mathrm{inip}})$.
We denote the resulting initial multipliers collectively as
$\mathbf{p}_{j}= [p_{1,j}^{\mathrm{inip}},\ldots,p_{m-1,j}^{\mathrm{inip}}]^{\top}$ and $\mathbf{P}_{\text{inip}}=[\mathbf{p}_1^{\top},\ldots,\mathbf{p}_M^{\top}]\in\mathbb{R}^{(m-1)M}$. $\mathbf{P}_{\mathrm{inip}}$ serves as the initialization of
$p_{i,j}(\mathbf{x},\boldsymbol{\theta}_p,\mathbf{P}_{\mathrm{inip}})$ at $t=0$. 

\noindent\textbf{InitNet Structure.}
In practice, we use a single neural network, referred to as \emph{InitNet}, to
generate all parameters that depend on the initial condition $\mathbf{x}_0$:
\begin{equation}
    \label{eq:nn_init}
    [\bm\Omega\;\mathbf{P}_{\text{inip}}]
    = \mathfrak{N}(\mathbf{x}_0,\bm\theta_b,\boldsymbol{\theta}_{\mathrm{inip}}),
\end{equation}
where
$\bm\Omega=[\bm\omega_1^{\top},\ldots,\bm\omega_N^{\top}]\in\mathbb{R}^{2N},N\le M$ 
collects all hyperparameters defining the time-varying terms $\gamma_j(t,\bm\omega_j)$.
The network $\mathfrak{N}$ is parameterized by the trainable variables
$\bm\theta_b$ and $\boldsymbol{\theta}_{\mathrm{inip}}$.
All constraints on $\gamma_{j}$ in \eqref{eq:f}, \eqref{eq:g}, and
\eqref{eq:add_cons} are equivalently enforced as constraints on the output
$\bm\Omega$ of InitNet. For hyperparameter constraints of the form 
$\omega\in[\underline{\omega},\overline{\omega}]$, 
we enforce them by applying a \emph{Sigmoid} activation at the final layer of 
$\mathfrak N$. 
For one-sided constraints such as 
$\omega\in[\underline{\omega},\infty)$ or $\omega\in(-\infty,\overline{\omega}]$, 
we use a \emph{Softplus} activation (possibly combined with sign flipping) to 
guarantee positivity or negativity as required.  
With these activations, the generated vector $\bm\Omega$ automatically satisfies 
all constraints in~\eqref{eq:f},~\eqref{eq:g}, and~\eqref{eq:add_cons}. Similarly, for the multipliers $p_{i}^{\mathrm{inip}}$, $i=1,\ldots,m-1$, 
the initialization constraints
\begin{equation}
    \label{eq:p_init}
p_{i}^{\mathrm{inip}} > \max\!\Big\{
-\dot\psi_{i-1}(\mathbf{x}_0,0,\boldsymbol{\omega},\mathbf{p})\big/\alpha_i(\psi_{i-1}(\mathbf{x}_0,0,\boldsymbol{\omega},\mathbf{p})),\; 0
\Big\}
\end{equation}
are enforced by applying \emph{Softplus} activations to the corresponding 
output channels of $\mathfrak N$, ensuring that each $p_{i}^{\mathrm{inip}}$ satisfies its 
required lower bound. InitNet is only used at time $t=0$ to provide a set of HOCBFs and the initial multipliers for corresponding class $\kappa$ functions, which are then used to train $\bm\theta_b$ to obtain $p_{i,j}(\mathbf{x},\boldsymbol{\theta}_p,\mathbf{P}_{\mathrm{inip}})$.

\noindent\textbf{RefNet Structure.}
At each discrete time step, we employ a neural network (with or without memory), referred to
as \emph{RefNet}, parameterized by $(\boldsymbol{\theta}_f,\boldsymbol{\theta}_q)$---to generate the reference terms $
\mathbf{F}(\mathbf{x}_{0:t},\boldsymbol{\theta}_f),
\mathbf{Q}(\mathbf{x}_{0:t},\boldsymbol{\theta}_q)
$ for controller $\pi(\mathbf x_{0:t},\bm\theta)$ in \eqref{eq:nn-input2} or $
\mathbf{F}(\mathbf{x}(t),\boldsymbol{\theta}_f),
\mathbf{Q}(\mathbf{x}(t),\boldsymbol{\theta}_q)$ for controller $\pi(\mathbf x(t),\bm\theta)$ in \eqref{eq:nn-input}---which serve as reference inputs to the BarrierNet module.

\noindent\textbf{BarrierNet Structure.} In this paper, BarrierNet is implemented as a neural network that takes the system state $\mathbf{x}(t)$ as input and outputs the multipliers $p_{i,j}(\mathbf{x},\boldsymbol{\theta}_p,\mathbf{P}_{\mathrm{inip}})$ and the optimal control input $\mathbf{u}^*(t)$, where the network is parameterized by $\boldsymbol{\theta}_p$. The hyperparameters of the time-varying HOCBFs are obtained from the InitNet
$\mathfrak{N}(\mathbf{x}_0,\boldsymbol{\theta}_b,\boldsymbol{\theta}_{\mathrm{inip}})$, which
produces the vectors $\boldsymbol{\Omega}$ and $\mathbf{P}_{\mathrm{init}}$ that
encode all $\gamma$-function hyperparameters and the initial multipliers.

Given the outputs of RefNet and InitNet, the BarrierNet layer solves the differentiable QP in~\eqref{eq:barriernet} at each time step to produce the multipliers $p_{i,j}(\mathbf{x},\boldsymbol{\theta}_p,\mathbf{P}_{\mathrm{inip}})$ and the control input $\mathbf{u}^*(t)$. The control input $\mathbf{u}^*(t)$ is then applied sequentially to the system~\eqref{eq:system1}, allowing BarrierNet to roll out the closed-loop trajectory. Based on the resulting trajectory $\mathbf{x}_{0:t_{T}}$ and the corresponding multipliers $p_{i,j}(\mathbf{x},\boldsymbol{\theta}_p,\mathbf{P}_{\mathrm{inip}})$, the overall robustness $\rho(\tilde{\varphi},\mathbf{x}_{0:t_{T}},0)$ of the augmented STL specification, together with the associated cost $J(\mathbf{u}_{0:t_{T}-1})$, is evaluated.
This trajectory-level evaluation is then used to update the controller
parameters $\boldsymbol{\theta}$ during reinforcement training.
Overall, the neural network controller
$
\pi(\cdot,\boldsymbol{\theta})$, where 
$\boldsymbol{\theta}
=(\boldsymbol{\theta}_q,\boldsymbol{\theta}_f,\boldsymbol{\theta}_b,\boldsymbol{\theta}_{\text{inip}},\boldsymbol{\theta}_p),$
consists of  
(i) InitNet for initial hyperparameter generation,
(ii) RefNet for predicting $\mathbf{F}$ and $\mathbf{Q}$, and  
(iii) the BarrierNet dQP layer that enforces the HOCBF constraints.
The full architecture is illustrated in Fig.~\ref{fig:framework}.

\noindent\textbf{Training Neural Network Controller.}
Following the procedure in \cite{liu2023safe}, we randomly sample
$V$ initial conditions $\mathbf{x}_0^v$, $v=1,\ldots,V$.  
For each initial state, we roll out the closed-loop system
\eqref{eq:system1} under the controller $\pi$ up to the horizon $T$,
thereby generating $V$ state and control trajectories.  
For each trajectory, we compute the unified STL robustness and its associated
cost $J$, and use their empirical mean to approximate the expectation.
Thus, the optimization problem in~\eqref{eq:goal} can be written as
\begin{equation}
    \label{eq:cost}
    \begin{aligned}
       \bm\theta^* &= \arg\max_{\bm\theta} \frac{1}{V}\sum_{v=1}^V\big[\rho(\widetilde{\varphi},\mathbf x_{0:t_{T}}^v, 0) - J(\mathbf u_{0:t_{T}-1}^v)\big] \\
       \text{s.t.}\ & \dot{\mathbf x}^v = f(\mathbf{x}^v) + g(\mathbf{x}^v)\mathbf{u}(t),\ v=1,\ldots,V,\\
       &\mathbf u^v(t)=\pi(\mathbf x^v(t),\bm\theta)\ \text{or} \ \mathbf u^v(t)=\pi(\mathbf x^v_{0:t},\bm\theta),
    \end{aligned}
\end{equation}
where the superscript $v$ denotes the $v$-th sample.  
By substituting the system dynamics into the objective, the problem becomes an
unconstrained optimization over the parameters of the controller.  
Since the QP in \eqref{eq:barriernet} is differentiable with respect to its
inputs and parameters via the OptNet technique \cite{amos2017optnet}, we
backpropagate the gradient of the objective in \eqref{eq:cost} through the QP
layer to all components of $\bm\theta$.  
The gradients of the STL robustness are computed analytically and
automatically using a modified implementation of STLCG
\cite{leung2020backpropagation} adapted to the exponential robustness in
\cite{liu2023robust}.  
We then update the parameters using these gradients.  
At each gradient step, we resample $V$ initial conditions from the set
$\mathcal{X}_0$ to improve exploration of the initial-state distribution, and
we employ the stochastic optimizer Adam \cite{kingma2014adam} to perform the
training. We summarize our solution to Problem \ref{pb:1} in Algorithm \ref{alg:1}.

\begin{algorithm}
\KwIn{System dynamics \eqref{eq:system1}, control bounds \eqref{subeq:input-bound}, horizon $T$ and STL formula $\varphi$}
\KwOut{Robust and correct controller $\pi(\cdot,\bm\theta^*)$}
Construct HOCBFs from $\varphi$ using \eqref{eq:cat1}, \eqref{eq:cat23}\;
Set up constraints on $\bm\Omega$ using \eqref{eq:f}, \eqref{eq:g}, \eqref{eq:add_cons}\;
Set up constraints on $\mathbf{P}_{\mathrm{inip}}$ using \eqref{eq: psi2}\;
Initialize controller $\pi(\cdot,\bm\theta)$ including RefNet $\mathbf Q(\cdot,\bm\theta_q)$, $\mathbf{F}(\cdot,\bm\theta_f)$, InitNet \eqref{eq:nn_init} and BarrierNet (the dQP) \eqref{eq:barriernet}\;
Construct the augmented STL formula $\widetilde{\varphi}$ \eqref{eq:pho7}\;
\Repeat{Convergence; \Return $\bm\theta^*$}{
Sample $V$ initial conditions $x_0^v$\;
Obtain $\bm\Omega$, $\mathbf{P}_{\text{inip}}$ for each $x_0^v$ from InitNet \eqref{eq:nn_init}\;
Evaluate \eqref{eq:cost} by applying $\pi(\cdot,\bm\theta)$ to \eqref{eq:system1}\;
Compute gradient of \eqref{eq:cost} w.r.t. $\bm\theta$\;
Update $\bm\theta$ using Adam optimizer\;
}
 \caption{Construction and training of controller}\label{alg:1}
\end{algorithm}

\section{CASE STUDIES}
\label{sec:sim}
In this section, we demonstrate the efficacy of our approach through simulations on a 2D robot navigation problem and compare it with existing algorithms.
\subsection{Case Study I: Linear Dynamical System — Double Integrator}
\label{subsec:case 1}
Consider a robot with double integrator model in the form:
\begin{equation}
    \label{eq:robot}
    \begin{bmatrix} \dot{x}\\ \dot{y}\\ \dot{v}_x\\ \dot{v}_y\end{bmatrix} = \begin{bmatrix} v_x\\ v_y\\ 0\\ 0\end{bmatrix} + \begin{bmatrix} 0&0\\ 0&0\\ 1&0\\ 0&1\end{bmatrix} \begin{bmatrix}u_1\\u_2\end{bmatrix},
\end{equation}
where $\mathbf{x} = [x\ y\ v_x\ v_y]^\top$ and $\mathbf{u} = [u_1\ u_2]^\top$, with $[x\ y]^\top$ denoting the 2D position, $[v_x\ v_y]^\top$ the velocity, and $[u_1\ u_2]^\top$ the acceleration of the robot.
During the initialization of the neural network controller, control input bounds are enforced to ensure a proper warm start. However, in the subsequent optimization stage, the inputs are treated as unconstrained, while the STL specification still accounts for the actual input limits in \eqref{eq:pho-u}, and the cost function in \eqref{eq:cost} employs an $\ell_2$ penalty with a weight of $0.003$ to discourage large accelerations.
\subsubsection{Neural Network Controller without Memory}
\label{subsubsec:caseI1}
Consider the environment shown in Fig. \ref{fig:env}. $\mathbf x_0$ is uniformly sampled in the region $Init$ with zero velocity. We discretize the system with a time interval of $0.1s$. The task for the robot is given by an STL formula:
\begin{equation}
    \label{eq:task}
    \varphi = F_{[0,2]}Reg_1 \land F_{[2,5]}Reg_2 \land G_{[0,5]}(\neg Obs_3 \land \neg Obs_4),
\end{equation}
where $Reg_j$ indicates $R_j-\|l(\mathbf x) - \mathbf o_j\|_2\geq 0$, $j=1,2$, $l(\mathbf x) = [x\ y]^\top$. The control bounds given by an STL formula is
\begin{equation}
\label{eq:pho-u2}
\varphi^{\mathcal U}
= 
G_{[0,5]}\big((\mathbf{u}-\mathbf{u}_{\min}\ge \mathbf{0})
\wedge
(\mathbf{u}_{\max}-\mathbf{u}\ge \mathbf{0})\big),
\end{equation}
where $\mathbf{u}_{\min}=[-10\ -10]^{\top},$ and $\mathbf{u}_{\max}=[10\ 10]^{\top}.$ $Obs_j$ is a superellipse:
\begin{equation}
    \label{eq:obs}
    1-\sqrt[4]{(\frac{x-o_{x,i}}{a_i})^4 + (\frac{y-o_{y,i}}{b_i})^4}\geq 0,\ j=3,4.
\end{equation}
Here, $Reg_j$ belongs to Category II and $Obs_j$ belongs to Category I for $j=1,2,3,4$.
In plain terms, the STL formula $\varphi$ requires the robot to reach $Reg_1$ within the time interval $[0,2]$ and $Reg_2$ within $[2,5]$, while avoiding obstacles $Obs_3$ and $Obs_4$ at all times.
The time horizon of $\varphi$ is $5$.
All four predicates have a relative degree of $2$ with respect to system~\eqref{eq:robot}. In this example, we set $\mathbf Q(\mathbf x(t), \bm\theta_q)$ to be an identity matrix, so for all $t>0$ the output of the previous layers reduces to $\mathbf F(\mathbf x(t), \bm\theta_f)$, which serves as a reference control.
Because the task does not require back-and-forth motions, a memory-based structure is unnecessary for generating $\mathbf F$.
Accordingly, RefNet, InitNet, and BarrierNet are implemented as feedforward neural networks with three fully connected layers each.
For the robustness measure, we adopt the exponential robustness in \eqref{eq:pho2} and \eqref{eq:pho3}.

\begin{figure}[t]
    \centering
    \begin{subfigure}[b]{0.22\textwidth}
        \centering        \includegraphics[width=\textwidth]{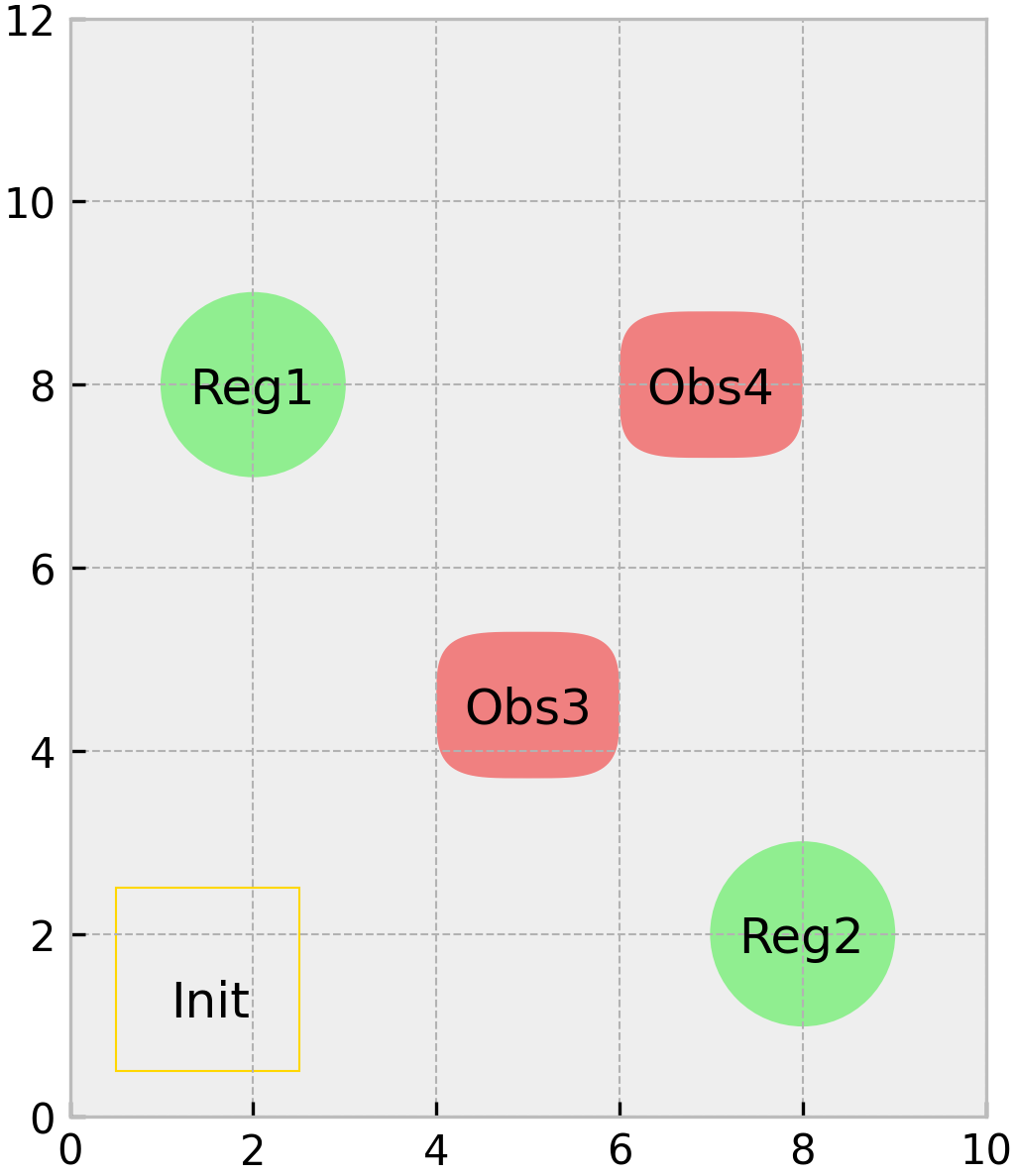}
        \caption{\small Environment}
        \label{fig:env}
    \end{subfigure}
    \hfill
    \begin{subfigure}[b]{0.26\textwidth}
        \centering        \includegraphics[width=\textwidth]{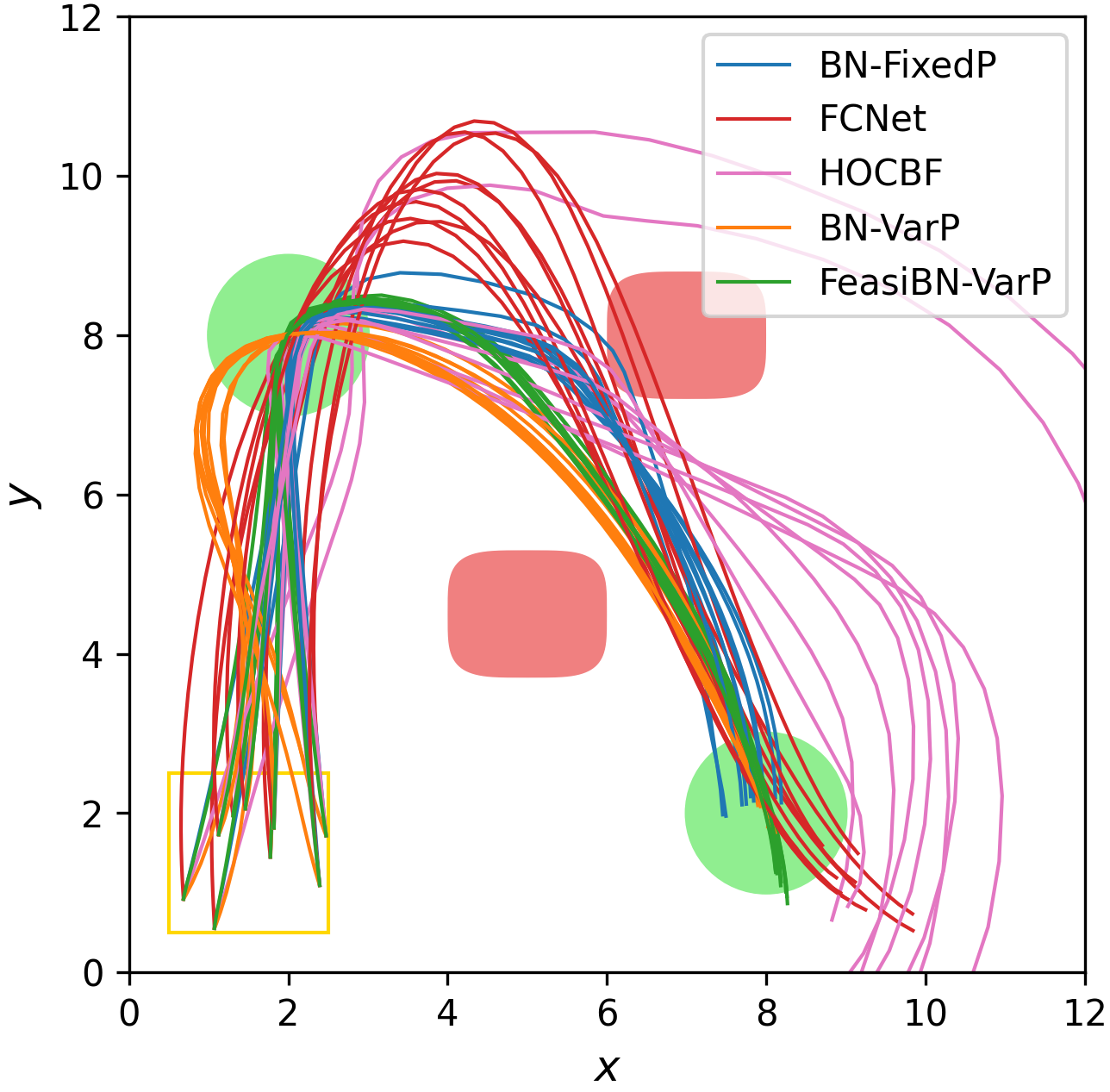}
        \caption{\small Trajectories}
        \label{fig:traj}
    \end{subfigure}
    \caption{(a) The 2D environment. (b) 10 trajectories with sampled initial conditions using BarrierNet (fixed multipliers (blue), time-varying multipliers (orange and green), feasibility-aware (green)), FCNet (red), and HOCBF (pink).}
    \label{fig:trajectories}
\end{figure}
\noindent\textbf{Comparison setup.} We construct the HOCBFs \eqref{eq:cat1}, \eqref{eq:cat23} and train the controller proposed in this paper under two different configurations.
(i) In the first configuration, the multipliers $p_{i,j}$ are time-varying, while the feasibility subformulas \eqref{eq:feasibility-f}, \eqref{eq:feasibility-g} are not incorporated; we refer to this controller as BN-VarP.
(ii) In the second configuration, the multipliers $p_{i,j}$ are time-varying and the feasibility subformulas \eqref{eq:feasibility-f}, \eqref{eq:feasibility-g} are incorporated; we refer to this controller as FeasiBN-VarP. The position state $[x\ y]$ serves as the input to the trained multiplier $p_{i,j}([x\ y], \boldsymbol{\theta}_p, \mathbf{P}_{\text{init}} )$.

For comparison, we include our previous work \cite{liu2023learning}, where the multipliers $p_{i,j}$ remain constant within each rollout; we call this controller BN-FixedP. Another benchmark is \cite{liu2023safe}, in which a neural network controller without BarrierNet—i.e., a standard Fully Connected Neural Network (FCNet)—is trained to satisfy an STL task. This is equivalent to directly using the reference control $\mathbf F$. We refer to this controller as FCNet. To ensure a fair comparison, we assume that the system dynamics are known for \cite{liu2023safe}, and we use the same objective function, optimizer, and neural network architectures; that is, BN-FixedP and FCNet share the same architecture as $\mathbf F(\mathbf x, \bm\theta_f)$.

The training curves for all learning-based methods are shown in Fig.~\ref{fig:curves}.
In addition, we apply the method in \cite{lindemann2018control} (extended to the HOCBF setting) without learning: we construct HOCBFs with fixed hyperparameters and solve the QP \eqref{eq:barriernet} with $\mathbf F = \mathbf 0$. The hyperparameters are randomly chosen but satisfy all constraints \eqref{eq:f}, \eqref{eq:g}, and \eqref{eq:add_cons}. We refer to this baseline as HOCBF. The resulting average objective values and robustness (evaluated only for the original task $\varphi$ and the control-bound specification $\varphi^{\mathcal U}$) over 10 random initial conditions are shown in Fig.~\ref{fig:curves}, where the HOCBF baseline is illustrated by pink dashed lines. Representative trajectories obtained by the five approaches under 10 random initial conditions are shown in Fig.~\ref{fig:traj}. The inputs over time for the five methods are shown in Fig.~\ref{fig:inputs}. We select the second task, $F_{[2,5]}Reg_2$ ($j=2$), and the fourth task, $G_{[0,5]}(\neg Obs_4)$ ($j=4$), to illustrate the corresponding $p_{i,j}$ values ($i=1,2$) for the five methods in Fig. \ref{fig:multipliers}.

\begin{figure}
     \centering
     \begin{subfigure}[b]{0.235\textwidth}
         \centering
         \includegraphics[height=4.8cm]{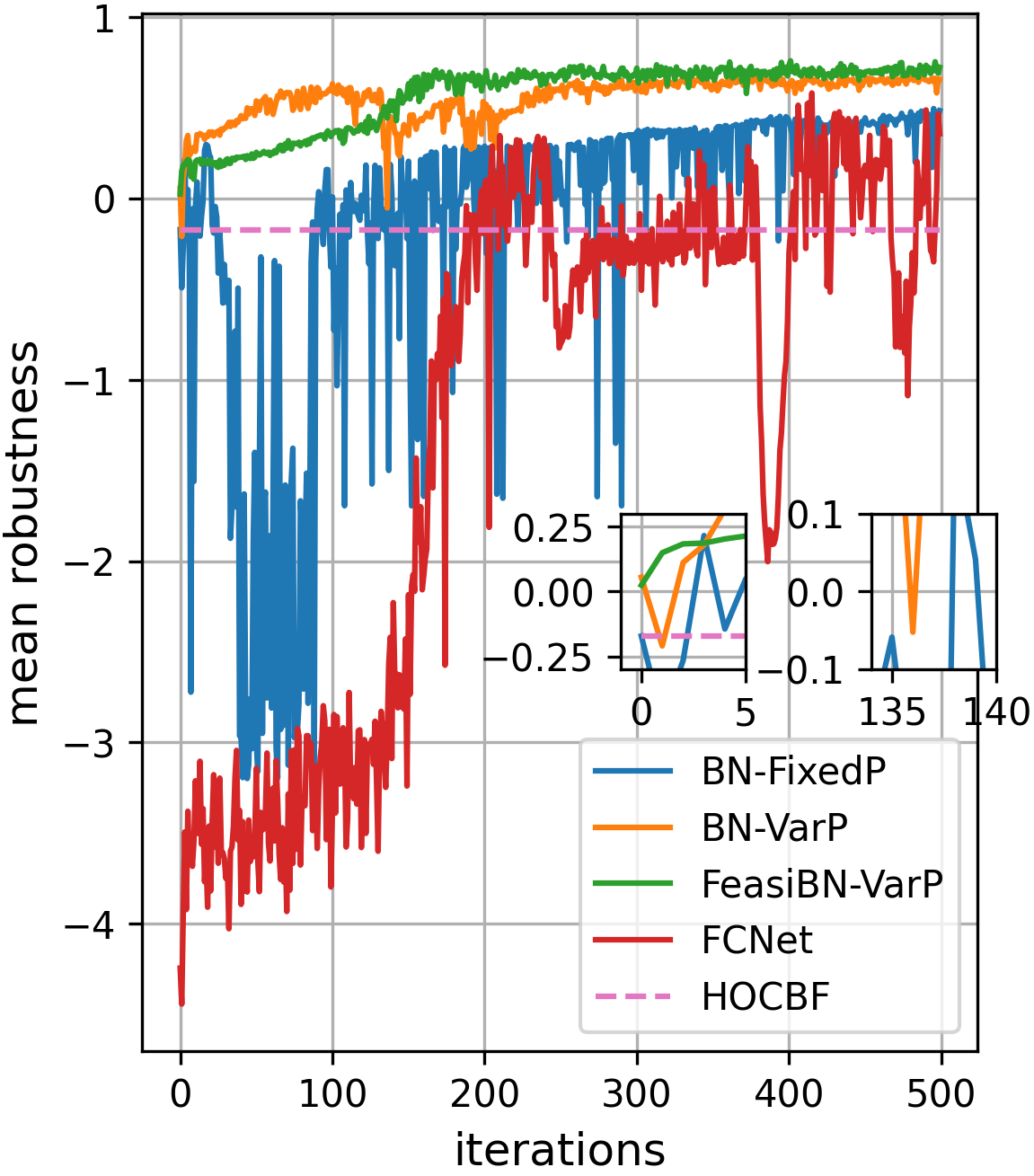}
         \caption{\small Robustness}
         \label{fig:ro}
     \end{subfigure}
     \ 
     \begin{subfigure}[b]{0.235\textwidth}
         \centering
         \includegraphics[height=4.8cm]{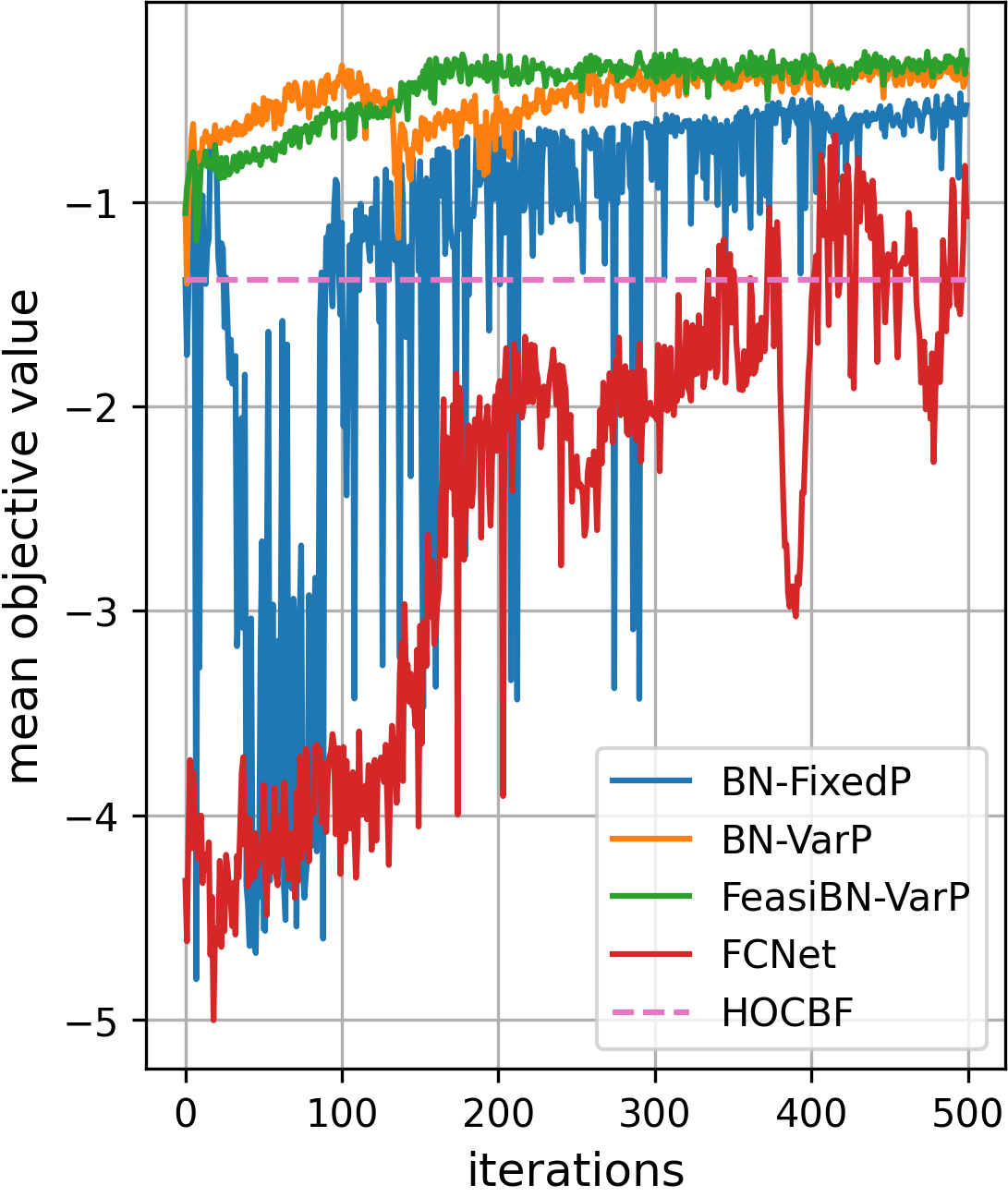}
         \caption{\small Objective}
         \label{fig:ob}
     \end{subfigure}
    \caption{Learning curves for BN-FixedP, BN-VarP, FeasiBN-VarP, and FCNet. Dashed lines show the results of directly using HOCBFs. (a) mean robustness $\rho(\varphi \wedge \varphi^{\mathcal U},\mathbf x, 0)$ during training. (b) mean objective values during training. }
\label{fig:curves}
\end{figure}
\begin{figure}
    \centering
    \begin{subfigure}[b]{0.24\textwidth}
        \centering
        \includegraphics[width=\textwidth]{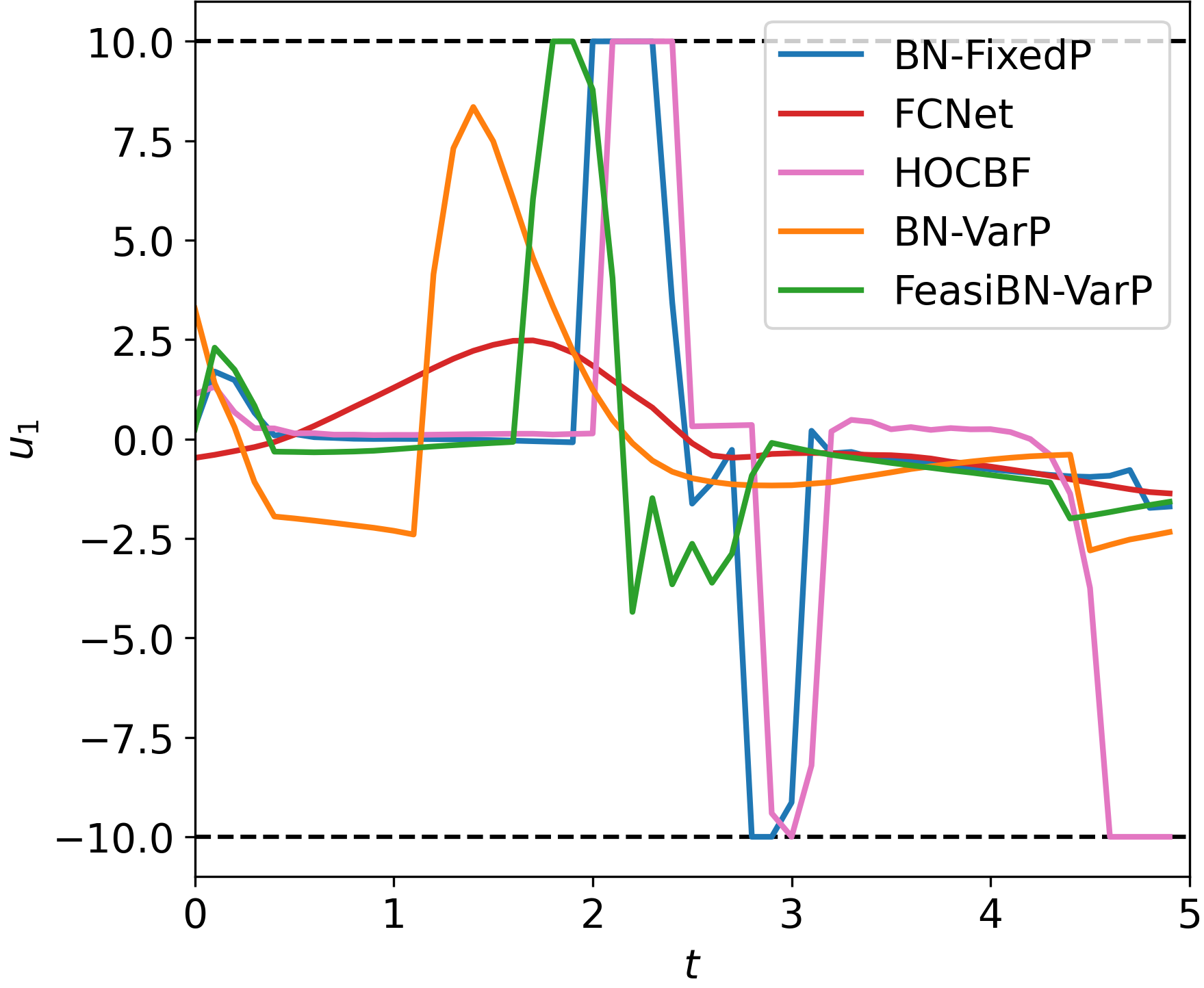}
        \caption{\small $u_1(t)$}
        \label{fig:input1}
    \end{subfigure}
    \hfill
    \begin{subfigure}[b]{0.24\textwidth}
        \centering
        \includegraphics[width=\textwidth]{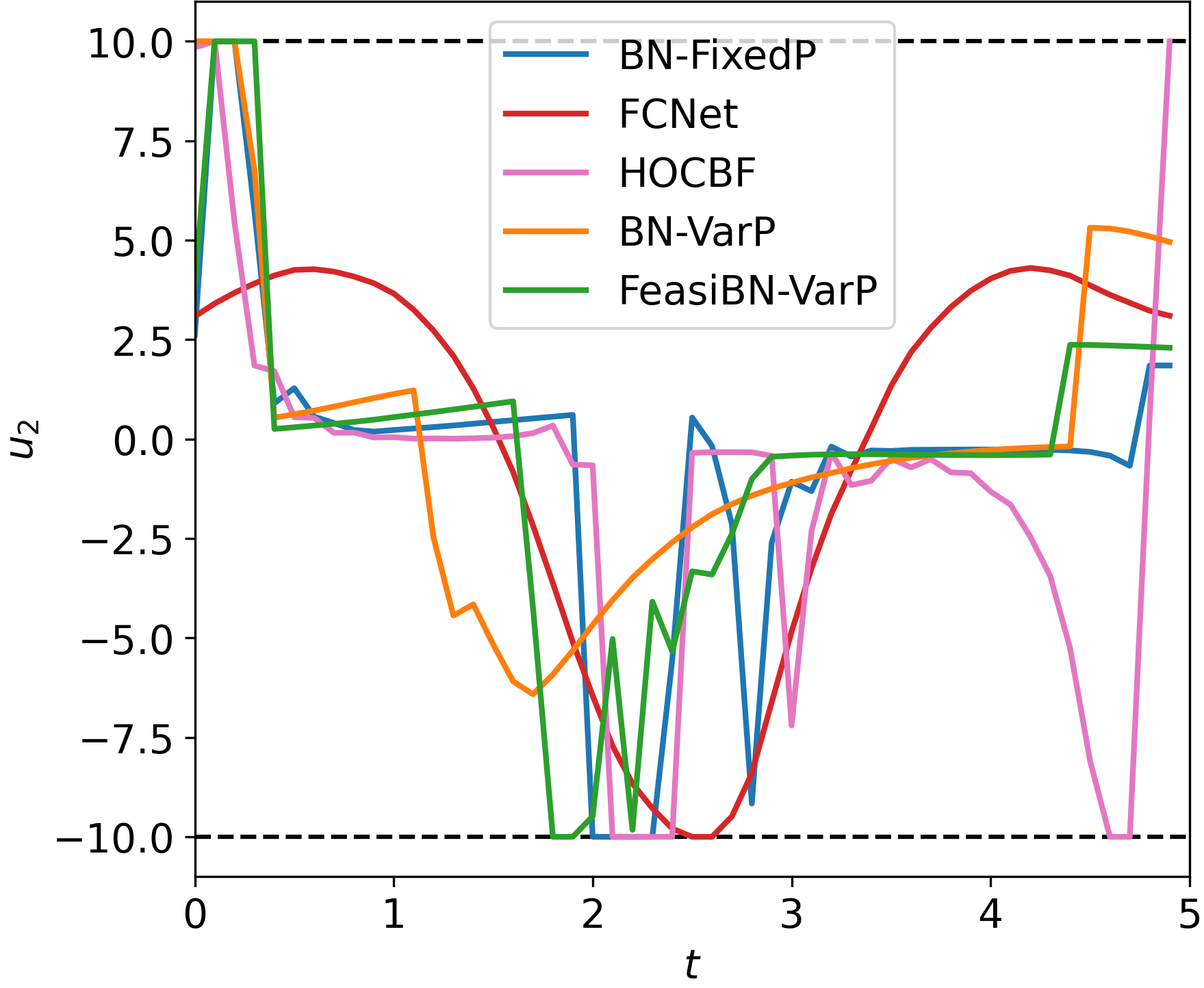}
        \caption{\small $u_2(t)$}
        \label{fig:input2}
    \end{subfigure}
    \caption{Inputs over time for BN-FixedP, BN-VarP, FeasiBN-VarP, FCNet and HOCBF.}
\label{fig:inputs}
\end{figure}
\noindent\textbf{Analysis and Discussion.} From Fig.~\ref{fig:ro}, we observe that when using BarrierNet (BN-FixedP, BN-VarP, and FeasiBN-VarP), all three methods achieve high robustness after sufficient training iterations. Specifically, BN-FixedP reaches a maximum robustness of approximately 0.45 during 500 iterations, while BN-VarP and FeasiBN-VarP reach about 0.6 and 0.7, respectively. Moreover, the robustness obtained by each BarrierNet-based method is higher than that of FCNet at nearly every iteration, demonstrating the correctness of Theorem \ref{thm:fea-stl}.
Occasionally, FCNet achieves higher robustness than BN-FixedP. This occurs because fixed multipliers limit the performance of BarrierNet, especially when the STL specification involves control bounds. In contrast, the use of time-varying multipliers in BN-VarP and FeasiBN-VarP significantly improves robustness, confirming that time-varying multipliers enhance the satisfaction of STL specifications, as claimed in Rem. \ref{rem:rem1}.
We further note that BN-VarP produces negative robustness at certain iterations, and further inspection reveals that the controller fails to satisfy the control-bound–related specification at those times. In comparison, FeasiBN-VarP maintains positive robustness throughout training, showing that the feasibility-aware formulation effectively guides the QP toward satisfying all constraints.
\begin{figure*}[t]
    \centering
    \begin{subfigure}[t]{0.45\linewidth}
        \centering        \includegraphics[width=1.0\linewidth]{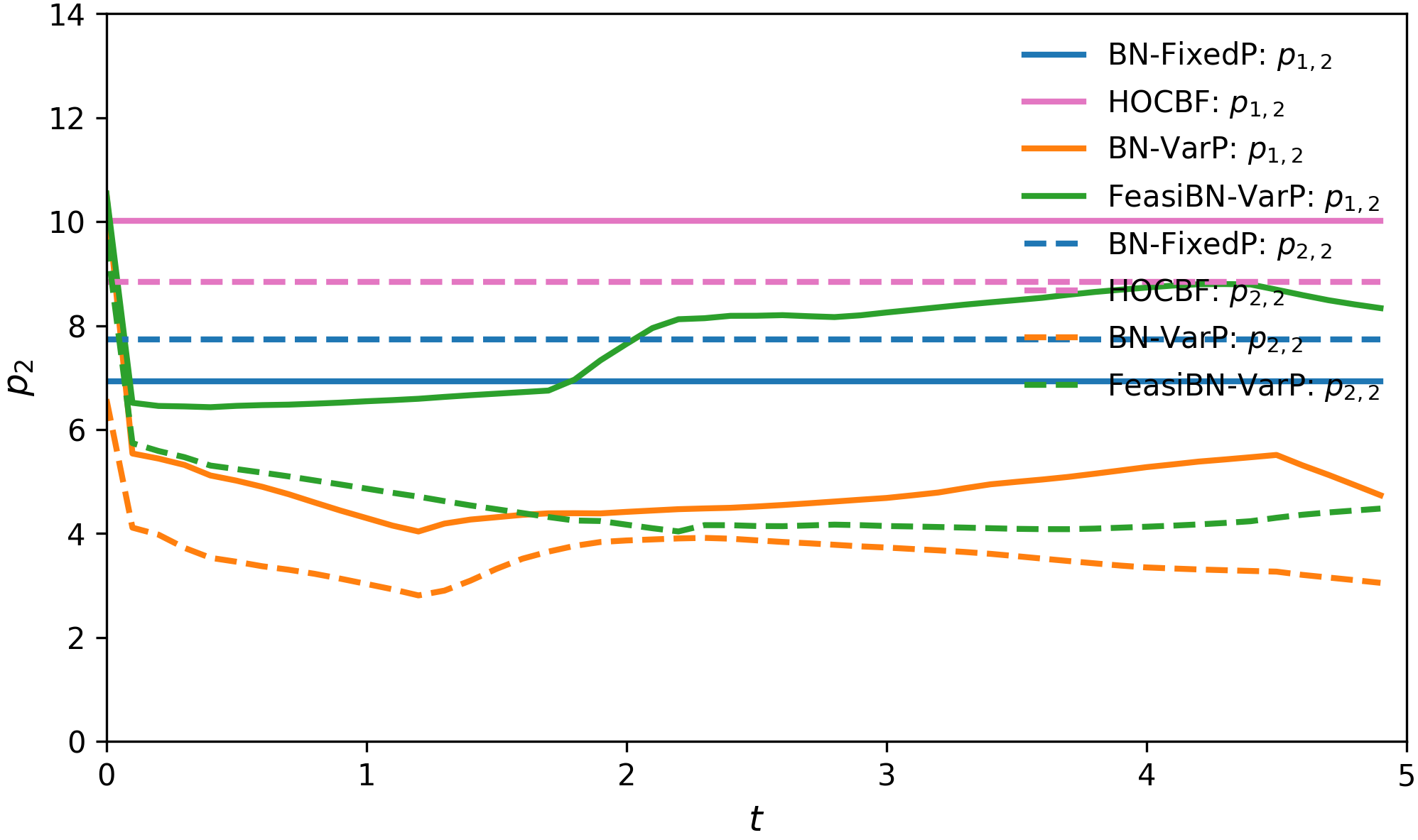}
        \caption{$p_{1,2}(t),\ p_{2,2}(t)$ for $F_{[2,5]}Reg_2$}
        \label{subfig:p2}
    \end{subfigure}
    \begin{subfigure}[t]{0.45\linewidth}
        \centering        \includegraphics[width=1.0\linewidth]{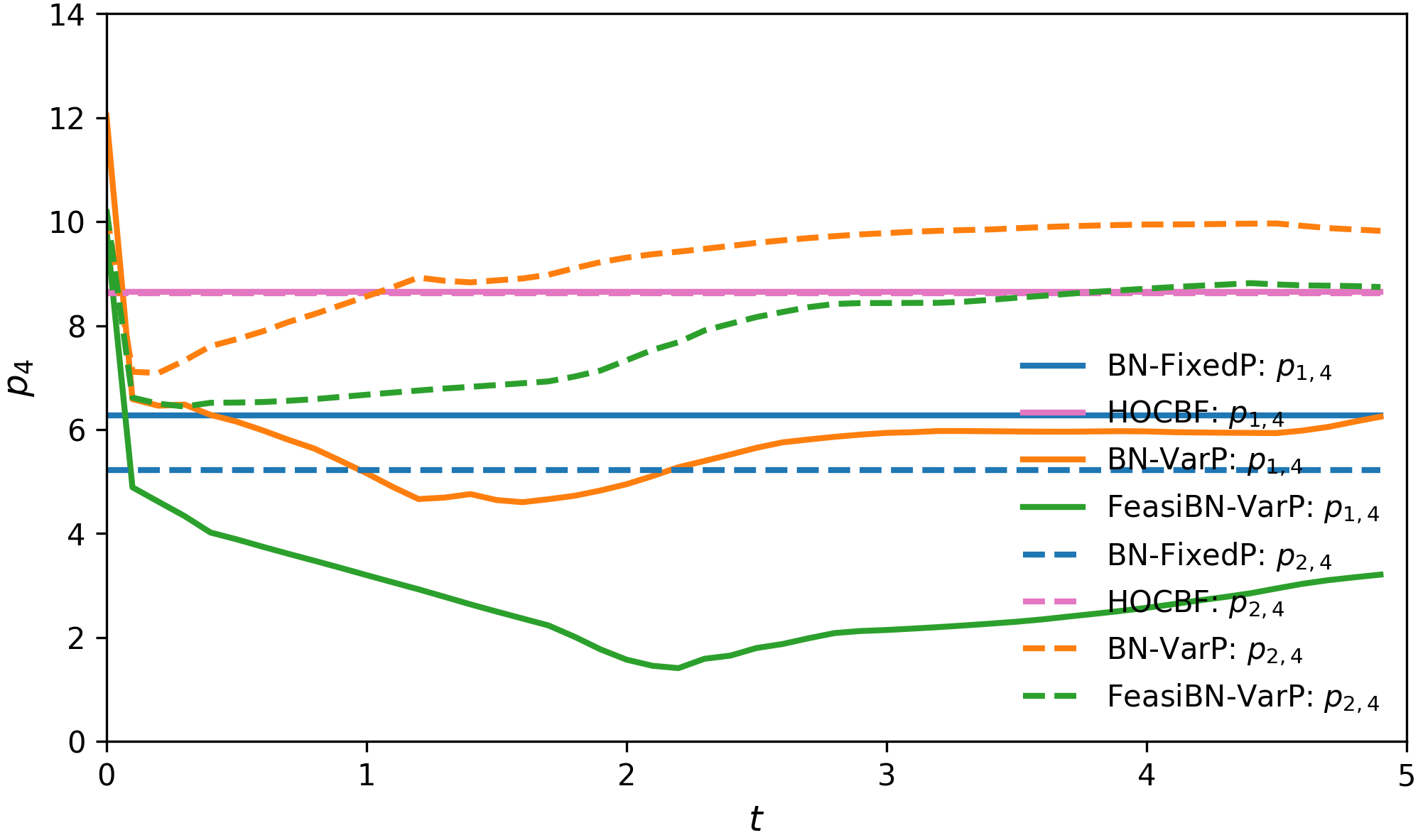}
        \caption{$p_{1,4}(t),\ p_{2,4}(t)$ for $G_{[0,5]}(\neg Obs_4)$}
        \label{subfig:p4}
    \end{subfigure}
    \caption{Multipliers over time for BN-FixedP, BN-VarP, FeasiBN-VarP, FCNet and HOCBF.}
    \label{fig:multipliers}
\end{figure*}

The results of directly applying HOCBFs with randomly chosen hyperparameters are similar to those obtained by using BN-FixedP with an untrained neural network (i.e., at the first training iteration). Due to the inter-sampling issue discussed in Rem. \ref{rem:rem2}, this approach does not satisfy the specification well: the robustness is negative, and although the robot reaches $Reg_1$, it fails to reach $Reg_2$. Consequently, this baseline is less robust than the trained BarrierNet and FCNet methods. As shown in Fig.~\ref{fig:traj}, after training, the robot reaches the centers of $Reg_1$ and $Reg_2$ in the correct order using both BarrierNet and FCNet. However, several trajectories produced by FCNet and BN-FixedP pass through $Obs_4$, whereas all trajectories generated by BN-VarP and FeasiBN-VarP satisfy every specification. This again demonstrates the superior robustness provided by time-varying multipliers in BarrierNet.

In both Fig.~\ref{fig:inputs} and Fig.~\ref{fig:multipliers}, we plot the input and multiplier profiles from a single representative trajectory chosen from the 10 trajectories. From Fig.~\ref{fig:inputs}, we observe that after training, both the BarrierNet-based methods and FCNet produce control inputs that satisfy the input bounds. The input profiles of BN-FixedP closely resemble those of the HOCBF baseline after training, as Fig.~\ref{fig:multipliers} shows that the learned multipliers of BN-FixedP are constant over time and close to the fixed HOCBF multipliers after training. These results indicate that fixed multipliers substantially limit the ability of BarrierNet to improve performance. In contrast, BN-VarP and FeasiBN-VarP adjust the inputs earlier and more aggressively to satisfy the specifications, as shown in Fig.~\ref{fig:input2}. This behavior is consistent with Fig.~\ref{fig:multipliers}, where the time-varying multipliers adapt across time steps according to the control needs. Such adaptability enables these methods to achieve higher robustness.

\subsubsection{Neural Network Controller with Memory}
\label{subsubsec:caseI2}
Consider the environment shown in Fig. \ref{fig:trajectories2}. $\mathbf x_0$ is uniformly sampled in the region $Init$ with zero velocity. We discretize the system with a time interval of $0.1s$. The task for the robot is given by an STL formula:
\begin{equation}
    \label{eq:task2}
    \varphi = F_{[0,5]}Reg_1 \land F_{[5,10]}Reg_2 \land G_{[0,10]}(\neg Obs_3 \land \neg Obs_4 \land\neg Obs_5),
\end{equation}
where $Reg_j$ indicates $R_j-\|l(\mathbf x) - \mathbf o_j\|_2\geq 0$, $j=1,2$, $l(\mathbf x) = [x\ y]^\top$. The control bounds given by an STL formula is
\begin{equation}
\label{eq:pho-u3}
\varphi^{\mathcal U}
= 
G_{[0,10]}\big((\mathbf{u}-\mathbf{u}_{\min}\ge \mathbf{0})
\wedge
(\mathbf{u}_{\max}-\mathbf{u}\ge \mathbf{0})\big),
\end{equation}
where $\mathbf{u}_{\min}=[-10\ -10]^{\top},$ and $\mathbf{u}_{\max}=[10\ 10]^{\top}.$ $Obs_j$ is a superellipse defined by \eqref{eq:obs} where $j=3,4,5$.
Here, $Reg_j$ belongs to Category II and $Obs_j$ belongs to Category I for $j=1,2,3,4,5$. In plain terms, the STL formula $\varphi$ requires the robot to reach $Reg_1$ within the time interval $[0,5]$ and $Reg_2$ within $[5,10]$, while avoiding obstacles $Obs_3$, $Obs_4$ and $Obs_5$ at all times. The time horizon of $\varphi$ is $10$. All five predicates have a relative degree of $2$ with respect to system~\eqref{eq:robot}. In this example, we set $\mathbf Q(\cdot, \bm\theta_q)$ to be an identity matrix, so for all $t>0$ the output of the previous layers reduces to $\mathbf F(\cdot, \bm\theta_f)$, which serves as a reference control.
Because the task requires the robot to visit $Reg_1$ and then return to $Reg_2$ (within the $Init$ region), the motion involves a back-and-forth behavior. Therefore, we evaluate both memory-based and memoryless structures for generating $\mathbf F$.
For the memoryless structure, RefNet is implemented as a feedforward neural network with three fully connected layers.
For the memory-based structure, RefNet is implemented as a 2-layer recurrent neural network (LSTM).
In both cases, InitNet and BarrierNet are implemented as feedforward neural networks with three fully connected layers each.
For the robustness measure, we adopt the exponential robustness in \eqref{eq:pho2} and \eqref{eq:pho3}.

\noindent\textbf{Comparison setup.} We construct the HOCBFs \eqref{eq:cat1}, \eqref{eq:cat23} and train the controller proposed in this paper under two different configurations. We directly adopt the FeasiBN-VarP configuration described in Sec. \ref{subsubsec:caseI1}. 
For comparison, we evaluate two types of reference inputs generated by RefNet: 
a memoryless FCNet producing $\mathbf{F}(\mathbf{x}(t), \boldsymbol{\theta}_f)$, 
and a memory-based RNN producing $\mathbf{F}(\mathbf{x}_{0:t}, \boldsymbol{\theta}_f)$. 
To ensure a fair comparison, we use the same objective function, optimizer, 
and InitNet architectures across all experiments.

\begin{figure}[t]
    \centering
    \begin{subfigure}[b]{0.24\textwidth}
        \centering        \includegraphics[width=\textwidth]{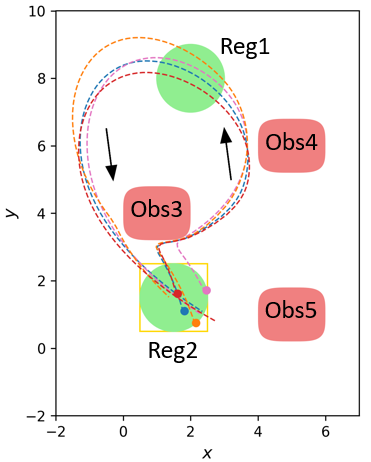}
        \caption{\small FCNet-generated  $\mathbf{F}(\mathbf{x}(t), \boldsymbol{\theta}_f)$}
        \label{fig:traj2}
    \end{subfigure}
    \hfill
    \begin{subfigure}[b]{0.24\textwidth}
        \centering        \includegraphics[width=\textwidth]{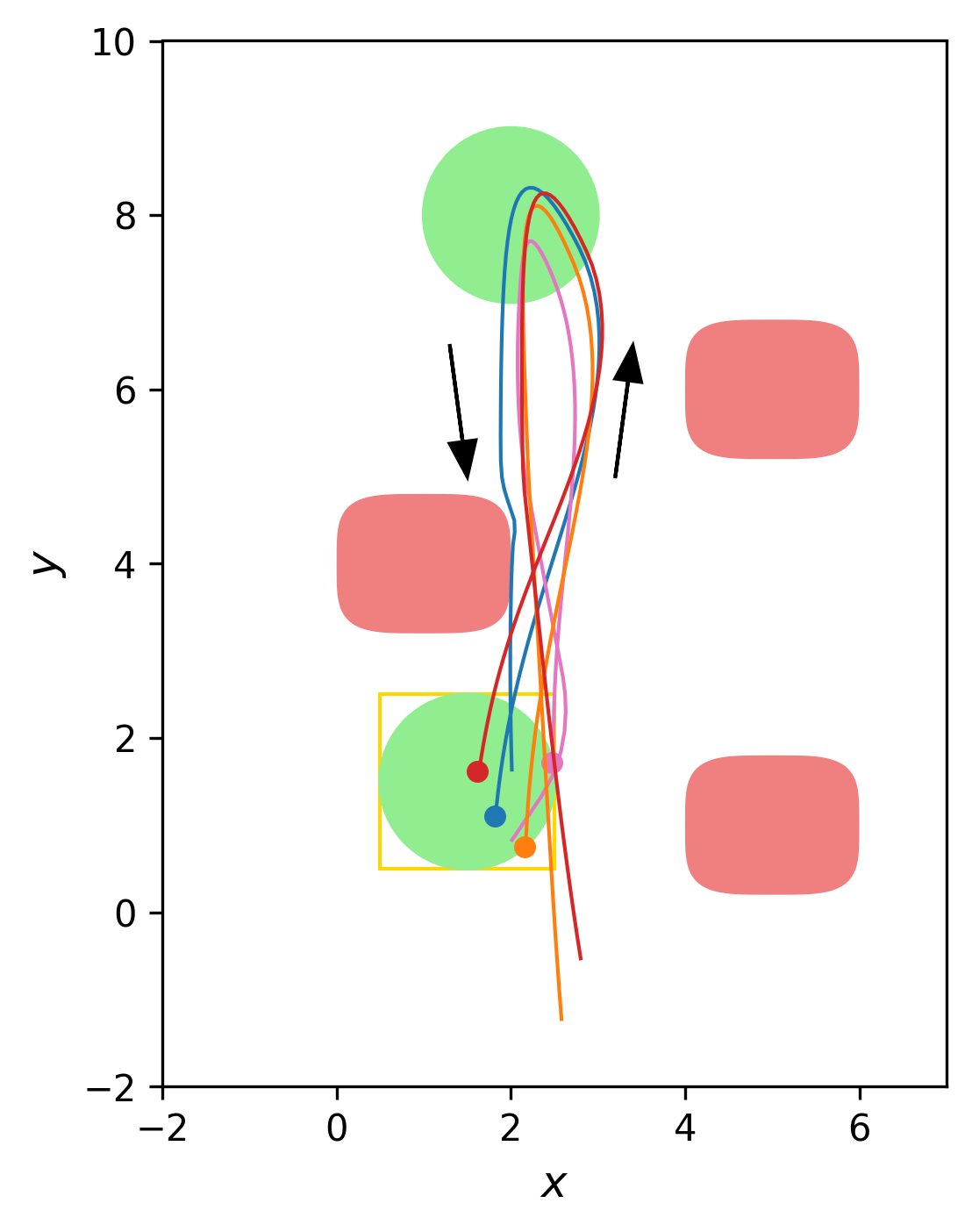}
        \caption{\small RNN-generated  $\mathbf{F}(\mathbf{x}_{0:t}, \boldsymbol{\theta}_f)$}
        \label{fig:traj3}
    \end{subfigure}
    \caption{4 trajectories from
sampled initial conditions under FeasiBN-VarP (a) without memory and (b) with memory.}
\label{fig:trajectories2}
\end{figure}
\noindent\textbf{Analysis and Discussion.} From Fig.~\ref{fig:trajectories2}, we observe that both the memoryless and the memory-based neural controllers satisfy the STL specifications. For the memoryless controller (Fig.~\ref{fig:traj2}), the robot leaves the initial area, avoids $Obs_3$ and $Obs_4$, reaches $Reg_1$, and then approaches $Reg_2$ from the opposite side of $Obs_3$, resulting in a relatively long overall trajectory. In contrast, under the memory-based controller (Fig.~\ref{fig:traj3}), the robot reaches $Reg_1$ in a similar manner but subsequently moves to $Reg_2$ from the same side of $Obs_3$, producing a noticeably shorter trajectory.

This difference arises because, for back-and-forth motions, a controller equipped with state-history memory can implicitly retain information about the previously visited regions and the path taken to reach them. By leveraging this temporal context, the memory-based network can reason about where the robot has already been and avoid unnecessary detours when planning the return path. In contrast, a purely memoryless controller relies only on the instantaneous state and lacks awareness of past motion, making it more prone to selecting longer but still feasible routes. The input profiles in Fig. \ref{fig:inputs2} reveal a clear difference between the memoryless and memory-based controllers. The dashed curves (memoryless) exhibit sharp oscillations, large instantaneous jumps, and frequent excursions toward the input limits. In contrast, the solid curves (memory-based) demonstrate smoother and more regulated input evolutions. This indicates that incorporating state-history memory enables the controller to make more informed adjustments, avoiding abrupt reactions to local state changes. As a result, the memory-based controller tends to reduce unnecessary control effort while maintaining stable and feasible inputs throughout the trajectory.

The memory capabilities are particularly beneficial for STL specifications that involve sequential or repetitive tasks, such as visiting regions in a prescribed order (e.g., visiting $Reg_1$ first and then $Reg_2$), alternating reachability requirements (back-and-forth missions), tasks that require remembering whether a region has already been visited, and avoiding previously traversed areas (e.g., “avoid returning to the same unsafe side”). For these specifications, the temporal dependencies cannot be inferred from a single snapshot of the current state, and therefore memory-based controllers can produce more efficient trajectories while still satisfying STL constraints.

\begin{figure}[t]
    \centering
    \begin{subfigure}[b]{0.2\textwidth}
        \centering
        \includegraphics[width=\textwidth]{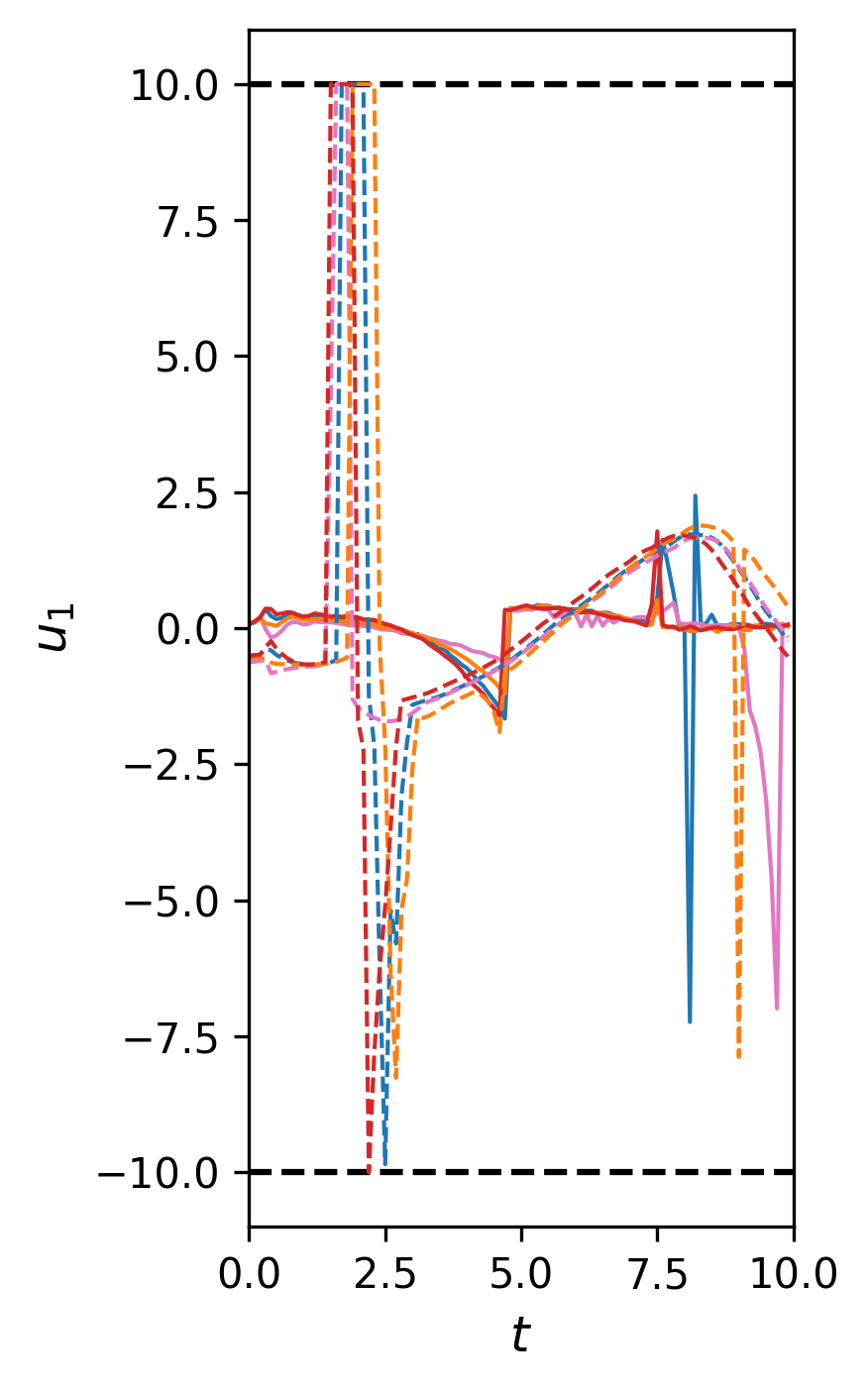}
        \caption{\small $u_1(t)$}
        \label{fig:input3}
    \end{subfigure}
    \hspace{0.02\textwidth}
    \begin{subfigure}[b]{0.2\textwidth}
        \centering
        \includegraphics[width=\textwidth]{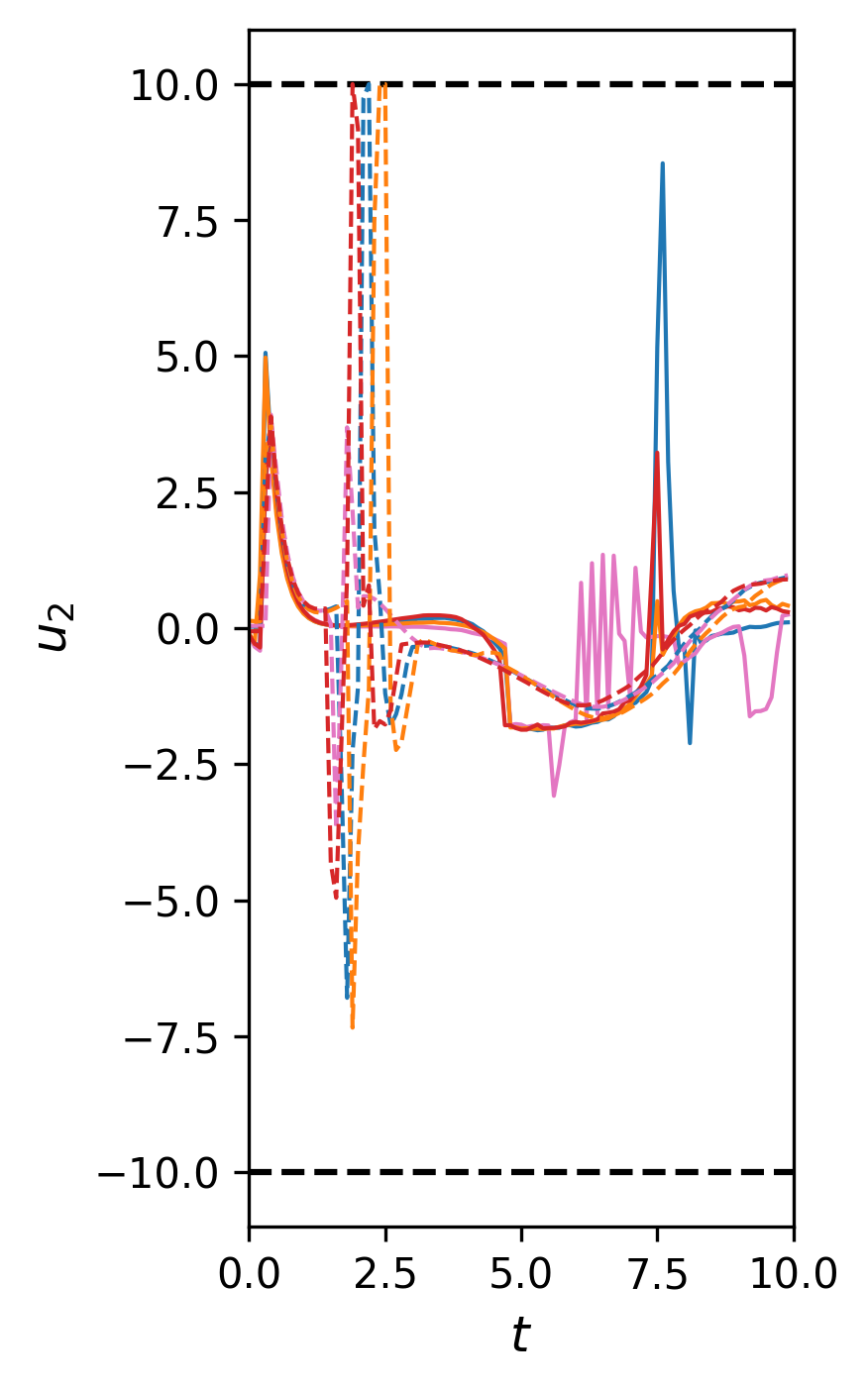}
        \caption{\small $u_2(t)$}
        \label{fig:input4}
    \end{subfigure}
    \caption{Inputs over time for 4 trajectories from sampled initial conditions under FeasiBN-VarP (solid curves: controller with memory; dashed curves: controller without memory).}
    \label{fig:inputs2}
\end{figure}

\subsection{Case Study II: Nonlinear Dynamical System — Unicycle} 
Consider a robot with unicycle model in the form:
\begin{equation}
    \label{eq:robot2}
    \begin{bmatrix} \dot{x}\\ \dot{y}\\ \dot{\theta}\\ \dot{v}\end{bmatrix} = \begin{bmatrix} v\cos(\theta) \\ v\sin(\theta)\\ 0\\ 0\end{bmatrix} + \begin{bmatrix} 0&0\\ 0&0\\ 1&0\\ 0&1\end{bmatrix} \begin{bmatrix}u_1\\u_2\end{bmatrix},
\end{equation}
where $\mathbf{x} = [x\ y\ \theta\ v]^\top$ and $\mathbf{u} = [u_1\ u_2]^\top$, with $[x\ y]^\top$ denoting the 2D position, $\theta$ the heading angle, $v$ the linear speed, $u_1$ the angular velocity, and $u_2^\top$ the acceleration of the robot. The control input treatment and cost function settings follow those in Sec.~\ref{subsec:case 1}.

Consider the environment shown in Fig. \ref{fig:uni-fc}. $\mathbf x_0$ is uniformly sampled in the region $Init$ with zero velocity. We discretize the system with a time interval of $0.1s$. The task for the robot is given by an STL formula:
\begin{equation}
    \label{eq:task3}
    \varphi = F_{[0,3]}Reg_1 \land F_{[3,6]}Reg_2 \land G_{[0,6]}(\neg Obs_3 \land \neg Obs_4 \land \neg Obs_5),
\end{equation}
where $Reg_j$ indicates $R_j-\|l(\mathbf x) - \mathbf o_j\|_2\geq 0$, $j=1,2$, $l(\mathbf x) = [x\ y]^\top$. The control bounds given by an STL formula is
\begin{equation}
\label{eq:pho-u4}
\varphi^{\mathcal U}
= 
G_{[0,6]}\big((\mathbf{u}-\mathbf{u}_{\min}\ge \mathbf{0})
\wedge
(\mathbf{u}_{\max}-\mathbf{u}\ge \mathbf{0})\big),
\end{equation}
where $\mathbf{u}_{\min}=[-10\ -10]^{\top},$ and $\mathbf{u}_{\max}=[10\ 10]^{\top}$. $Obs_j$ is a superellipse defined by \eqref{eq:obs} where $j=3,4,5$.
Here, $Reg_j$ belongs to Category II and $Obs_j$ belongs to Category I for $j=1,2,3,4,5$.
In plain terms, the STL formula $\varphi$ requires the robot to reach $Reg_1$ within the time interval $[0,3]$ and $Reg_2$ within $[3,6]$, while avoiding obstacles $Obs_3$, $Obs_4$ and $Obs_5$ at all times.
The time horizon of $\varphi$ is $6$.
All five predicates have a relative degree of $2$ with respect to system~\eqref{eq:robot}. In this example, we set $\mathbf Q(\mathbf x(t), \bm\theta_q)$ to be an identity matrix, so for all $t>0$ the output of the previous layers reduces to $\mathbf F(\mathbf x(t), \bm\theta_f)$, which serves as a memoryless reference control. We directly adopt the FeasiBN-VarP configuration described in Sec.~\ref{subsubsec:caseI1}. RefNet and InitNet are implemented as feedforward neural networks with three fully connected layers each.
For the robustness measure, we adopt the exponential robustness in \eqref{eq:pho2} and \eqref{eq:pho3}.

\noindent\textbf{Analysis and Discussion.} In Fig.~\ref{fig:uni-fc}, the 10 trajectories illustrate how FeasiBN-VarP behaves under different sampled initial conditions. All trajectories avoid the obstacles and reach the target regions, demonstrating reliable satisfaction of the STL specification. The method adaptively utilizes the available free space by choosing obstacle-avoiding routes that best match each initial configuration. For example, the trajectories mainly differ in how they pass the region between $Reg_1$ and $Reg_2$: all of them move above $Obs_3$, but some travel closer to the upper boundary of $Obs_3$ while others take slightly higher arcs before turning toward $Reg_2$. These variations arise from different initial positions, leading FeasiBN-VarP to exploit nearby free space differently while still producing valid specification-satisfying paths. Overall, the results illustrate robust and consistent performance across diverse initial conditions.

\begin{figure}
    \centering
\includegraphics[width=7cm]{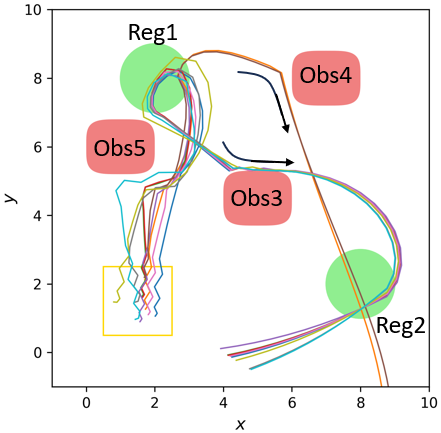}
    \caption{10 trajectories with sampled initial conditions using FeasiBN-VarP. FeasiBN-VarP can flexibly generate trajectories that satisfy the specification by leveraging the available space under different initial conditions.}
    \label{fig:uni-fc}
\end{figure}

The learning curves in Fig.~\ref{subfig:rob-uni} show that the robustness remains strictly positive throughout training, confirming that FeasiBN-VarP consistently produces specification-satisfying trajectories. Compared with Fig.~\ref{fig:ro}, however, the robustness evolution is noticeably more abrupt, with larger fluctuations across iterations. This is likely due to the nonlinear unicycle dynamics and the more complex obstacle-rich environment, which make the optimization landscape less smooth and amplify sensitivity to small parameter updates. The control inputs $u_1(t)$ and $u_2(t)$ remain strictly within their bounds for all trajectories in Figs.~\ref{subfig:input1-uni} and \ref{subfig:input2-uni}, demonstrating that the input constraints are always respected. Nevertheless, both inputs exhibit pronounced oscillations during the initial phase of each trajectory. These oscillations correspond to a warm-up period in which the controller adjusts the heading direction, as the robot must quickly orient itself from diverse initial states before executing the main task. This transient alignment process explains the zigzag behavior observed in $x(t)$ and $\theta(t)$ at the beginning of the trajectories in Figs.~\ref{subfig:x-uni} and \ref{subfig:theta-uni}. The trajectories similarly show mild early-stage zigzag motion as the system resolves these heading inconsistencies. Such transient behaviors are not problematic; they simply reflect the inherent need to reorient the unicycle before moving toward the target regions. After this brief adjustment phase, both the heading and the state trajectories become much more coherent and aligned across runs, indicating stable closed-loop performance.
\begin{figure*}[t]
    \centering
    \begin{subfigure}[t]{0.19\linewidth}
        \centering
        \includegraphics[width=1.0\linewidth]{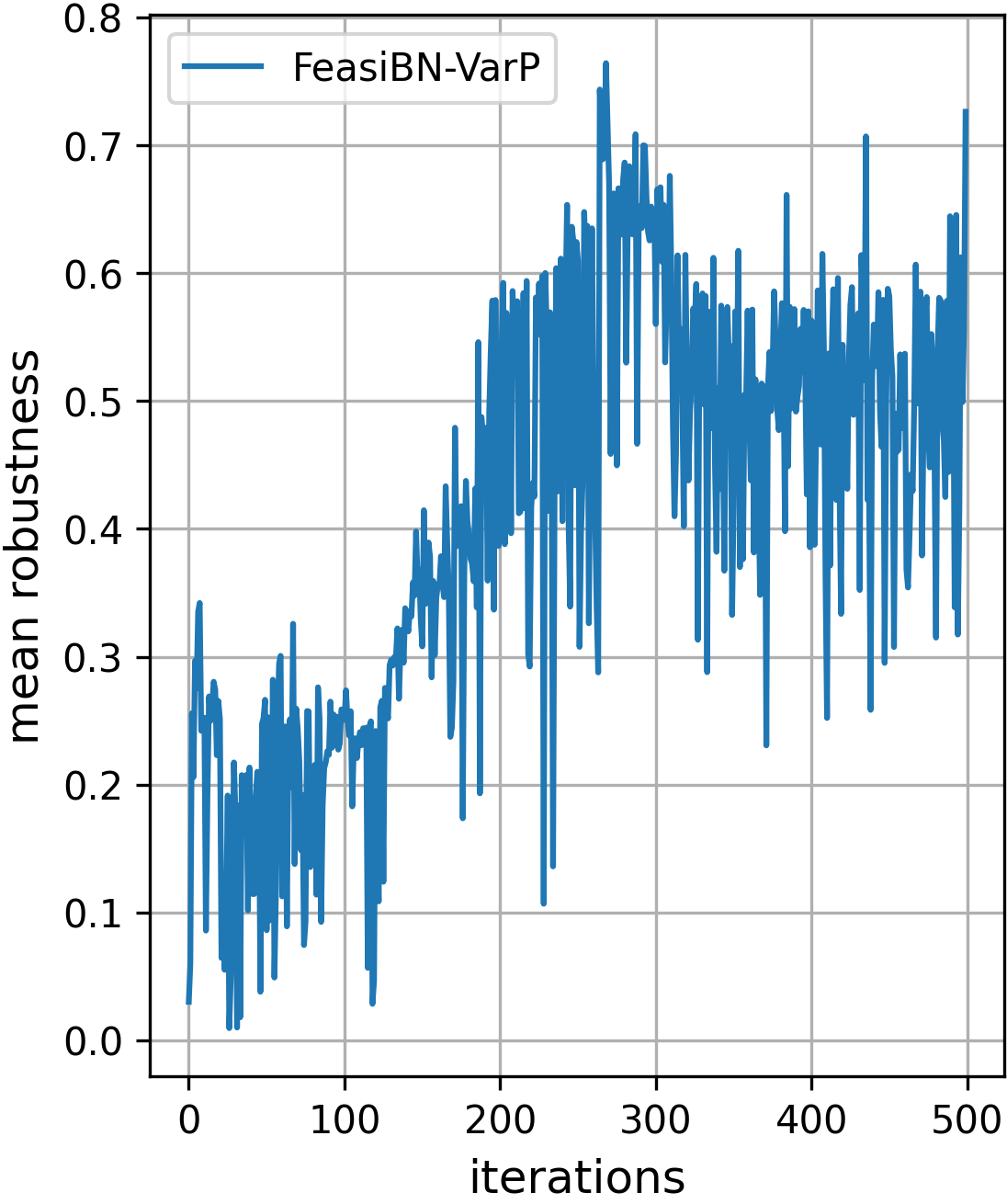}
        \caption{Robustness}
        \label{subfig:rob-uni}
    \end{subfigure}
    \begin{subfigure}[t]{0.195\linewidth}
        \centering
        \includegraphics[width=1.0\linewidth]{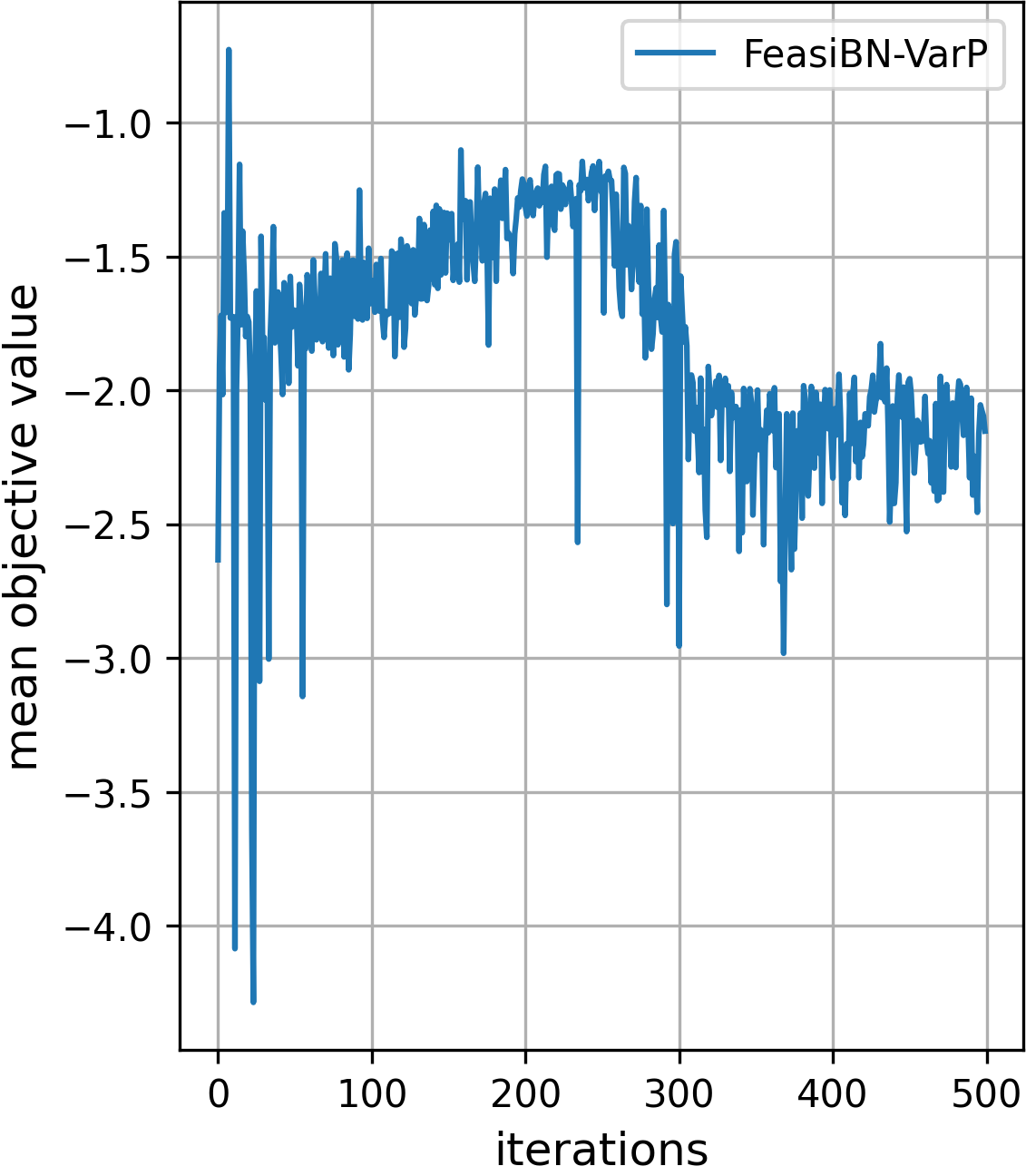}
        \caption{Objective}
        \label{subfig:obj-uni}
    \end{subfigure}  
    \begin{subfigure}[t]{0.29\linewidth}
        \centering
        \includegraphics[width=1.0\linewidth]{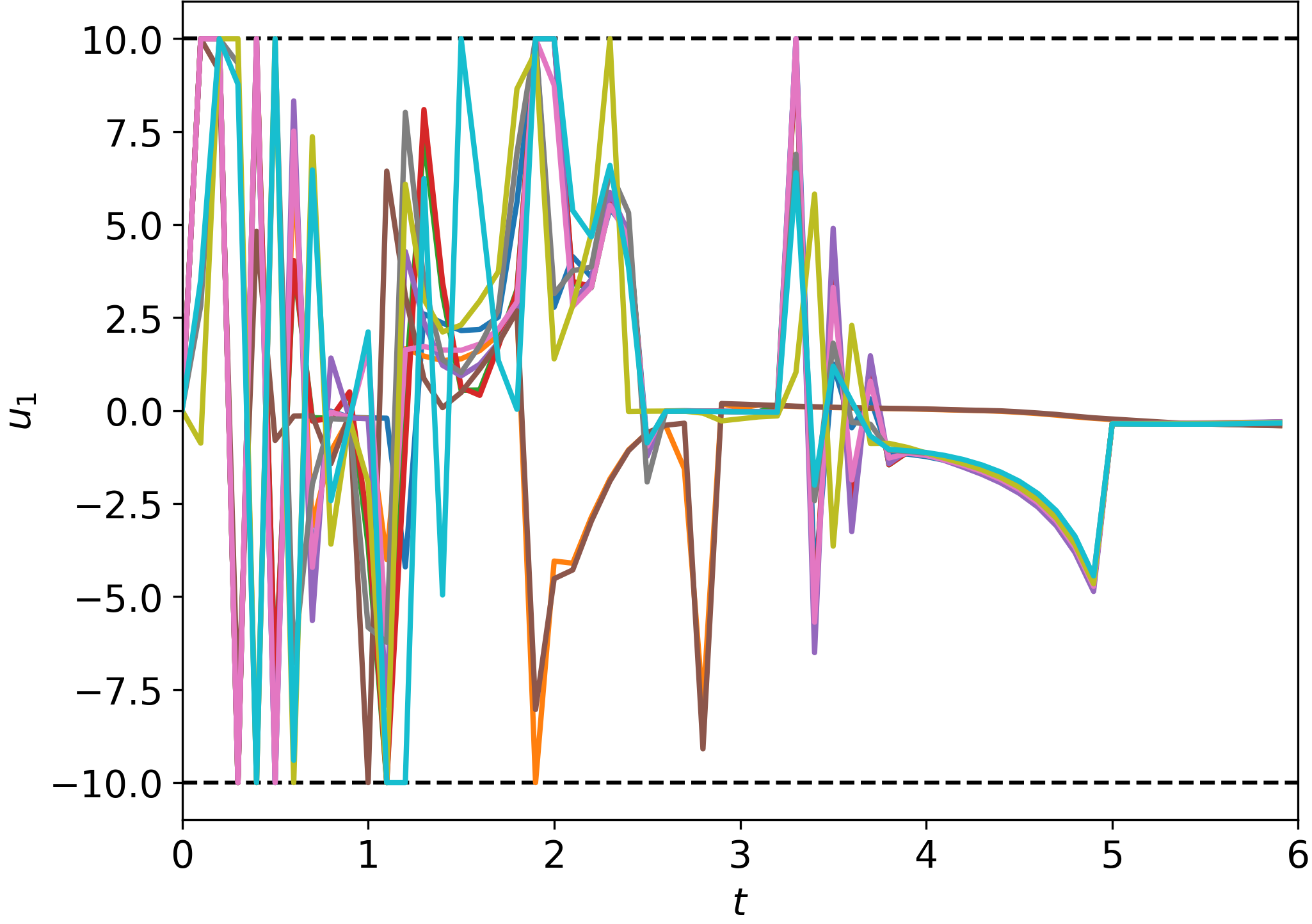}
        \caption{$u_{1}(t)$}
        \label{subfig:input1-uni}
    \end{subfigure}
    \begin{subfigure}[t]{0.29\linewidth}
        \centering
        \includegraphics[width=1.0\linewidth]{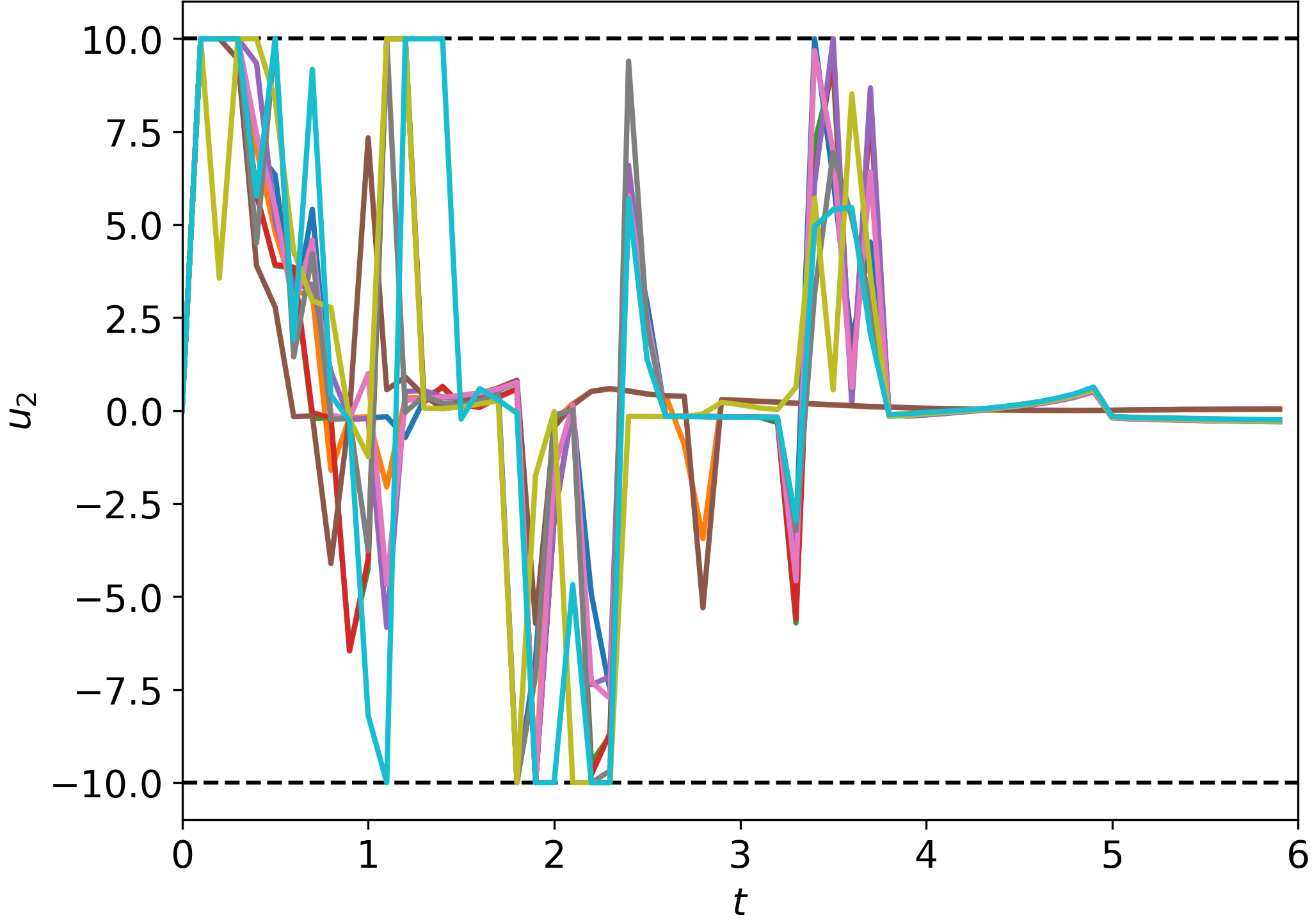}
        \caption{$u_{2}(t)$}
        \label{subfig:input2-uni}
    \end{subfigure}
    \caption{Learning curves and inputs over time for 10 trajectories generated by FeasiBN-VarP. The robustness remains strictly positive and the inputs stay within their bounds, indicating that FeasiBN-VarP robustly satisfies all specifications.}
    \label{fig:uni1}
\end{figure*}

The time-varying multipliers in Fig.~\ref{fig:p-uni} illustrate how BN-FixedP adapts the constraint strength throughout the trajectory. The plotted multiplier profiles correspond to a single representative trajectory selected from the 10 trajectories. By allowing the multipliers to evolve over time, the controller can selectively tighten or relax individual HOCBF constraints when needed, which improves feasibility under tight input limits and enhances satisfaction of the STL specification. A notable feature is that $p_{1,1}$ and $p_{2,1}$ terminate before $t=3$, which results from the deletion rule applied once the first reachability task (entering $Reg_1$) is completed. After the robot reaches $Reg_1$, the corresponding predicate $b_1(x,t)$ is removed, and its associated multipliers are no longer active. This mechanism prevents unnecessary constraint enforcement and allows the controller to focus on the remaining STL requirements.
\begin{figure*}[t]
    \centering
    \begin{subfigure}[t]{0.24\linewidth}
        \centering
        \includegraphics[width=1.0\linewidth]{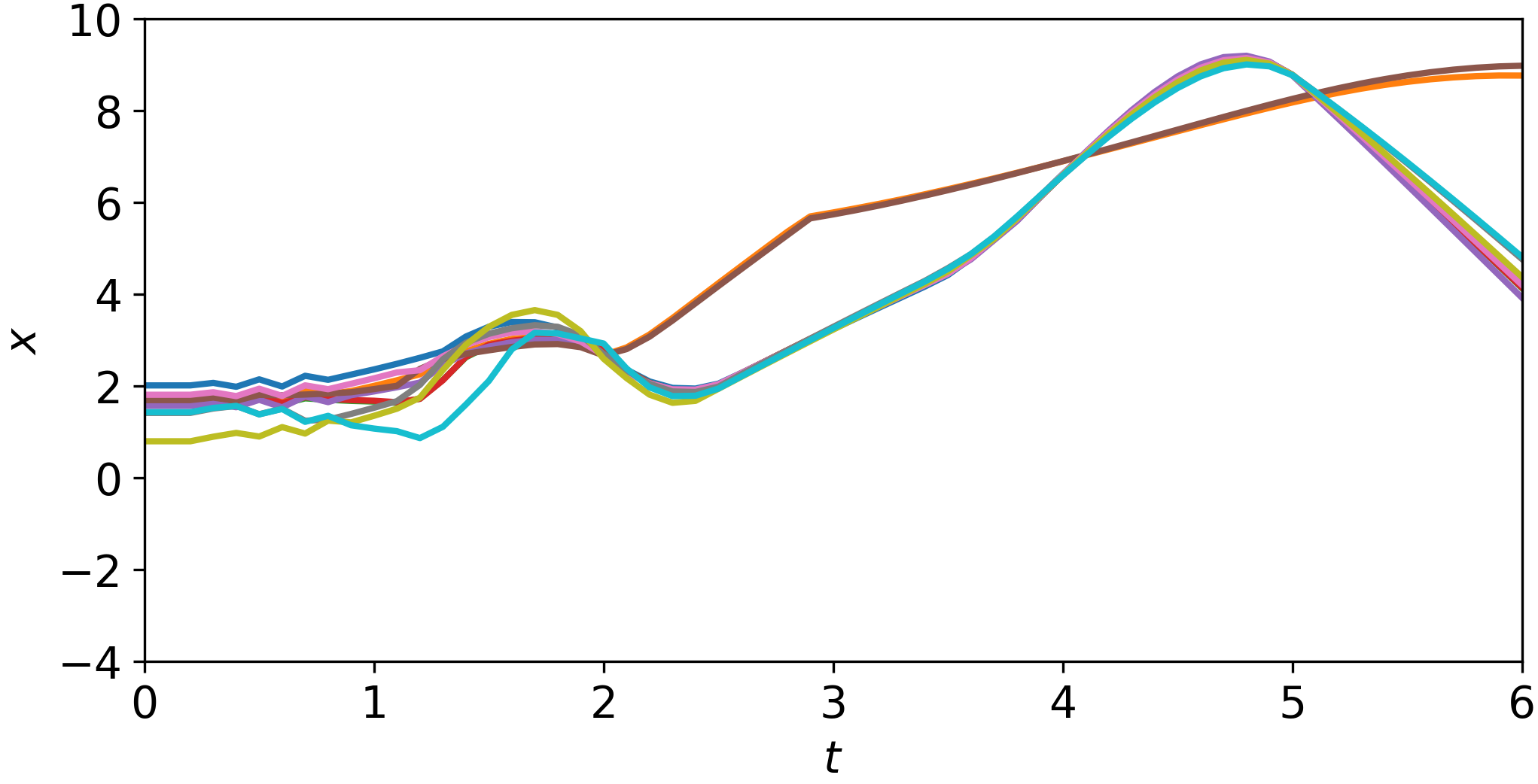}
        \caption{$x(t)$}
        \label{subfig:x-uni}
    \end{subfigure}
    \begin{subfigure}[t]{0.24\linewidth}
        \centering
        \includegraphics[width=1.0\linewidth]{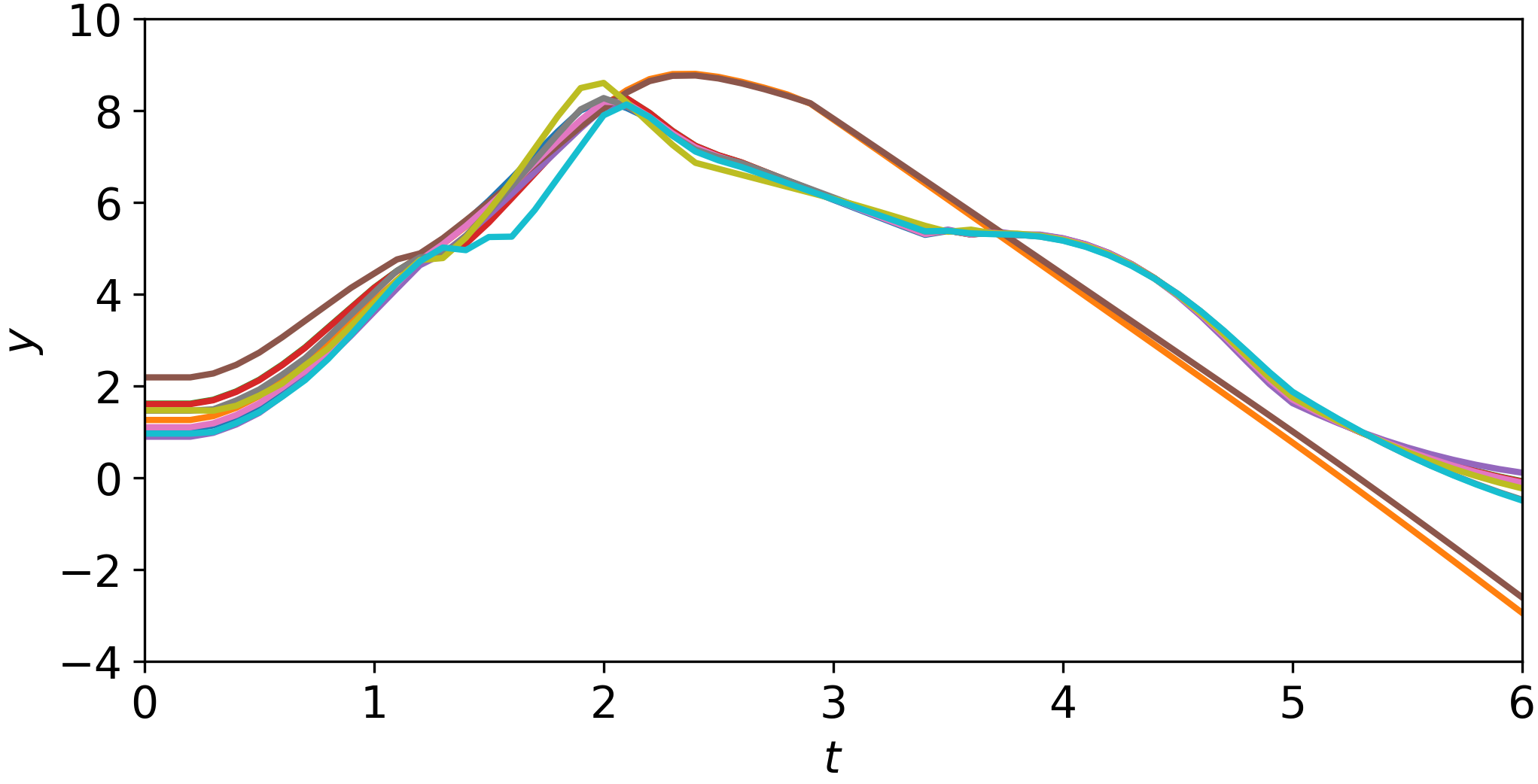}
        \caption{$y(t)$}
        \label{subfig:y-uni}
    \end{subfigure}  
    \begin{subfigure}[t]{0.24\linewidth}
        \centering
        \includegraphics[width=1.0\linewidth]{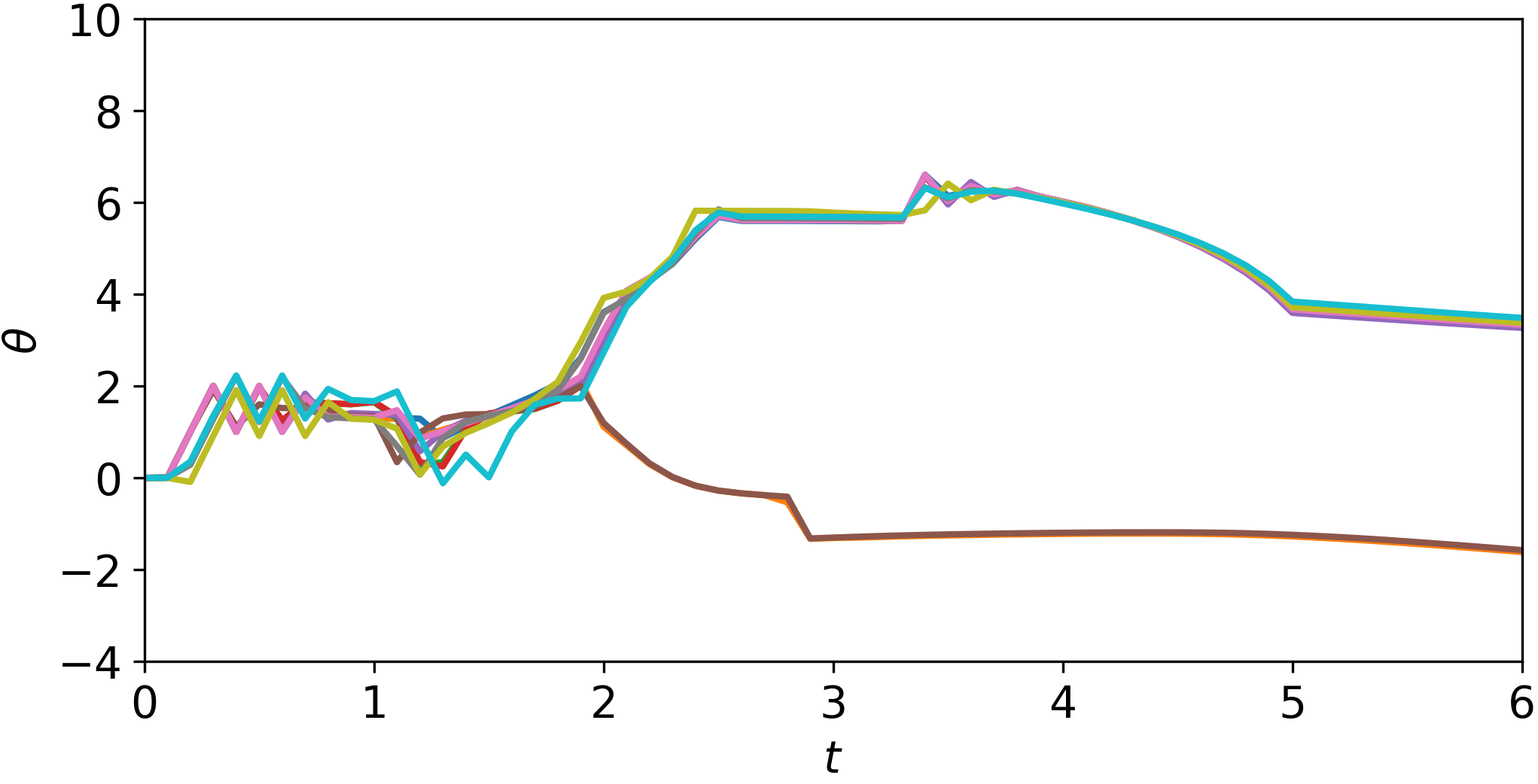}
        \caption{$\theta(t)$}
        \label{subfig:theta-uni}
    \end{subfigure}
    \begin{subfigure}[t]{0.24\linewidth}
        \centering
        \includegraphics[width=1.0\linewidth]{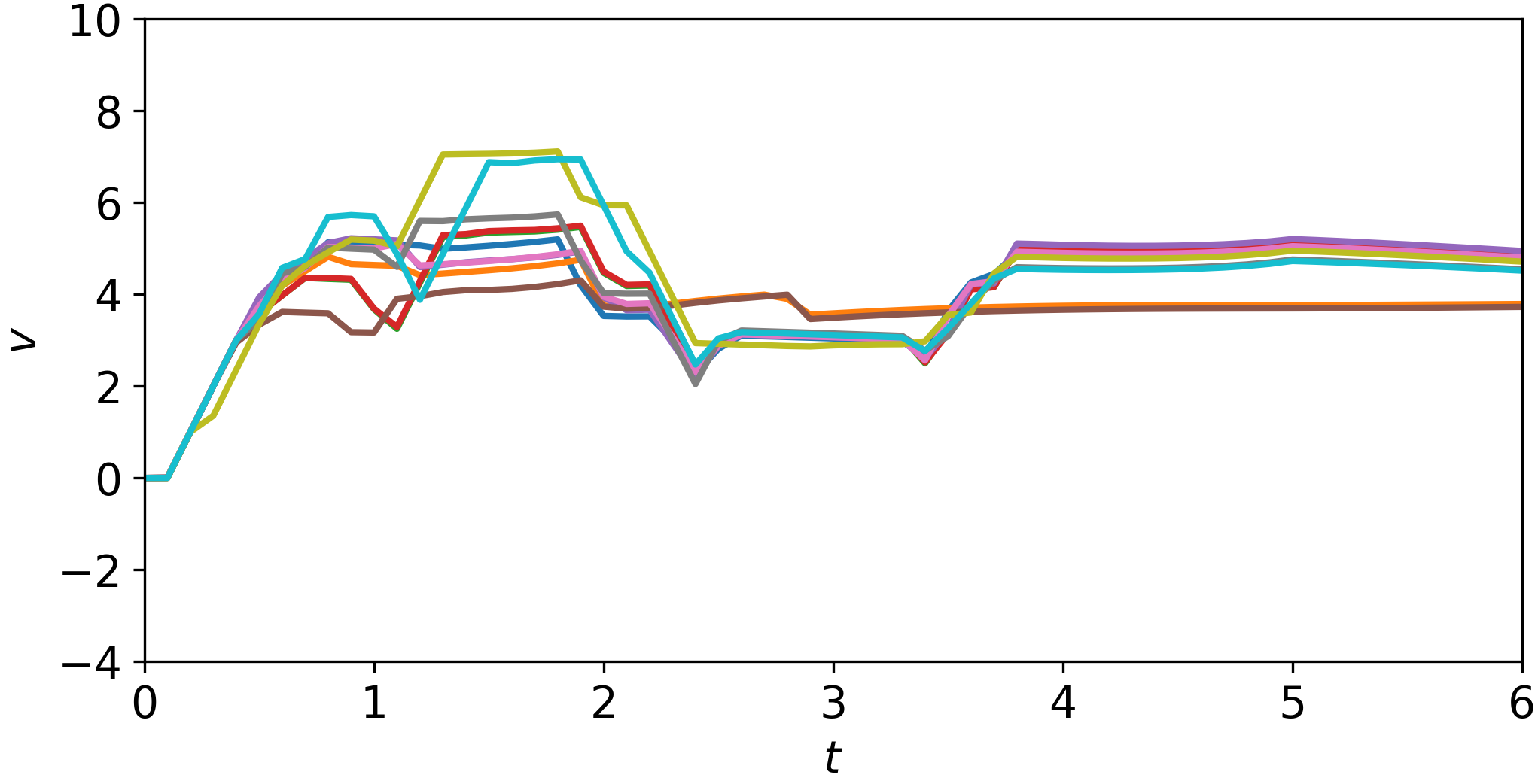}
        \caption{$v(t)$}
        \label{subfig:v-uni}
    \end{subfigure}
    \caption{States over time for 10 trajectories generated by FeasiBN-VarP. The initial zigzag behavior in $\theta(t)$ reflects transient heading adjustments, after which the trajectories become more aligned across runs. }
    \label{fig:uni3}
\end{figure*}

\begin{figure}
    \centering
\includegraphics[width=8cm]{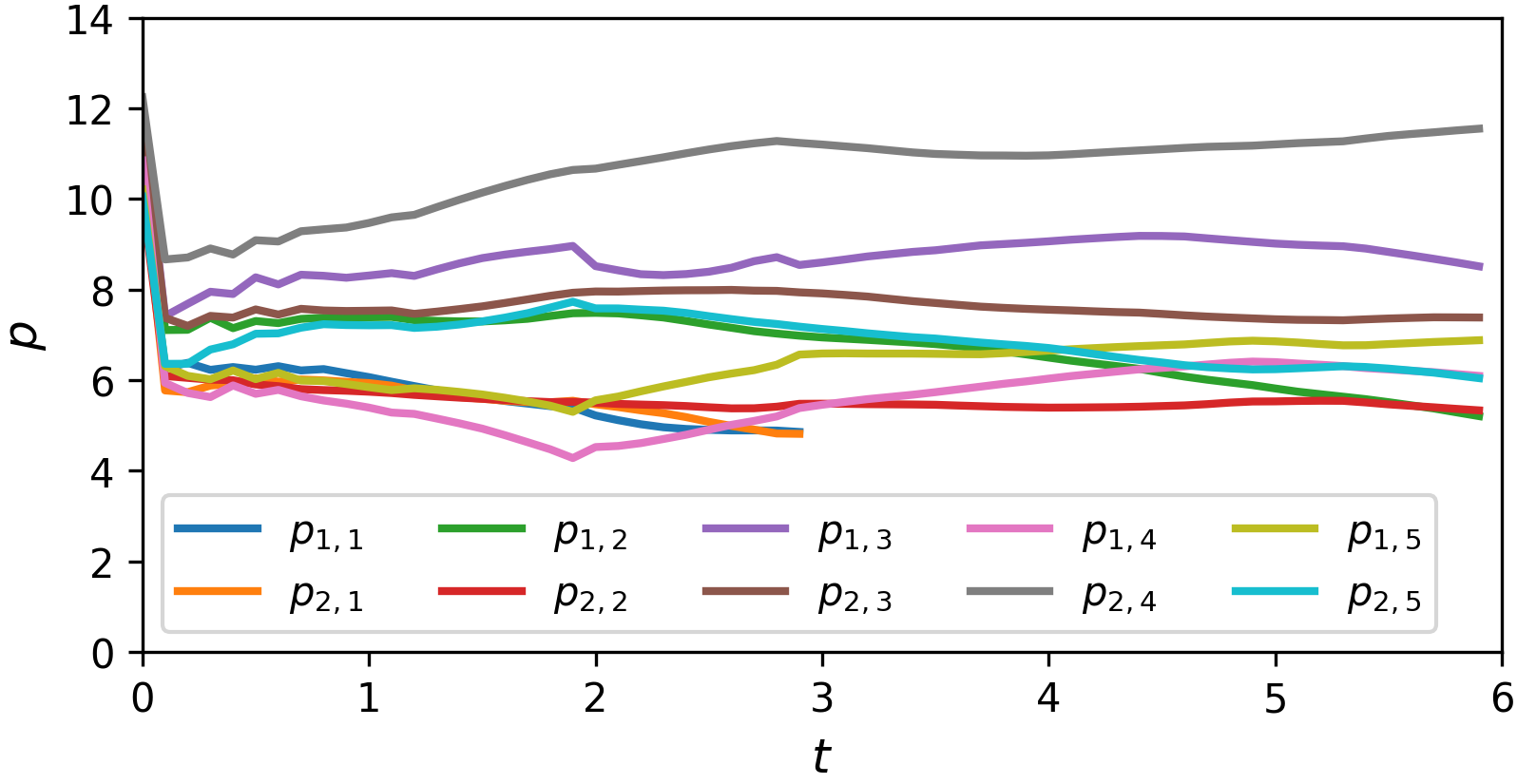}
    \caption{Multipliers over time for BN-FixedP. $p_{1,1}$ and $p_{2,1}$ terminate before $t=3$ due to the deletion rules described in Sec.~\ref{subsec:hocbf-stl}. 
Once the robot enters $Reg_1$, the corresponding $b_{1}(\mathbf{x},t)$ is removed.}
    \label{fig:p-uni}
\end{figure}

\section{Conclusion and Future Work}
\label{sec:conclusion}
In this paper, we proposed a feasibility-aware STL-BarrierNet framework for learning controllers that satisfy STL specifications while preserving the feasibility of the underlying optimization problem. By augmenting BarrierNet with time-varying HOCBF multipliers and a unified robustness metric that accounts for STL satisfaction, QP feasibility, and control bounds, the proposed method provides a systematic way to automatically tune constraint-related hyperparameters over time and across initial conditions. The resulting time-varying parametrization reduces conservativeness, avoids infeasible QPs under tight input limits, and improves closed-loop robustness in nonlinear and cluttered environments.

Several directions remain for future work. First, because the controller operates in discrete time, inter-sampling effects may still lead to constraint violations between updates, calling for approaches that explicitly account for sampled-data behavior. Second, extending the framework to multi-agent settings introduces challenges such as decentralized STL satisfaction, coordination among agents, and avoidance of inter-agent conflicts. Finally, real-world deployment requires handling disturbances and model mismatch, making robustness to uncertainty an important open problem.
\bibliographystyle{IEEEtran}
\bibliography{references.bib}

\vspace{10pt}

\setlength{\intextsep}{2pt}
\begin{wrapfigure}{l}{25mm} 
\includegraphics[width=1in,height=1.25in,clip,keepaspectratio]{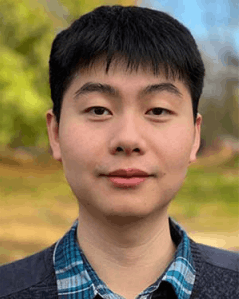}
\end{wrapfigure}\par
{\small \textbf{Wenliang Liu} (Member, IEEE) received the
B.Sc. degree in instrumentation science from
Beihang University, Beijing, China, in 2019, and
the M.Sc. and Ph.D. degrees in mechanical engineering from Boston University, Boston, MA,
USA, in 2023 and 2024.
His research interests include control theory,
formal methods, and machine learning, with
robotics applications. \par}

\vspace{10pt}

\setlength{\intextsep}{2pt}
\begin{wrapfigure}{l}{25mm} 
 \includegraphics[width=1in,height=1.25in,clip,keepaspectratio]{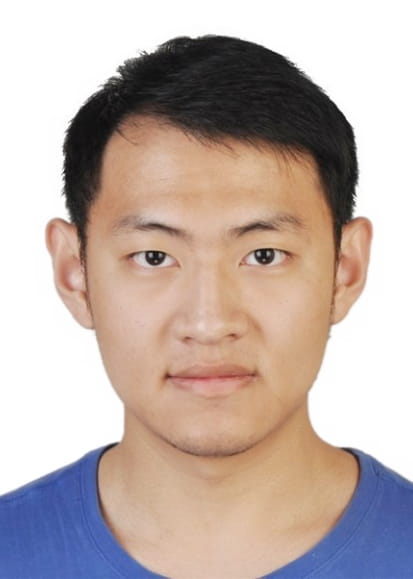}
\end{wrapfigure}\par
{\small \textbf{Shuo Liu} (Student Member, IEEE) received his B.Eng. degree in Mechanical Engineering from Chongqing University, Chongqing, China, in 2018, and his M.Sc. degree in Mechanical Engineering from Columbia University, New York, NY, USA, in 2020. He is currently a Ph.D. candidate in Mechanical Engineering at Boston University, Boston, USA and his research interests include optimization, nonlinear control, deep learning and robotics. \par}

\vspace{10pt}

\setlength{\intextsep}{2pt}
\begin{wrapfigure}{l}{25mm} 
 \includegraphics[width=1in,height=1.25in,clip,keepaspectratio]{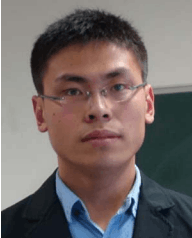}
\end{wrapfigure}\par
{\small \textbf{Wei Xiao} (Member, IEEE) received the B.Sc. degree
in mechanical engineering and automation from the University of Science and Technology Beijing, Beijing, China, the M.Sc. degree in robotics from the Chinese Academy of Sciences (Institute of Automation), Beijing, China, and the Ph.D. degree in systems
engineering from Boston University, Boston, MA, USA, in 2013, 2016, and 2021, respectively.

He is currently an Assistant Professor at Worcester Polytechnic Institute (WPI), Worcester, MA, USA. His current research interests include control theory and machine learning, with particular emphasis on robotics and traffic control. \par}

\vspace{10pt}

\setlength{\intextsep}{2pt}
\begin{wrapfigure}{l}{25mm} 
\includegraphics[width=1in,height=1.25in,clip,keepaspectratio]{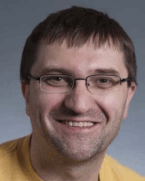}
\end{wrapfigure}\par
{\small \textbf{Calin Belta} (Fellow, IEEE) received the B.Sc.
and M.Sc. degrees from the Technical University of Iasi, Iasi, Romania, in 1995 and 1997, respectively, and the M.Sc. and Ph.D. degrees from the University of Pennsylvania, Philadelphia, PA, USA, in 2001 and 2003. 

He is currently the Brendan Iribe Endowed Professor of Electrical and Computer Engineering and Computer Science at the University of Maryland, College Park, MD, USA. His research focuses on control theory and formal methods, with particular emphasis on hybrid and cyber-physical systems, synthesis and verification, and applications in robotics and biology.

\end{document}